\documentclass[11pt]{article}
\pdfoutput=1

\usepackage{amsmath,amssymb,amsfonts,amsthm,bm}
\usepackage{fullpage}
\usepackage[backref=page]{hyperref}



\def\figref#1{figure~\ref{#1}}

\def\secref#1{section~\ref{#1}}



\def\eqref#1{equation~\ref{#1}}





\def\algref#1{algorithm~\ref{#1}}



\def\1{\bm{1}}










\DeclareMathAlphabet{\mathsfit}{\encodingdefault}{\sfdefault}{m}{sl}
\SetMathAlphabet{\mathsfit}{bold}{\encodingdefault}{\sfdefault}{bx}{n}













\usepackage{hyperref}
\usepackage{url}
\usepackage{nicefrac}
\usepackage{microtype}
\usepackage{xcolor}
\usepackage{algorithm}
\usepackage[noend]{algpseudocode}
\usepackage{framed}
\usepackage{amsthm}
\usepackage{thmtools}
\usepackage{thm-restate}
\usepackage{graphicx}
\usepackage{color}
\usepackage{wrapfig}
\usepackage{subcaption}
\usepackage{multirow}
\usepackage{multicol}
\usepackage{xspace}
\usepackage{booktabs}


\newcommand{\Ex}[1]{\ensuremath{\mathbb{E}\left[#1\right]}}

\renewcommand{\O}[1]{\ensuremath{\mathcal{O}\left(#1\right)}}

\newcommand{\flr}[1]{{\left\lfloor#1\right\rfloor}}
\DeclareMathOperator*{\depth}{Depth}

\newcommand{\namedref}[2]{\hyperref[#2]{#1~\ref*{#2}}}
\newcommand{\seclab}[1]{\label{sec:#1}}
\renewcommand{\secref}[1]{\namedref{Section}{sec:#1}}
\newcommand{\alglab}[1]{\label{alg:#1}}
\renewcommand{\algref}[1]{\namedref{Algorithm}{alg:#1}}
\newcommand{\thmlab}[1]{\label{thm:#1}}
\newcommand{\thmref}[1]{\namedref{Theorem}{thm:#1}}
\newcommand{\lemlab}[1]{\label{lem:#1}}
\newcommand{\lemref}[1]{\namedref{Lemma}{lem:#1}}

\newcommand{\corlab}[1]{\label{cor:#1}}

\newcommand{\applab}[1]{\label{app:#1}}
\newcommand{\appref}[1]{\namedref{Appendix}{app:#1}}
\newcommand{\figlab}[1]{\label{fig:#1}}
\renewcommand{\figref}[1]{\namedref{Figure}{fig:#1}}

\newtheorem{theorem}{Theorem}[section]

\newtheorem{lemma}[theorem]{Lemma}

\def \calL    {{\mathcal{L}}}


\newcommand*\samethanks[1][\value{footnote}]{\footnotemark[#1]}

\title{Learning-Augmented Search Data Structures}

\author{\hspace{1in}Chunkai Fu\thanks{Equal contribution, Texas A\&M University, \texttt{chunkai369@gmail.com}, \texttt{bgn@tamu.edu}, \texttt{j.seo0917@gmail.com}, \texttt{rzesch@tamu.edu}} 
\and Brandon G. Nguyen\samethanks 
\and Jung Hoon Seo\samethanks\hspace{1in}
\and Ryan Zesch\samethanks 
\and Samson Zhou\thanks{Texas A\&M University, \texttt{samsonzhou@gmail.com}}
}

\begin{document}

\maketitle

\begin{abstract}
We study the integration of machine learning advice to improve upon traditional data structure designed for efficient search queries. Although there has been recent effort in improving the performance of binary search trees using machine learning advice, e.g., Lin et. al.  (ICML 2022), the resulting constructions nevertheless suffer from inherent weaknesses of binary search trees, such as complexity of maintaining balance across multiple updates and the inability to handle partially-ordered or high-dimensional datasets. For these reasons, we focus on skip lists and KD trees in this work. Given access to a possibly erroneous oracle that outputs estimated fractional frequencies for search queries on a set of items, we construct skip lists and KD trees that provably provides the optimal expected search time, within nearly a factor of two. In fact, our learning-augmented skip lists and KD trees are still optimal up to a constant factor, even if the oracle is only accurate within a constant factor. We also demonstrate robustness by showing that our data structures achieves an expected search time that is within a constant factor of an oblivious skip list/KD tree construction even when the predictions are arbitrarily incorrect. Finally, we empirically show that our learning-augmented search data structures outperforms their corresponding traditional analogs on both synthetic and real-world datasets.
\end{abstract}

\section{Introduction}
\seclab{sec:intro}
As efficient data management has become increasingly crucial, the integration of machine learning (ML) has significantly improved the design and performance of traditional algorithms for many big data applications. 
\cite{KraskaBCDP18} first showed that ML could be incorporated to create data structures that support faster look-up operations while also saving an order-of-magnitude of memory compared to optimized data structures oblivious to such ML heuristics. 
Subsequently, \emph{learning-augmented algorithms}~\cite{MitzenmacherV20} have been shown to achieve provable worst-case guarantees beyond the limitations of oblivious algorithms for a wide range of settings. 
For example, ML predictions have been utilized to achieve more efficient data structures~\cite{Mitzenmacher18,LinLW22}, algorithms with faster runtimes~\cite{DinitzILMV21,ChenSVZ22,DaviesMVW23}, mechanisms with better accuracy-privacy tradeoffs~\cite{Khodak0DV23}, online algorithms with better performance than information-theoretic limits~\cite{PurohitSK18,GollapudiP19,LattanziLMV20,WangLW20,WeiZ20,BamasMS20,AnandGP20,AlmanzaCLPR21,AnandGKP21,ImKQP21,LykourisV21,AamandCI22,Anand0KP22,AzarPT22,GrigorescuLSSZ22,KhodakBTV22,JiangLLTZ22,ScullyGM22,AntoniadisGKK23,AntoniadisCEPS23,ShinLLA23}, streaming algorithms with better accuracy-space tradeoffs~\cite{HsuIKV19,IndykVY19,JiangLLRW20,ChenIW22,ChenEILNRSWWZ22,LLLVW23}, and polynomial-time algorithms beyond hardness-of-approximation limits, e.g., NP-hardness~\cite{ErgunFSWZ22,NguyenCN23,CLRSZ24}. 

In this paper, we focus on the consolidation of ML advice to improve data structures for the fundamental problem of searching for elements among a large dataset. 
For this purpose, tree-based structures stand out as a popular choice among other structures, particularly for their logarithmic average performance. 
However, these structures often have weaknesses for specific use cases that make them sub-optimal for various applications, which we now discuss. 

\textbf{Skip lists.}
One weakness of tree-based structures is that they need to be balanced for optimal performance, and thus their effectiveness is often closely tied to the order of element insertions. 
For example, the motivation of \cite{LinLW22} to study learning-augmented binary search trees noted that although previous results already characterized the statically optimal tree if the underlying distribution is known~\cite{Knuth71,Mehlhorn77}, these methods do not handle dynamic insertion operations. 
In contrast, skip lists, introduced by ~\cite{pugh1990concurrent}, maintain balance probabilistically, offering a simpler implementation while delivering substantial speed enhancements \cite{skiplistPugh}. 
Skip lists are generally built iteratively in levels. 
The bottom levels of the skip list is an ordinary-linked list, in which the items of the dataset are organized in order. 
Each higher level serves to accelerate the search for the lower levels, by storing only a subset of the items in the lower levels, also as an ordered link list. 
Traditional skip lists are built by promoting each item in a level to a higher level randomly with a fixed probability $p\in(0,1)$. 

Querying for a target element begins at the first element in the highest level and continues by searching along the linked list in the highest level until finding an item whose value is at least that of the target element. 
If the found item is greater than the target element, the process is repeated after returning to the previous element and dropping to a lower list. 
It can be shown that the expected number of steps in the search is $\O{\frac{1}{p}\log_{1/p}n}$ so that $p$ serves as a trade-off parameter between the search time and the storage costs. 

In many modern applications, skip lists are used because of their excellent search runtime and their space efficiency. 
Skip lists are often preferred over binary search trees due to their simplicity of implementation, their support for efficient range query, and their amenability to concurrent processes \cite{shavit2000skiplist, linden2013skiplist}, high efficiency for dynamic datasets \cite{ge2008skip, pittard2010simplified}, network routing \cite{hu2003efficient, avin2020working}, and real-time analytics \cite{basin2020kiwi, zhou2023febench}. 
Thus while binary search trees have been a long-standing choice for querying ordered elements, skip lists offer a simpler, more efficient, and in some cases, necessary alternative.

\textbf{KD trees.}
Another weakness of tree-based data structures is that they generally require the data to obey an absolute ordering. 
However, in many cases, e.g., geometric applications or multidimensional data, the input points can only be \emph{partially} ordered. 
Thus in 1975, KD trees, which stand for $k$-dimensional trees, were proposed as a more efficient alternative to binary search trees for searching in higher-dimensional spaces in procedures such as nearest neighbor search or ray tracing for applications in computational geometry or computer vision.
A KD tree works by picking a data point and splitting along some spatial dimension to partition the space. 
This process is repeated until every data point is included in the tree, creating a hierarchical tree structure that enables quick access to specific data points or ranges within the dataset.

\textbf{Skewed distributions.}
Traditional search data structures treat each element equally when promoting the elements to higher levels. 
This balancing behavior facilitates good performance in expectation when a query to the skip list is equally likely to be any dataset element. 
On the other hand, this behavior may limit the performance of the data structure when the incoming queries are from an unbalanced probability distribution. 

Real-world applications can feature a diverse range of distribution patterns. 
One particularly common distribution is the Zipfian distribution, which is a probability distribution that is a discrete counterpart of the continuous Pareto distribution, and is characterized by the principle that a small number of events occur very frequently, while a large number of events occur rarely.

In a Zipfian distribution, the frequency of an event \( N(k; \alpha, N) \) is inversely proportional to its rank \( k \), raised to the power of \( \alpha \) (where \( \alpha \) is a positive parameter), in a dataset of \( N \) elements. 
In particular, we have $N(k; \alpha, N) = \frac{1/k^\alpha}{\sum_{n=1}^{N} (1/n^\alpha)}$. 
The value of \( \alpha \) determines the steepness of the distribution so that a smaller \( \alpha \) value, i.e., closer to $0$, makes it more uniform, while a larger \( \alpha \) increases skewness.

Zipfian distributions provide a simple means for understanding phenomena in various fields involving rank and frequency, ranging from linguistics to economics, and from urban studies to information technology. 
Indeed, they appear in many applications such as word frequencies in natural language~\cite{WangW16,BlockiHZ18}, city populations~\cite{gabaix1999zipf,vitanov2015test}, biological cellular distributions~\cite{lazzardi2023emergent}, income distribution~\cite{sandmo2015principal}, etc.

Unfortunately, although Zipfian distributions are common in practice, their properties are generally not leveraged by traditional search data structures, which are oblivious to any information about the query distributions. 
To improve this performance bottleneck, we propose the augmentation of traditional skip lists and KD trees with  ``learned'' advice, which (possibly erroneously) informs the data structure in advance about some useful statistics on the incoming queries. 
Although we model the data structure as having oracle access to the advice, in practice, such advice can often easily be acquired from machine learning heuristics trained for these statistics. 

\subsection{Our Contributions}
\seclab{sec:contributions}
We propose the incorporation of ML advice into the design of skip lists and KD trees to improve upon traditional data structure design. 
For ease of discussion in this section, we assume the items that may appear either in the data set or the query set can be associated with an integer in $[N]:=\{1,\ldots,N\}$, which also in the case of high-dimensional data, may be associated with a $k$-dimensional point. 
We allow the algorithm access to a possibly erroneous oracle that, for each $i\in[N]$, outputs a quantity $p_i$, which should be interpreted as an estimation for the proportion of search queries that will be made to the data structure for the item $i$. 
Hence, for each $i\in[N]$, we assume that $p_i\in[0,1]$ and $p_1+\ldots+p_n=1$. 
Note that these constraints can be easily enforced upon the oracle as a pre-processing step prior to designing the skip list or KD tree data structure. 
We also assume that the oracle is readily accessible so that there is no cost for each interaction with the oracle. 
Consequently, we assume the algorithm has access to the predicted frequency $p_i$ by the oracle for all $i\in[N]$. 
On the other hand, we view a sequence of queries as defining a probability distribution over the set of queries, so that $f_i$ is the true proportion of queries to item $i$, for each $i\in[N]$. 
Although $f_i$ is the ground truth, our algorithms only have access to $p_i$, which may or may not accurately capture $f_i$. 

\textbf{Consistency for accurate oracles.}
We introduce construction for a learning-augmented skip list and KD trees, which gives expected search time at most $2C+2\sum_{i=1}^n f_i\cdot\min\left(\log\frac{1}{p_i},\log n\right)$, for some constant $C>0$. 
On the other hand, we show that any skip list or KD tree construction requires an expected search time of at least the entropy $H(f)$ of the probability vector $f$. 
We recall that the entropy $H(f)$ is defined as $H(f)=\sum_{i=1}^n f_i\cdot\log\frac{1}{f_i}$. 

Thus, our results indicate that within nearly a factor of two, our learning-augmented search data structures are optimal for any distribution of queries, provided that the oracle is perfectly accurate. 
Moreover, even if the oracle on each estimated probability $p_i$ is only accurate up to a constant factor, then our learning-augmented search data structures are still optimal, up to a constant factor. 

\textbf{Implications to Zipfian distributions.}
We describe the implications of our results to queries that follow a Zipfian distribution; analogous results hold for other skewed distributions, e.g., the geometric distribution. 
It is known that if the $r$-th most common query/item has proportion $\frac{z}{r^s}$ for some $s>1$, then the entropy of the corresponding probability vector is a constant. 
Consequently, if the set of queries follows a Zipfian distribution and the oracle is approximately accurate within a constant factor, then the expected search time for an item by our search data structures is only a constant, independent of the total number of items, i.e., $\O{1}$. 
By comparison, a traditional skip list or KD tree will have expected search time $\O{\log n}$. 

\textbf{Robustness to erroneous oracles.}
So far, our discussions have centered around an oracle that either produces estimated probabilities $p_i$ such that $p_i=f_i$ or $p_i$ is within a constant factor of $f_i$. 
However, in some cases, the machine learning algorithm serving as the oracle can be completely wrong. 
In particular, a model that is trained on a dataset before a distribution change, e.g., seasonal trends or other temporal shifts, can produce wildly inaccurate predictions. 
We show that our search data structures are robust to erroneous oracles. 
Specifically, we show that our algorithms achieve an expected search time that is within a constant factor of an oblivious skip list or KD tree construction when the predictions are incorrect. 
Therefore, our data structure achieves both consistency, i.e., good algorithmic performance when the oracle is accurate, and robustness, i.e., standard algorithmic performance when the oracle is inaccurate.

\textbf{Empirical evaluations.}
Finally, we analyze our learning-augmented search data structures list on both synthetic and real-world datasets. 
Firstly, we compare the performance of traditional skip lists with our learning-augmented skip lists on synthetically generated data following Zipfian distributions with various tail parameters. 
The dataset is created using four distinct $\alpha$ values ranging from 1.01 to 2, along with a uniform dataset. During the assessment, we query a specified number of $n$ items selectively chosen based on their frequency weights. 
Our results match our theory, showing that learning-augmented skip lists have faster query times, with an average speed-up factor ranging from 1.33 up to 7.76, depending on the different skewness parameters. 

We then consider various datasets for internet traffic data, collected by AOL and by CAIDA, observed over various durations. 
For each dataset, we split the overall observation time into an early period, which serves as the training set for the oracle, and a later period, which serves as the query set for the skip list. 
The oracle trained using the IP addresses in the early periods outputs the probability of the appearance of a given node, and then the position of each node is determined. 

Our learning-augmented skip list outperforms traditional skip lists with an average speed-up factor of 1.45 for the AOL dataset and 1.63 for the CAIDA dataset. 
Moreover, the insertion time of our learning-augmented skip list is comparable with that of traditional skip lists on both synthetic and real-world datasets. 
We also observe that our history-based oracle demonstrates good robustness against temporal change, with little shift in the dominant element set. The adopted datasets show that the set of the top frequent elements does not change much across the time intervals in the datasets used herein. 

We similarly perform evaluations on our learning-augmented KD tree data structure. We evaluate our KD tree data structure on Zipfian distributions with various tail parameters, and provide a heatmap of average lookup times for elements. 
We find that for a large variety of Zipfian parameters, ourt method is able to provides large improvements over traditional KD trees. 
We perform a similar experiment under Zipfian distributions with added noise, and find our data structure still provides considerable improvements in query time. 

We additionally evaluate our method on point cloud samples taken from a 3D model. We bin these samples in space, and create our learning-augmented KD tree on these binned samples. When querying this tree with new binned point samples, we find a decrease in average query time as compared to a traditional KD tree under the same conditions. In addition to this experiment, we provide results on real world datasets of n-grams and neuron activation, and similarly find improvements over traditional KD trees.

\textbf{Concurrent and independent work.}
We mention that concurrent and independent of our work, \cite{ZeynaliKH24} used similar techniques to achieve the same guarantees on the performance of learning-augmented skip lists that are robust to erroneous predictions. 
However, they do not show optimality for their learning-augmented skip lists and arguably perform less exhaustive empirical evaluations. 
They also do not consider KD trees at all, which forms a significant portion of our contribution, both theoretically and empirically. 

\textbf{Comparison to \cite{LinLW22}.}
Our work was largely inspired by \cite{LinLW22}, who observed that classical literature characterizing statically optimal binary search trees~\cite{Knuth71,Mehlhorn77} no longer apply in the dynamic setting, as elements arrive iteratively over time. 
Thus, they designed the construction of dynamic learning-augmented binary search trees (BSTs). 
Their analysis for the expected search time utilized the notion of pivots within their trees and thus were somewhat specialized to BSTs. 
Therefore, \cite{LinLW22} explicitly listed skip trees and advanced tree data structures as interesting open directions. 
Qualitatively, our results are similar to \cite{LinLW22}, as are those of \cite{ZeynaliKH24}. 
This is not quite altogether surprising because the main difference between these data structures is not necessarily the search time, but the either the ease of construction in the setting of skip lists, or the ability to handle multi-dimensional data in the setting of KD trees. 

\section{Learning-Augmented Skip Lists}
\seclab{sec:skip:lists}
In this section, we describe our construction for a learning-augmented skip list and show various consistency properties of the data structure. 
In particular, we show that up to a factor of two, our algorithm is optimal, given a perfect oracle. 
More realistically, if the oracle provides a constant-factor approximation to the probabilities of each element, our algorithm is still optimal up to a constant factor. 

We first describe our learning-augmented skip list, which utilizes predictions $p_i$ for each item $i\in[n]$, from an oracle. 
Similar to a traditional skip list, the bottom level of our skip list is an ordinary-linked list that contains the sorted items of the dataset. 
As before, the purpose of each higher level is to accelerate the search for an item, but the process for promoting an item from a lower level to a higher level now utilizes the predictions. 
Whereas traditional skip lists promote each item in a level to a higher level randomly with a fixed probability $p\in(0,1)$, we automatically promote the item $i$ to a level $\ell$ if its predicted frequency $p_i$ satisfies $p_i\ge\frac{2^\ell-1}{n}$. 
Otherwise, we promote the item with probability $\frac{1}{2}$. 
This adaptation ensures that items with high predicted query frequencies will be promoted to higher levels of the skip list and thus be more likely to be found quickly. 

It is worth addressing a number of other natural approaches and their shortcomings. 
For example, one natural approach would be to use the ``median'' frequency across the items as a threshold to promote elements to higher levels. 
However, this promotion scheme is not ideal because computing the median frequency at each time would either require an additional data structure for fast update time or increase the insertion time. 
A potential approach to resolve this issue would be to use a separate threshold probability is set for each level so that only nodes with a probability higher than the corresponding threshold are promoted to the next level. 
However, this approach seems to result in an unnecessarily large number of created levels if some item appears with small probability, e.g., $\frac{1}{2^n}$.
We can thus first filters out the low-frequency elements and place them remain on the bottom level of the skip list and then proceed with using a separate threshold probability for each level. 
Unfortunately, this approach utterly fails to even match the search time performance of oblivious skip lists when the distribution is uniform, because all items will be in the same level, resulting in an expected search time of $\Omega(n)$. 
Hence, we ensure that each element still has a chance of being promoted to higher levels even when their probability is less than the corresponding threshold. 

We again emphasize that due to the dynamic nature of the updates, existing results on statically optimal binary search trees~\cite{Knuth71,Mehlhorn77} do not apply, as observed by \cite{LinLW22}. 
We give the full details in \algref{alg:aug:skip:list}. 
For the sake of presentation, we focus on the setting where the queries are made to items in the dataset. 
However, we remark that our results generalize to the setting where queries can be made on the search space rather than the items in the dataset, provided the oracle is also appropriately adjusted to estimate the query distribution, using the approach we describe in \secref{sec:kd:trees}.

\begin{algorithm}[!htb]
\caption{Learning-augmented skip list}
\alglab{alg:aug:skip:list}
\begin{algorithmic}[1]
\Require{Predicted frequencies $p_1,\ldots,p_n$ for each item in $[n]$}
\Ensure{Learning-augmented skip list}
\State{Insert all items at level $0$}
\For{each $\ell$}
\If{there are no items at level $\ell-1$}
\State{\Return the skip list}
\Else
\For{each $i\in[n]$}
\If{predicted frequency $p_i\ge\frac{2^{\ell-1}}{n}$}
\State{Insert $i$ into level $\ell$}
\ElsIf{$i$ is in level $\ell-1$}
\State{Insert $i$ into level $\ell$ with probability $\frac{1}{2}$}
\EndIf
\EndFor
\EndIf
\EndFor
\end{algorithmic}
\end{algorithm}

We first show an upper bound on the expected search time of our learning-augmented skip-list.
\begin{restatable}{theorem}{thmskiplist}
\thmlab{thm:skip:list}
For each $i\in[n]$, let $f_i$ and $p_i$ be the proportion of true and predicted queries to item $i$. 
Then with probability at least $0.99$ over the randomness of the construction of the skip list, the expected search time over the choice of queries at most $20+2\sum_{i=1}^n f_i\cdot\min\left(\log\frac{1}{p_i},\log n\right)$. 
\end{restatable}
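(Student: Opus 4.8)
The plan is to decompose the cost of searching for item $i$ into the number of ``down'' moves and the number of ``right'' moves in the skip list, and to show that each of these contributes roughly $\min(\log\tfrac1{p_i},\log n)$, so that the two together give the factor $2$. First I would fix notation: for item $i$ let $L_i$ be the largest level to which $i$ is promoted deterministically, i.e. $L_i=\flr{\log(np_i)+1}$ when $p_i\ge 1/n$ and $L_i=0$ otherwise; above level $L_i$ item $i$ behaves exactly like an element of an ordinary skip list with promotion probability $\tfrac12$, so its top level is $\mathrm{top}(i)=L_i+G_i$ with $G_i$ geometric of mean $1$. Two elementary facts drive everything: (i) $\log n-L_i\le\min(\log\tfrac1{p_i},\log n)$ (from $L_i\ge\log(np_i)$ and $L_i\ge 0$), and (ii) $2^{L_i}\le 2np_i+2$, hence $\sum_i 2^{L_i}\le 4n$.

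The next step is to control the maximum level $M$ of the skip list. Since $\PPr{\mathrm{top}(i)\ge k}=2^{-(k-L_i)}$, a union bound gives $\PPr{M\ge k}\le 2^{-k}\sum_i 2^{L_i}\le 4n\cdot 2^{-k}$, so $M\le\log n+O(1)$ with probability at least $0.995$; call this event $\mathcal E$. On $\mathcal E$ the down-moves are easy: a search for $i$ descends from the top level to $\mathrm{top}(i)$, where it finds $i$ and stops, so it makes exactly $M-\mathrm{top}(i)\le M-L_i\le\log n+O(1)-L_i\le\min(\log\tfrac1{p_i},\log n)+O(1)$ down moves, \emph{deterministically} on $\mathcal E$. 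Averaging over queries, the down-moves alone contribute at most $\sum_i f_i\min(\log\tfrac1{p_i},\log n)+O(1)$ on $\mathcal E$; this is one of the two copies of $\sum_i f_i\min(\cdot)$.

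For the right-moves I would run Pugh's backwards analysis. At a level $\ell$ the search for $i$ enters at the predecessor of $i$ at level $\ell+1$ and walks right to the predecessor of $i$ at level $\ell$; the nodes it passes are precisely the level-$\ell$ nodes in the gap to the left of $i$ that failed to be promoted to level $\ell+1$, and (conditioned on the level-$\ell$ configuration) each such node fails independently with probability at most $\tfrac12$ --- probability $\tfrac12$ above its ceiling $L_v$, probability $0$ below it. Hence the number of right-moves at level $\ell$ is stochastically dominated by a geometric of mean at most $1$, and the search touches only the levels $\mathrm{top}(i)\le\ell\le M$, i.e. at most $M-L_i+1\le\log n+O(1)-L_i$ of them. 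So the expected number of right-moves in the search for $i$ is at most $\min(\log\tfrac1{p_i},\log n)+O(1)$ as well, giving $\Ex{\sum_i f_i T_i}\le 2\sum_i f_i\min(\log\tfrac1{p_i},\log n)+O(1)$, where $T_i$ is the search time for $i$ in the (random) skip list. (Equivalently, in the backwards walk from $(i,\mathrm{top}(i))$ each step is ``up'' with probability at least $\tfrac12$ and $M-\mathrm{top}(i)$ ``up'' steps are needed, so the number of ``left'' steps is dominated by the number of failures before that many successes, again of expectation $M-\mathrm{top}(i)$; this is where the coupling of the two halves into a single factor $2$ comes from.)

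The remaining --- and I expect genuinely delicate --- step is to upgrade this in-expectation bound to the claimed ``with probability $0.99$'' statement while losing only an $O(1)$ additive term. A Markov bound on $\sum_i f_i T_i$ is useless, since $\sum_i f_i\min(\log\tfrac1{p_i},\log n)$ can be $\Theta(\log n)$; the factor $2$ must come from a quantity that is \emph{deterministic} on a good event (which is exactly what happens for the down-moves on $\mathcal E$), and the residual fluctuation --- carried by the right-moves --- must be shown to concentrate within $O(1)$ of its mean. The structural point I would lean on is that once one conditions on the vector of tower heights $(\mathrm{top}(j))_j$ all the right-move counts are determined; the tower heights are independent, each perturbs $\sum_i f_i T_i$ only locally, and the one truly global dependence --- through $M$ itself --- already has $O(1)$ variance because $\PPr{M\ge\log n+t}$ decays geometrically in $t$. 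A bounded-differences / second-moment estimate should then show that weighting by $f$ and summing over all $n$ items damps the per-item $\Theta(\sqrt{\log n})$ fluctuations down to $O(1)$, which is what lets one absorb everything into the additive constant $20$. Making this concentration step fully rigorous --- in particular bounding the influence of a single tower height on $\sum_i f_i T_i$, and accounting for the nodes on the few topmost levels --- is the main obstacle, and is where I would spend most of the effort.
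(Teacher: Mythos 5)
Your first three paragraphs are, up to bookkeeping, the paper's own proof. The paper's argument is: (a) item $i$ is deterministically present at level $\max\bigl(0,1+\flr{\log(np_i)}\bigr)$ (your $L_i$); (b) at any fixed level the expected number of horizontal steps is at most $2$, by exactly your geometric-domination observation that each passed node failed to be promoted with probability at most $\tfrac12$; (c) with probability at least $0.99$ the total number of levels is at most $10+\log n$, proved by the same union bound you give, using $\sum_i 2^{L_i}=O(n)$; and (d) the search for $i$ therefore visits at most $10+\log n-L_i\le 10+\min\bigl(\log\tfrac1{p_i},\log n\bigr)$ levels at expected cost $2$ each, which after averaging over $f$ gives the stated bound. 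Your down-move/right-move split is just a slightly different accounting of the same ``$2$ per level times number of levels above $L_i$'' bound (and your observation that the down-moves are deterministic on $\mathcal E$ is, if anything, cleaner than the paper's treatment).

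The concentration step you describe in your last paragraph as the main obstacle is not in the paper and is not needed for the statement as the paper proves it: the paper reads ``with probability $0.99$ over the construction'' as referring only to the max-level event $\mathcal E$, and keeps the remaining construction randomness (the horizontal moves) inside the expectation, i.e.\ it proves a bound on the expected search time given that the number of levels is at most $10+\log n$, not a pointwise bound on $\sum_i f_i T_i$ for the realized skip list. Under that reading your first three paragraphs already constitute a complete proof, and you should stop there. (Your stricter reading --- that $\sum_i f_i T_i$ itself should be at most the stated bound with probability $0.99$ --- is a genuinely stronger statement; the paper neither claims to need it nor proves it, and neither the paper nor you addresses the mild conditioning subtlety that restricting to $\mathcal E$ slightly biases the promotion variables used in the per-level estimate, which both proofs gloss over in the same way.)
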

To achieve \thmref{thm:skip:list}, we first show that each item $i\in[n]$ must be contained at some level $\max\left(0,1+\flr{\log(np_i)}\right)$, depending on the predicted frequency $p_i$ of the item. 
We also show that with high probability, the total number of levels in the skip list is at most $\O{\log n}$.
This allows us to upper bound the expected search time for item $i$ by at most $2C+2\min\left(\log\frac{1}{p_i},\log n\right)$. 
We can then analyze the expected search time across the true probability distribution $f_i$. 
Putting these steps together, we obtain \thmref{thm:skip:list}. 

We also prove a lower bound on the expected search time of an item drawn from a probability distribution $f$ for \emph{any} skip list. 
\begin{restatable}{theorem}{thmlbskiplist}
\thmlab{thm:lb:skip:list}
Given a random variable $X\in[n]$ so that $X=i$ with probability $f_i$, let $T(X)$ denote the search time for $X$ in a skip list. 
Then $\Ex{T(x)}\ge H(f)$, where $H(f)$ is the entropy of $f$. 
\end{restatable}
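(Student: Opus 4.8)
The plan is to derive the bound from Shannon's noiseless source-coding lower bound. Fix any realization of the (randomly constructed) skip list; it suffices to prove $\Ex{T(X)}\ge H(f)$ for this fixed structure, since the bound then holds in expectation over the construction as well. For each item $i\in[n]$ I would record the \emph{search transcript} $\sigma_i\in\{\mathsf{R},\mathsf{D}\}^{*}$: the sequence of ``move right'' / ``move down'' steps performed by the skip-list search on target $v_i$, starting from the head of the top level and ending at the highest node storing $v_i$. By construction the length of $\sigma_i$ is exactly the search time $T(i)$, one unit per step.

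The heart of the argument is to show that $\{\sigma_i\}_{i\in[n]}$ is a \emph{prefix-free} set of binary strings, which by Kraft's inequality yields $\sum_{i=1}^n 2^{-T(i)}\le 1$. The key observation is that each move (right, or down) is a deterministic function of the current node and the skip-list structure alone and does \emph{not} depend on the target; hence if $\sigma_i$ is a prefix of $\sigma_j$, then the searches for $v_i$ and $v_j$ visit exactly the same nodes for the first $|\sigma_i|=T(i)$ steps and in particular both sit at the node storing $v_i$ afterward. But the search for $v_i$ halts there, having identified $v_i$; so the search for $v_j$ would also halt there and report item $i\ne j$, a contradiction (and this also forces the $\sigma_i$ to be distinct). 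The delicate point, and what I expect to be the main obstacle, is to pin down the search model so that every unit of search time corresponds to exactly one bit of this code: one must fold the ``found it'' equality test into the step that lands on the target, rather than treating it as a separate ternary branch, so that the transcript alphabet is genuinely binary. Getting this bookkeeping exactly right is what makes the Kraft constant precisely $1$ (rather than $\log 3$), and hence the entropy bound tight.

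Granting $\sum_i 2^{-T(i)}\le 1$, the conclusion is the standard Gibbs'-inequality computation: discarding items with $f_i=0$ and using concavity of $\log$ (Jensen),
\[
H(f)-\Ex{T(X)}=\sum_{i=1}^n f_i\log\frac{2^{-T(i)}}{f_i}\le \log\!\left(\sum_{i=1}^n f_i\cdot\frac{2^{-T(i)}}{f_i}\right)=\log\!\left(\sum_{i=1}^n 2^{-T(i)}\right)\le\log 1=0,
\]
so $\Ex{T(X)}\ge H(f)$, as claimed. As a sanity check, this matches the upper bound of \thmref{thm:skip:list} up to the factor $2$ and additive constant, consistent with the ``nearly a factor of two'' optimality claimed in the introduction; the same argument applies verbatim to KD trees once the analogous ``search transcript'' is defined there.
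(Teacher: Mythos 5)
Your overall strategy---encode each search as a binary right/down transcript, invoke Kraft's inequality, and finish with Gibbs' inequality---is the same route the paper takes (the paper compresses the Kraft/Gibbs step into a citation of Shannon's source coding theorem). The reduction to a fixed realization of the skip list and the final Gibbs computation are fine. The gap is in your key lemma: the transcripts are \emph{not} prefix-free, and the justification you give for prefix-freeness is incorrect. At a node $v$ with successor $w$ on the current level, the search moves right precisely when $w$'s key is smaller than the target and moves down otherwise, so each move \emph{does} depend on the target; in particular the decision to halt is an equality test against the target and is therefore also target-dependent. Consequently, when the search for $v_j$ reaches the node storing $v_i$ after the moves $\sigma_i$, it does not halt there---it tests $v_i\ne v_j$ and keeps going---so there is no contradiction, only a genuine prefix violation. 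Concretely: take two items $a<b$ with $a$ promoted to level $1$ and $b$ present only at level $0$. The search for $a$ has transcript $\mathsf{R}$ (step right from the head onto $a$ at level $1$ and stop), while the search for $b$ has transcript $\mathsf{R}\mathsf{D}\mathsf{R}$ (right onto $a$ at level $1$ since $a<b$, down to $a$'s copy at level $0$, right onto $b$). So $\sigma_a$ is a proper prefix of $\sigma_b$, and $\sum_i 2^{-T(i)}\le 1$ does not follow.

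Structurally, the problem is that items terminate at \emph{internal} nodes of the transcript tree rather than at leaves, exactly as in a binary search tree with keys stored at internal nodes. The standard repairs either treat the halt as a third symbol (ternary Kraft, which only yields $\Ex{T(X)}\ge H(f)/\log_2 3$) or account for codewords sitting at internal nodes of a binary tree, which costs a lower-order additive term; neither gives the clean constant $1$ you (and the paper) claim. You correctly sensed that the ternary-versus-binary bookkeeping is ``the delicate point,'' but the fix you propose---folding the equality test into the arriving step---does not restore prefix-freeness, because a deeper search passes \emph{through} the shallower item's node rather than being blocked by it. To be fair, the paper's own proof asserts the same binary encoding and appeals to Huffman/Shannon optimality without verifying unique decodability, so it silently skips exactly the step at which your more explicit writeup breaks down.
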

\thmref{thm:lb:skip:list} uses standard entropy arguments that have been previously used to lower bound the optimal constructions of data structures such as Huffman codes. 
We next upper bound the entropy of a probability vector that satisfies a Zipfian distribution with parameter $s$. 
\begin{restatable}{lemma}{lemzipfentropy}
\lemlab{lem:zipf:entropy}
Let $s,z>0$ be fixed constants and let $f$ be a frequency vector such that $f_i=\frac{z}{i^s}$ for all $i\in[n]$. 
If $s>1$, then $H(f)=\O{1}$ and otherwise if $s\le 1$, then $H(f)\le\log n$.
\end{restatable}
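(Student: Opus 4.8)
The plan is to split on whether $s>1$ or $s\le 1$, since this dichotomy is exactly what controls the normalizing constant. Throughout I use that, because $f$ is a probability vector, $z=\left(\sum_{i=1}^n i^{-s}\right)^{-1}$, and I expand the entropy as
\[
H(f)=\sum_{i=1}^n f_i\log\frac{1}{f_i}=\sum_{i=1}^n f_i\left(\log\frac{1}{z}+s\log i\right)=\log\frac{1}{z}+s\sum_{i=1}^n f_i\log i,
\]
so that it suffices to bound the two summands separately.

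For $s\le 1$ I would not use the Zipfian shape at all: the entropy of \emph{any} probability distribution supported on $[n]$ is at most $\log n$, with equality for the uniform distribution. This is immediate from Jensen's inequality applied to the concave function $\log$, since $H(f)=\Ex{\log(1/f_i)}\le\log\Ex{1/f_i}=\log n$ where the expectation is over $i$ drawn from $f$. Hence $H(f)\le\log n$ directly, which handles this case.

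For $s>1$ I would bound each summand by an absolute constant. For the first, $\sum_{i=1}^n i^{-s}$ lies in $[1,\zeta(s)]$ with $\zeta(s)<\infty$, so $z\le 1$ and $\log\frac1z\le\log\zeta(s)=\O{1}$. For the second, $\sum_{i=1}^n f_i\log i=z\sum_{i=1}^n i^{-s}\log i\le\sum_{i=1}^\infty i^{-s}\log i$; choosing $\eps=(s-1)/2>0$ and using $\log i=\O{i^\eps}$ gives $i^{-s}\log i=\O{i^{-1-\eps}}$, so the series converges and this summand is $\O{1}$ as well. Adding the two bounds yields $H(f)=\O{1}$.

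The only nontrivial point — and hence the step I would be most careful about — is verifying in the $s>1$ case that $z$ is bounded below by a quantity depending only on $s$, equivalently that $\sum_i i^{-s}$ does not grow with $n$; without this, $\log(1/z)$ could itself contribute an unbounded term. Everything else is a routine comparison of a weighted logarithmic sum against a convergent $p$-series, so I would keep the write-up brief. (I would also note in passing that when $s\le 1$ the quantity $z$ actually tends to $0$ as $n\to\infty$, so ``fixed constant'' in the statement should be read loosely; this is harmless, since that case uses only the generic entropy bound.)
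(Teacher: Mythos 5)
Your proof is correct and follows essentially the same route as the paper's: the $s\le 1$ case is dispatched by the generic bound $H(f)\le\log n$ for any distribution supported on $[n]$, and the $s>1$ case by comparing $\sum_i i^{-s}\log i$ against a convergent $p$-series with exponent $(s+1)/2$ (your $\eps=(s-1)/2$ is exactly the paper's choice). In fact you are slightly more careful at the one delicate point you flagged: the paper's intermediate inequality $H(f)\le s\sum_i i^{-s}\log i$ silently drops the nonnegative term $\log(1/z)$, whereas you correctly control it via $z\ge 1/\zeta(s)$, giving $\log(1/z)\le\log\zeta(s)=\O{1}$.
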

By \thmref{thm:skip:list} and \lemref{lem:zipf:entropy}, we thus have the following corollary for the expected search time of our learning-augmented skip list on a set of search queries that follows a Zipfian distribution.
\begin{restatable}{corollary}{corskiplistzipftime}
\corlab{cor:skiplist:zipf:time}
With high probability, the expected search time on a set of queries that follows a Zipfian distribution with exponent $s$ is at most $\O{1}$ for $s>1$ and $\O{\log n}$ for $s\le 1$. 
\end{restatable}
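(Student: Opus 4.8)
The plan is to obtain \corref{cor:skiplist:zipf:time} directly from the upper bound of \thmref{thm:skip:list} together with the entropy estimate of \lemref{lem:zipf:entropy}, under the standing assumption (made in the surrounding discussion) that the oracle is accurate up to a constant factor, i.e.\ there is a constant $c\in(0,1]$ with $p_i\ge c\,f_i$ for every $i\in[n]$; the perfect-oracle case $p_i=f_i$ is the special case $c=1$. By \thmref{thm:skip:list}, with probability at least $0.99$ over the construction the expected search time is at most $20+2\sum_{i=1}^n f_i\cdot\min\!\left(\log\tfrac{1}{p_i},\log n\right)$, so it suffices to bound this weighted sum by $\O{1}$ when $s>1$ and by $\O{\log n}$ when $s\le 1$.

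For the regime $s\le 1$, I would simply bound $\min\!\left(\log\tfrac{1}{p_i},\log n\right)\le\log n$ termwise; since $\sum_i f_i=1$ the weighted sum is at most $\log n$, so the expected search time is $20+2\log n=\O{\log n}$. This step needs neither the oracle guarantee nor the precise Zipfian form of $f$, only that $f$ is a probability vector; alternatively one could invoke $H(f)\le\log n$ from \lemref{lem:zipf:entropy}.

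For the regime $s>1$, I would drop the $\min$ and use the oracle guarantee: for each $i$ with $f_i>0$ we have $\min\!\left(\log\tfrac{1}{p_i},\log n\right)\le\log\tfrac{1}{p_i}\le\log\tfrac{1}{f_i}+\log\tfrac{1}{c}$, while items with $f_i=0$ contribute nothing to the weighted sum. Summing against $f_i$ gives $\sum_{i=1}^n f_i\cdot\min\!\left(\log\tfrac{1}{p_i},\log n\right)\le H(f)+\log\tfrac{1}{c}$, and \lemref{lem:zipf:entropy} yields $H(f)=\O{1}$ because $s>1$. Substituting back, the expected search time is $20+2\bigl(H(f)+\log\tfrac{1}{c}\bigr)=\O{1}$, where the hidden constant depends on $s$, $z$, and the oracle-accuracy parameter $c$ but not on $n$.

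There is no real obstacle here: the corollary is essentially bookkeeping on top of the two preceding results. The only point that needs care is the reduction from $p_i$ to $f_i$ in the $s>1$ case — a priori $\log\tfrac{1}{p_i}$ can be arbitrarily large (e.g.\ $p_i$ tiny or zero), which is precisely why the $\min$ with $\log n$ in \thmref{thm:skip:list} is essential and why the constant-factor oracle hypothesis is invoked to turn $\log\tfrac{1}{p_i}$ into $\log\tfrac{1}{f_i}+\O{1}$. I would also note that the ``high probability'' in the corollary is inherited verbatim from the $0.99$ event of \thmref{thm:skip:list}, which can be amplified to $1-1/\mathrm{poly}(n)$ via the standard fact that the skip list has $\O{\log n}$ levels with high probability.
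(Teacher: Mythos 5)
Your proposal is correct and matches the paper's (unwritten) argument, which simply combines the upper bound of \thmref{thm:skip:list} with the entropy estimate of \lemref{lem:zipf:entropy}; your explicit handling of the constant-factor oracle assumption to pass from $\log\frac{1}{p_i}$ to $\log\frac{1}{f_i}+\O{1}$ is exactly the step the paper leaves implicit. No issues.
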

Next, we show that our learning-augmented skip list construction is robust to somewhat inaccurate oracles. 
Let $f$ be the true-scaled frequency vector so that for each $i\in[n]$, $f_i$ is the probability that a random query corresponds to $i$. 
Let $p$ be the predicted frequency vector, so that for each $i\in[n]$, $p_i$ is the predicted probability that a random query corresponds to $i$. 
For $\alpha,\beta\in(0,1)$, we call an oracle $(\alpha,\beta)$-noisy if for all $i\in[n]$, we have $p_i\ge\alpha\cdot f_i-\beta$. 
Then we have the following guarantees for an $(\alpha,\beta)$-noisy oracle:
\begin{restatable}{lemma}{lemnoisyskiplist}
\lemlab{lem:noisy:skip:list}
Let $\alpha$ be a constant and $\beta<\frac{\alpha}{4n}$. 
A learning-augmented skip list with a set of $(\alpha,\beta)$-noisy predictions has performance that matches that of a learning-augmented learned with a perfect oracle, up to an additive constant. 
\end{restatable}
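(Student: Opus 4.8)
The plan is to reduce to \thmref{skip:list} by controlling the quantity $\sum_{i=1}^n f_i\cdot\min\left(\log\frac1{p_i},\log n\right)$ appearing in its bound. \thmref{skip:list} already guarantees that, applied to the noisy predictions $p_i$, with probability at least $0.99$ the expected search time is at most $20+2\sum_{i=1}^n f_i\cdot\min\left(\log\frac1{p_i},\log n\right)$, while the same theorem applied to a perfect oracle (i.e.\ $p_i=f_i$) gives $20+2\sum_{i=1}^n f_i\cdot\min\left(\log\frac1{f_i},\log n\right)$. Hence it suffices to prove $\sum_{i=1}^n f_i\cdot\min\left(\log\frac1{p_i},\log n\right)\le \sum_{i=1}^n f_i\cdot\min\left(\log\frac1{f_i},\log n\right)+O(1)$ whenever the oracle is $(\alpha,\beta)$-noisy.

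First I would split $[n]$ into a set of \emph{heavy} indices $H=\{i:f_i\ge 2\beta/\alpha\}$ and \emph{light} indices $L=[n]\setminus H$. For $i\in H$, the noise guarantee $p_i\ge\alpha f_i-\beta$ together with $\beta\le\frac{\alpha}{2}f_i$ yields $p_i\ge\frac{\alpha}{2}f_i>0$, so $\log\frac1{p_i}\le\log\frac1{f_i}+\log\frac2\alpha$; applying the elementary inequality $\min(a+c,b)\le\min(a,b)+c$ (valid for $c\ge 0$, and here $c=\log\frac2\alpha>0$ since $\alpha\in(0,1)$) gives $\min\left(\log\frac1{p_i},\log n\right)\le\min\left(\log\frac1{f_i},\log n\right)+\log\frac2\alpha$.

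The only mildly subtle point is the light indices. Here the hypothesis $\beta<\frac{\alpha}{4n}$ forces $f_i<\frac1{2n}$, hence $\log\frac1{f_i}>\log(2n)>\log n$, so $\min\left(\log\frac1{f_i},\log n\right)=\log n$. Since $\min\left(\log\frac1{p_i},\log n\right)\le\log n$ unconditionally, the inequality from the previous step holds for $i\in L$ as well, trivially and without even invoking the noise bound. The point worth emphasizing is that a perfect oracle is itself forced to "pay" $\log n$ for every item of frequency at most $\frac1n$, so the noisy construction loses nothing on the tail — and in particular this also handles indices where $\alpha f_i-\beta\le 0$, for which the noise guarantee is vacuous and $p_i$ may even be $0$: such indices lie in $L$, and the argument there never divides by $p_i$.

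Combining the two cases and summing against $f_i$ (using $\sum_i f_i=1$) gives $\sum_{i=1}^n f_i\cdot\min\left(\log\frac1{p_i},\log n\right)\le\sum_{i=1}^n f_i\cdot\min\left(\log\frac1{f_i},\log n\right)+\log\frac2\alpha$. Plugging back into \thmref{skip:list}, the search-time bound for the noisy construction exceeds that of the perfect-oracle construction by at most the additive term $2\log\frac2\alpha$, which is $O(1)$ because $\alpha$ is a constant. I do not anticipate a genuine obstacle; the work is entirely in the case split above, and the one place to be careful is exactly the light-index regime, where one must use the bound $\beta<\frac{\alpha}{4n}$ to push $f_i$ below $\frac1{2n}$ and rely only on the crude bound $\min(\cdot,\log n)\le\log n$ rather than on the noise guarantee.
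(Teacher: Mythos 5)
Your proof is correct and takes essentially the same route as the paper's: both derive $p_i\ge\frac{\alpha}{2}f_i$ for sufficiently heavy items using $\beta<\frac{\alpha}{4n}$, and absorb the loss into an additive $2\log\frac{2}{\alpha}$ term via the bound from Lemma~\ref{lem:robust:search:time}. Your explicit treatment of the light indices (where $\min\left(\log\frac{1}{p_i},\log n\right)\le\log n=\min\left(\log\frac{1}{f_i},\log n\right)$, so the noise guarantee is never needed and $p_i=0$ is harmless) is a more careful rendering of the step the paper dispatches with ``it suffices to assume $f_i>\frac{1}{2n}$,'' but it is the same argument.
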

To achieve \lemref{lem:noisy:skip:list}, we parameterize our analysis in \thmref{thm:skip:list}. 
Due to the guarantees of the $(\alpha,\beta)$-noisy oracles, we can write $p_i\ge\frac{\alpha}{2}\cdot f_i$, which allows us to express the search time $\log\frac{1}{p_i}$ in terms of the true entropy of the distribution and a small additive constant that stems from $\log\frac{1}{\alpha}$. 
In fact, we remark that even when the predictions are arbitrarily inaccurate, our learning-augmented skip list still has expected query time $\O{\log n}$, since the total number of levels is at most $\O{\log n}$ with high probability. 
Since the expected query list of an oblivious skip list is also $\O{\log n}$, then the expected query time of our learning-augmented skip list is within a constant multiplicative factor, even with arbitrarily poor predictions.

\section{Learning-Augmented KD Trees}
\seclab{sec:kd:trees}
In this section we present details on our novel approach to KD tree construction.
First, we present the algorithm that constructs a learning-augmented KD tree. 
We focus on the setting where queries can be made on the search space rather than the items in the dataset, which is much more interesting for high-dimensional datasets, since even building a balanced tree on the search space could result in prohibitively high query time, as the height of the tree would already be at least the dimension $d$. 
Nevertheless, assuming that we have a query probability prediction $p_i$ for element $i$ of our dataset, the intuition of our method is straightforward. 
Whereas a learning-augmented binary search tree would attempt to find a value such that the probability of a query being on either branch of the tree is balanced, high-dimensional datasets do not have an absolute ordering. 

Thus, instead of relying on standard techniques to determine the splitting point of our dataset, we find a specific dimension in which there exists a balanced split such that the probability of a query being on either branch of the tree is balanced. 
However, there can still be high frequency queries that are not in the dataset, which can cause significantly high query time if not optimized. 
Hence, we also add to the tree construction high frequency queries that are not data points, in order to reject these negative queries more quickly.

\begin{algorithm}[!htb]
\caption{Learning-augmented KD tree construction}
\begin{algorithmic}[1]
\Function{BuildNode}{$x$}
\State{$T \gets \emptyset$}
\If{$|x|=1$}
    \State{$T=x$}
    \State{\Return $T$}
\EndIf

\State{$\texttt{best}\gets \emptyset$}
\For{each dimension $i$ of $x$}
\For{each element $x$}
\State{Compute the probability of points to the left of $x$ on axis $i$}
\State{If the probability is closer to $0.5$ than $\texttt{best}$, update $\texttt{best}$}
\EndFor
\EndFor

\State{$T.\texttt{axis}=\texttt{best}.\texttt{axis}$}
\State{$T.\texttt{left}=\textproc{BuildNode}(x : x[\texttt{axis}] \leq \texttt{best}.\texttt{value})$}
\State{$T.\texttt{right}=\textproc{BuildNode}(x : x[\texttt{axis}] > \texttt{best}.\texttt{value})$}
\State{\Return $T$}
\EndFunction
\end{algorithmic}
\end{algorithm}

\begin{algorithm}[!htb]
\caption{Learning-augmented KD tree}
\label{alg:learned:kdtree}
\begin{algorithmic}[1]
\Function{Build}{$\texttt{dataset}$, $\texttt{queries}$}
\State{$\texttt{dataset}_f \gets \{x\in \texttt{dataset} \;\mid\; x.\texttt{prob} > \frac{1}{n^2} \}$}
\State{$\texttt{queries}_f \gets \{x\in \texttt{queries} \;\mid\; x.\texttt{prob} > \frac{1}{n^2} \}$}
\State{$T\gets \textproc{BuildNode}(\texttt{dataset}_f\cup \texttt{queries}_f)$}
\State{Insert $\{x\in \texttt{dataset} \mid x.\texttt{prob} \leq \frac{1}{n^2} \}$ into $T$ using standard balanced KD tree construction}
\State{\Return $T$}
\EndFunction
\end{algorithmic}
\end{algorithm}

We prove the following guarantees on the performance of our learning-augmented KD tree, first assuming that our oracle is perfect. 
\begin{restatable}{theorem}{thmkdquerytime}
\thmlab{thm:kd:query:time}
Suppose $[\Delta]^d$ is the space of possible input points and queries. 
Let $N=\Delta^d$ and $p_i$ be the probability that a random query is made to $i\in[N]$, given the natural mapping between $[N]$ and $[\Delta]^d$. 
Let $p=(p_1,\ldots,p_N)\in\mathbb{R}^N$ be the probability vector and $H(p)$ be its entropy. 
Then given a set of $n$ input points, the expected query time for the tree $T$ created by Algorithm~\ref{alg:learned:kdtree} is $\O{\min(H(p),\log n)}$. 
\end{restatable}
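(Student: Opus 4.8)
The plan is to mirror the skip-list argument (Theorem~\ref{thm:skip:list}) in the tree setting: bound the depth at which a query lands, then take expectations against $p$. First I would analyze a query $q\in[N]$ that is a ``heavy'' point, i.e.\ $p_q>\frac{1}{n^2}$, so that $q$ was included in the point set passed to \textproc{BuildNode}. The key structural claim is that \textproc{BuildNode} always finds a split whose ``left'' probability mass is within a constant factor of $\frac12$ — more precisely, since at each node we scan every remaining point in every coordinate and pick the prefix closest to probability $0.5$, the worst case is that the heaviest single remaining element has mass close to the whole node, in which case the node mass is already dominated by one atom. I would argue that after each successful balanced split the conditional probability mass of the node containing $q$ drops by a constant factor, and the only obstruction is a node whose mass is concentrated on $q$ itself; in that case $q$ is found essentially immediately. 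This gives depth $\O{\log\frac{1}{p_q}}$ for heavy queries by the standard potential/mass-halving argument, exactly as one bounds the depth of a leaf in a statically near-optimal BST.

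Second, for a query $q$ with $p_q\le\frac{1}{n^2}$, the point $q$ is absent from the learned portion of the tree and is instead inserted via the standard balanced KD-tree construction on the $\le n$ remaining data points; such a subtree has depth $\O{\log n}$, and the learned prefix above it also has depth $\O{\log n}$ (the learned tree is built on at most $n + |\texttt{queries}_f|$ atoms, and since every atom in it has mass $>\frac{1}{n^2}$ there are at most $n^2$ of them and the mass-halving bound caps the depth at $\O{\log n}$). So every query, heavy or light, has depth $\O{\log n}$; and a heavy query has depth $\O{\min(\log\frac{1}{p_q},\log n)}$. Taking expectations over $q\sim p$, the contribution of heavy queries is $\O{\sum_{q:\,p_q>1/n^2} p_q\log\frac{1}{p_q}}\le\O{H(p)}$, and simultaneously it is $\O{\log n}$ since each term is $\O{\log n}$ and the $p_q$ sum to at most $1$; the light queries contribute at most $\O{\log n}\cdot\sum_{q:\,p_q\le1/n^2}p_q=\O{\log n}$ as well (in fact $\O{1}$, since there are at most $N$ of them each of mass $\le n^{-2}$). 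Combining, $\Ex{T(q)}=\O{\min(H(p),\log n)}$.

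The main obstacle I expect is making the ``constant-factor balanced split exists'' claim fully rigorous: \textproc{BuildNode} only guarantees the \emph{best prefix in each coordinate}, and a priori it is conceivable that in every coordinate the sorted prefixes jump from well below $\frac12$ to well above $\frac12$ because a single coordinate value carries a big clump of mass. The resolution is that a clump sharing one coordinate value can still be separated along a \emph{different} coordinate unless the clump is a single atom — and a single atom of mass $\ge\frac13$ of the node, say, means $q$ (if it is that atom) terminates, or (if it is not) the rest of the node has mass $\le\frac23$ and recursion continues. Formalizing this case analysis so that the per-level mass decay constant is uniform is the delicate part; everything after that is the routine entropy manipulation above. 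I would also need to be slightly careful that the $\O{\cdot}$ absorbs the additive cost of the final balanced-KD-tree stub, but since that stub has depth $\O{\log n}$ and $\log n$ dominates both $\min(H(p),\log n)$ terms trivially when $H(p)=\Omega(\log n)$, and when $H(p)=o(\log n)$ the light-query mass is what pays for it, this is not a real issue.
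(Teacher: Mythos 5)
Your proposal follows essentially the same route as the paper: bound the depth of a heavy query ($p_q>\frac{1}{n^2}$) by $\O{\log\frac{1}{p_q}}$ via a balanced-split/mass-halving argument, bound light queries and the appended balanced KD-tree stub by $\O{\log n}$, and take expectations over $q\sim p$ to conclude $\O{\min(H(p),\log n)}$. The one obstacle you flag --- rigorously establishing that \textproc{BuildNode} always finds a constant-factor-balanced split despite possible mass clumps --- is simply asserted without justification in the paper's own depth lemma, so your treatment is, if anything, the more careful one on that point.
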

The analysis of \thmref{thm:kd:query:time} corresponding to our learning-augmented KD tree follows from a similar structure as the proof of \thmref{thm:skip:list}. 
However, the crucial difference is that the universe size is now $N$, which is exponential in $d$. 
Thus constructions that consider distributions over all of $[N]$ may suffer $\O{\log N}=\O{d\log\Delta}$ query time, which can be prohibitively expensive for large $d$, e.g., high-dimensional data. 
Hence, our algorithm requires a bit more care in the truncation of queries with low probability and instead, we build a balanced KD tree for any item with less than $\frac{1}{n^2}$ probability of being queried, so that each of their query times is at most $\O{\log n}$. 
We further remark this implies robustness of our data structure to arbitrarily poor predictions, by a similar argument as in \secref{sec:skip:lists}.

We next prove a lower bound on the expected search time of an item drawn from a probability distribution $f$ for \emph{any} KD tree. 

\begin{restatable}{theorem}{thmlbkdtrees}
\thmlab{thm:lb:kd:trees}
Given a random variable $X \in [n]$ so that $X = i$ with probability $f_i$, let $D(X)$ denote the depth for $X$ in a learning-augmented KD tree. Then $\Ex{D(X)} \geq H(f)$, where $H(f)$ is the entropy of $f$.
\end{restatable}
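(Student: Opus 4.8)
The plan is to reuse the source-coding argument behind \thmref{thm:lb:skip:list}: exhibit the search paths of the $n$ data items as a prefix-free binary code whose codeword lengths are the depths $D(i)$, apply Kraft's inequality, and then invoke the Shannon lower bound (equivalently, nonnegativity of the KL divergence).

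First I would make the tree-to-code correspondence precise. In the construction of Algorithm~\ref{alg:learned:kdtree} and its recursive subroutine \textsc{BuildNode}, every point is routed down through coordinate comparisons until it occupies a singleton cell, so each data item $i\in[n]$ sits at its own leaf of a binary tree $T$ (distinct items end up in distinct cells, hence distinct leaves), and searching for $i$ walks the root-to-leaf path, spending one left/right decision per edge, so $D(i)$ is exactly the depth of that leaf. Encoding the left/right choices on the path to $i$ as a bit string $c_i\in\{0,1\}^{D(i)}$ yields a prefix-free code, since in a binary tree no root-to-leaf path is a prefix of another. (If one wants the bound for an arbitrary KD tree that also stores split points at internal nodes, the same conclusion follows after pushing each such point down to a leaf of its own subtree, which only increases depths.)

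Next I would apply Kraft's inequality: writing $d_\ell$ for the depth of a leaf $\ell$ of $T$, any binary tree satisfies $\sum_\ell 2^{-d_\ell}\le 1$, and the $n$ data items occupy a subset of the leaves of $T$ (the remaining leaves hold the extra high-frequency queries and the low-probability items inserted by the balanced construction), so $\sum_{i=1}^n 2^{-D(i)}\le 1$. Setting $q_i:=2^{-D(i)}/Z$ with $Z:=\sum_{j=1}^n 2^{-D(j)}\le 1$, nonnegativity of the KL divergence gives $\sum_i f_i\log(1/q_i)\ge\sum_i f_i\log(1/f_i)=H(f)$; since $\log(1/q_i)=D(i)+\log Z\le D(i)$, we conclude $\Ex{D(X)}=\sum_i f_i D(i)\ge H(f)$.

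I do not expect a serious obstacle: the only delicate point is the structural claim that the data items lie at distinct leaves of a binary tree, so that Kraft's inequality applies with constant $1$ rather than merely up to a constant factor — which is immediate from how \textsc{BuildNode} terminates — after which the remainder is the textbook entropy lower bound for prefix-free codes.
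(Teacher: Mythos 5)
Your proposal is correct and follows essentially the same route as the paper: identify each item's root-to-leaf search path with a binary codeword whose length equals $D(i)$, observe these form a prefix-free code, and conclude $\Ex{D(X)}\ge H(f)$ from the entropy lower bound for such codes. The only difference is that you unpack that last step explicitly via Kraft's inequality and nonnegativity of KL divergence, whereas the paper cites Shannon's source coding theorem as a black box; your version is somewhat more careful about why the code is prefix-free and why the items sit at distinct leaves, but the argument is the same.
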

By \thmref{thm:kd:query:time} and \lemref{lem:zipf:entropy}, we thus have the following corollary for the expected query time on our learning-augmented KD tree on a set of search queries that follows a Zipfian distribution.
\begin{restatable}{corollary}{corkdzipftime}
\corlab{cor:kd:zipf:time}
With high probability, the expected query time on a set of queries that follows a Zipfian distribution with exponent $s$ is at most $\O{1}$ for $s>1$ and $\O{\log n}$ for $s\le 1$. 
\end{restatable}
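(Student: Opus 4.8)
The plan is to simply combine the upper bound of Theorem~\ref{thm:kd:query:time} with the entropy estimate of Lemma~\ref{lem:zipf:entropy}, exactly as was done for the skip-list version in Corollary~\ref{cor:skiplist:zipf:time}. First I would recall that Theorem~\ref{thm:kd:query:time} gives expected query time $\O{\min(H(p),\log n)}$ for the tree built by Algorithm~\ref{alg:learned:kdtree}, where $p$ is the query probability vector over the universe $[N]=[\Delta]^d$, and that this holds with high probability over the randomness in the construction (the balanced-KD-tree portion contributes the $\O{\log n}$ term).

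Next I would invoke Lemma~\ref{lem:zipf:entropy} to bound $H(p)$. The one point requiring a sentence of care is that Lemma~\ref{lem:zipf:entropy} is stated for a frequency vector of the form $f_i=\tfrac{z}{i^s}$ over $i\in[n]$, whereas in the KD-tree setting the distribution lives over the larger universe $[N]$; but since the Zipfian ranking assigns mass $\tfrac{z}{r^s}$ to the $r$-th most frequent item for $r$ ranging over at most $N$ indices, the same computation bounding the partial sums $\sum_r \tfrac{z}{r^s}\log\tfrac{r^s}{z}$ applies verbatim, yielding $H(p)=\O{1}$ when $s>1$ (the series converges) and $H(p)\le\log N$ when $s\le 1$. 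For the $s\le 1$ case we do not even need the sharper $\log n$ bound on $H(p)$, because the $\min$ in Theorem~\ref{thm:kd:query:time} already caps the query time at $\O{\log n}$ regardless.

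Putting these together: for $s>1$, $\min(H(p),\log n)\le H(p)=\O{1}$, so the expected query time is $\O{1}$; for $s\le 1$, $\min(H(p),\log n)\le\log n$, so the expected query time is $\O{\log n}$. This is precisely the claimed statement, and it holds with high probability since that is the failure mode inherited from Theorem~\ref{thm:kd:query:time}. There is essentially no obstacle here — the only thing to watch is the universe-size bookkeeping noted above, i.e.\ making sure the entropy bound is applied to the Zipfian distribution over $[N]$ rather than over $[n]$, and observing that this changes nothing in the $s>1$ case and is harmless in the $s\le 1$ case.
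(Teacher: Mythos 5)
Your proposal is correct and matches the paper's argument, which likewise just combines \thmref{thm:kd:query:time} with \lemref{lem:zipf:entropy} and takes the minimum with $\log n$. Your extra remark about applying the entropy bound over the universe $[N]=\Delta^d$ rather than $[n]$ is a sensible piece of bookkeeping that the paper glosses over, but it does not change the route.
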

Finally, we show near-optimality when given imperfect predictions from a $(\alpha, \beta)$-noisy oracle:
\begin{restatable}{lemma}{lemnoisykdtree}
\lemlab{lem:noisy:kd:tree}
Let $\alpha$ be a constant and let $\beta\le\frac{\alpha}{n^2}$. 
Then the query time for our learning-augmented KD tree with $(\alpha, \beta)$-noisy prediction matches the performance of a learning-augmented KD tree constructed using a perfect oracle up to an additive constant.
\end{restatable}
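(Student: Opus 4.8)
The plan is to mirror the proof of \lemref{lem:noisy:skip:list}: revisit the proof of \thmref{thm:kd:query:time}, but keep its dependence on the \emph{predicted} frequencies $p_i$ explicit, and only at the end convert the $(\alpha,\beta)$-noisy guarantee into a statement about the true entropy $H(f)$. The proof of \thmref{thm:kd:query:time} establishes two facts we would reuse verbatim: (i) the tree produced by Algorithm~\ref{alg:learned:kdtree} has total depth $\O{\log n}$ (each split of $\textproc{BuildNode}$ roughly halves the surviving probability mass, which is bounded below by $\frac{1}{n^2}$ on $\texttt{dataset}_f\cup\texttt{queries}_f$, and the subsequent balanced insertion of the low-probability data points keeps the depth $\O{\log n}$); and (ii) an element $i$ with $p_i>\frac{1}{n^2}$ lies at depth $\O{\log\frac{1}{p_i}}$. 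Hence a query to any $i\in[N]$ costs $\O{\min(\log\frac{1}{p_i},\log n)}$ whether or not it was seen during construction, so the expected query time is at most
\[
\sum_{i\in[N]} f_i\cdot\O{\min\!\left(\log\tfrac{1}{p_i},\,\log n\right)}.
\]

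Next I would split $[N]$ into $S=\{i: f_i\ge \frac{2}{n^2}\}$ and $\bar S=[N]\setminus S$. For $i\in S$, the hypothesis $\beta\le\frac{\alpha}{n^2}$ gives $p_i\ge\alpha f_i-\beta\ge\alpha f_i-\frac{\alpha}{2}f_i=\frac{\alpha}{2}f_i$, so $\log\frac{1}{p_i}\le\log\frac{1}{f_i}+\log\frac{2}{\alpha}$; applying $\min(a+c,b)\le\min(a,b)+c$ for $c\ge0$, the contribution of $S$ is at most $\O{\sum_{i\in S}f_i\min(\log\frac{1}{f_i},\log n)}+\O{\log\frac{1}{\alpha}}=\O{\min(H(f),\log n)}+\O{1}$, using that $\alpha$ is a constant. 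For $i\in\bar S$ we only have the crude per-query bound $\O{\log n}$, but $f_i<\frac{2}{n^2}$ forces $\log\frac{1}{f_i}\ge\log n$ for $n\ge 2$, whence $\sum_{i\in\bar S}f_i\le\frac{1}{\log n}\sum_{i\in\bar S}f_i\log\frac{1}{f_i}\le\frac{H(f)}{\log n}$, so the total contribution of $\bar S$ is $\O{\log n}\cdot\frac{H(f)}{\log n}=\O{H(f)}$ (and trivially $\O{\log n}$). Combining, the expected query time under the $(\alpha,\beta)$-noisy oracle is $\O{\min(H(f),\log n)}+\O{1}$, which matches the perfect-oracle bound of \thmref{thm:kd:query:time} up to an additive constant, as claimed; by \corref{cor:kd:zipf:time} it remains $\O{1}$ on Zipfian query sets with exponent $s>1$.

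The main obstacle I anticipate is the first step: checking that the per-element depth bound extracted from the proof of \thmref{thm:kd:query:time} is genuinely a function of $p_i$ alone --- the quantity the construction in Algorithm~\ref{alg:learned:kdtree} actually uses --- and that the $\O{\log n}$ worst-case depth really applies to \emph{every} query, including high-frequency queries whose true mass the oracle severely underestimates (these may be demoted from $\texttt{queries}_f$ into the balanced low-probability subtree, and one must confirm this only ever helps the query time). The secondary subtlety is the $\bar S$ term: a naive treatment bounds it only by $\O{\log n}$, which would dominate $\min(H(f),\log n)$ when $H(f)$ is small, so the entropy-reweighting step above is what makes the ``additive constant'' conclusion go through.
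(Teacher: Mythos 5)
Your proposal is correct, and its core step is the same as the paper's: use $\beta\le\frac{\alpha}{n^2}$ together with a lower bound on the relevant $f_i$ to get $p_i\ge\alpha f_i-\beta\ge\frac{\alpha}{2}f_i$, hence $\log\frac{1}{p_i}\le\log\frac{1}{f_i}+\log\frac{2}{\alpha}$, so each point's depth matches the perfect-oracle depth up to the additive constant $\log\frac{2}{\alpha}$. Where you go further is in actually carrying this per-point bound through the expectation. The paper's own proof stops at the per-point statement and justifies the needed lower bound on $f_i$ by simply asserting that any point in the tree has $f_i\ge\frac{1}{n^2}$ --- which is not literally what Algorithm~\ref{alg:learned:kdtree} guarantees, since the filter is on the \emph{predicted} probability $p_i$, and it also silently uses $\beta=\frac{\alpha}{2n^2}$ rather than the stated hypothesis $\beta\le\frac{\alpha}{n^2}$. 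Your decomposition into $S=\{i:f_i\ge\frac{2}{n^2}\}$ and its complement fixes both issues: on $S$ the hypothesis $\beta\le\frac{\alpha}{n^2}\le\frac{\alpha}{2}f_i$ suffices as stated, and the entropy-reweighting bound $\sum_{i\in\bar S}f_i\le\frac{H(f)}{\log n}$ is exactly what is needed to keep the tail's contribution at $\O{\min(H(f),\log n)}$ rather than a raw $\O{\log n}$, which the paper never addresses. Your anticipated obstacle --- that a severely underestimated high-frequency query may be demoted out of $\texttt{queries}_f$ --- is also handled correctly by your observation that the $\O{\min(\log\frac{1}{p_i},\log n)}$ bound covers the demoted case, since such a point then costs $\O{\log n}$ and $\log\frac{1}{p_i}\ge 2\log n$ there. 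In short: same mechanism, but your writeup is the rigorous version of the paper's sketch.
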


\section{Empirical Evaluations}
\seclab{sec:exp}
In this section, we describe a number of empirical evaluations demonstrating the efficiency of our learning-augmented search data structures on both synthetic and real-world datasets. 
We provide additional experiments in \appref{sec:more:exp}.

\textbf{Skip lists on CAIDA dataset.}
In the CAIDA datasets \cite{caida2016dataset}, the receiver IP addresses from one minute of the internet flow data are extracted for testing, which contains over 650k unique IP addresses of the 30 million queries. 
Given that the log-log plot of the frequency of all nodes in the CAIDA datasets follows approximately a straight line in Figure \ref{fig:caida_dist}, the CAIDA datasets can be approximately characterized by an $\alpha$ factor of 1.37. 
The insertion time is similar between classic and augmented skip lists, while Figure \ref{fig:caida_insert_query} shows that query time is almost halved when using the learning augmented skip lists at different query sizes.  
These results assume that the predicted frequency of all items in the query stream is accurate, i.e., the probability vector that is used to build the skip list matches exactly the query stream. 
The speed-up between the query times of the largest learning-augmented and the oblivious skip lists in Figure \ref{fig:caida_insert_query} is roughly $1.86\times$, which is surprisingly and perhaps coincidentally close to our theoretical speed-up of roughly $1.81\times$ on a Zipfian dataset with exponent $1.37$. 

\begin{figure}[!htb]
    \centering
    \begin{subfigure}{0.4\textwidth}
        \includegraphics[width=\linewidth]{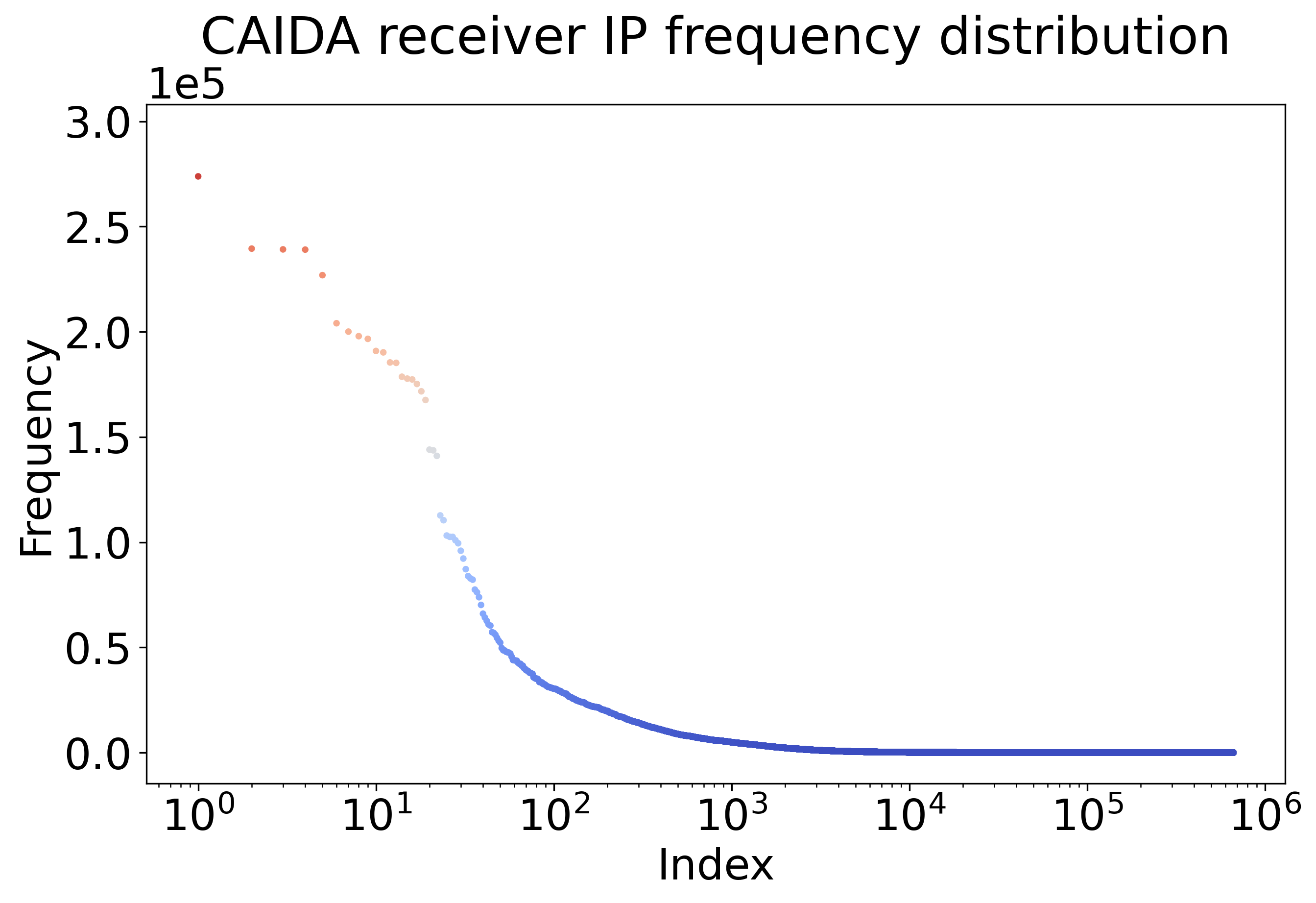}
        \caption{CAIDA data distribution}
        \figlab{fig:sub3a}
    \end{subfigure}
    \hspace{0.25cm} 
    \begin{subfigure}{0.4\textwidth}
        \includegraphics[width=\linewidth]{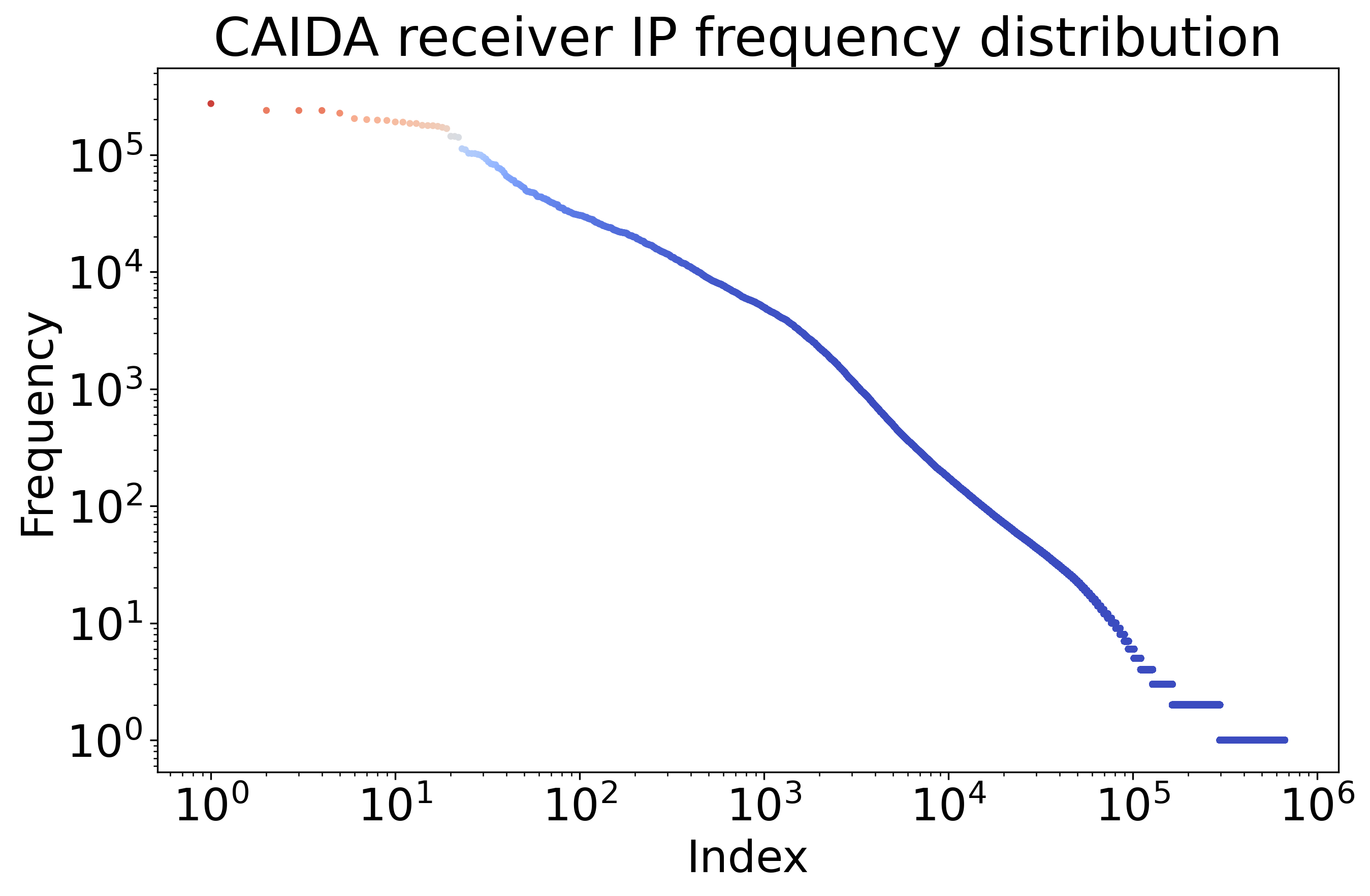}
        \caption{Zipfian fit ($\alpha=1.37$)}
        \figlab{fig:sub3b}
    \end{subfigure}

    \caption{CAIDA datasets distribution characterization in \figref{fig:sub3a}. The nearly straight-fitted curve in \figref{fig:sub3b} implies that a Zipfian distribution with $\alpha=1.37$ is a good fit to the CAIDA dataset distribution.}
    \vspace{-0.1in}
    \label{fig:caida_dist}
\end{figure}

\begin{figure}[!htb]
    \centering
    \vspace{-0.05in}
    \begin{subfigure}{0.45\textwidth}
        \includegraphics[width=\linewidth]{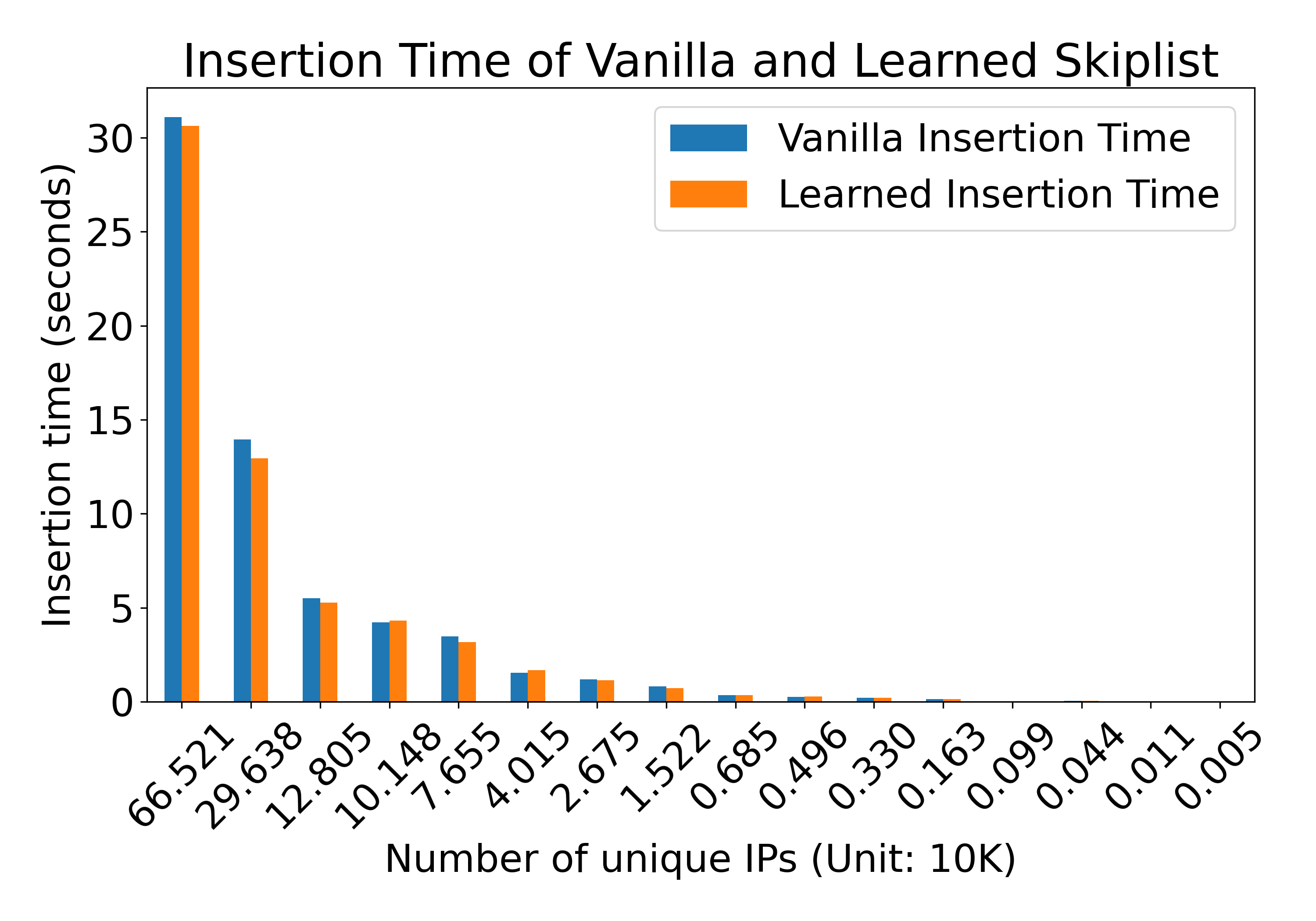}
        \caption{Insert time on CAIDA}
        \label{fig:sub4a}
    \end{subfigure}
    \hspace{0.5cm} 
    \begin{subfigure}{0.45\textwidth}
        \includegraphics[width=\linewidth]{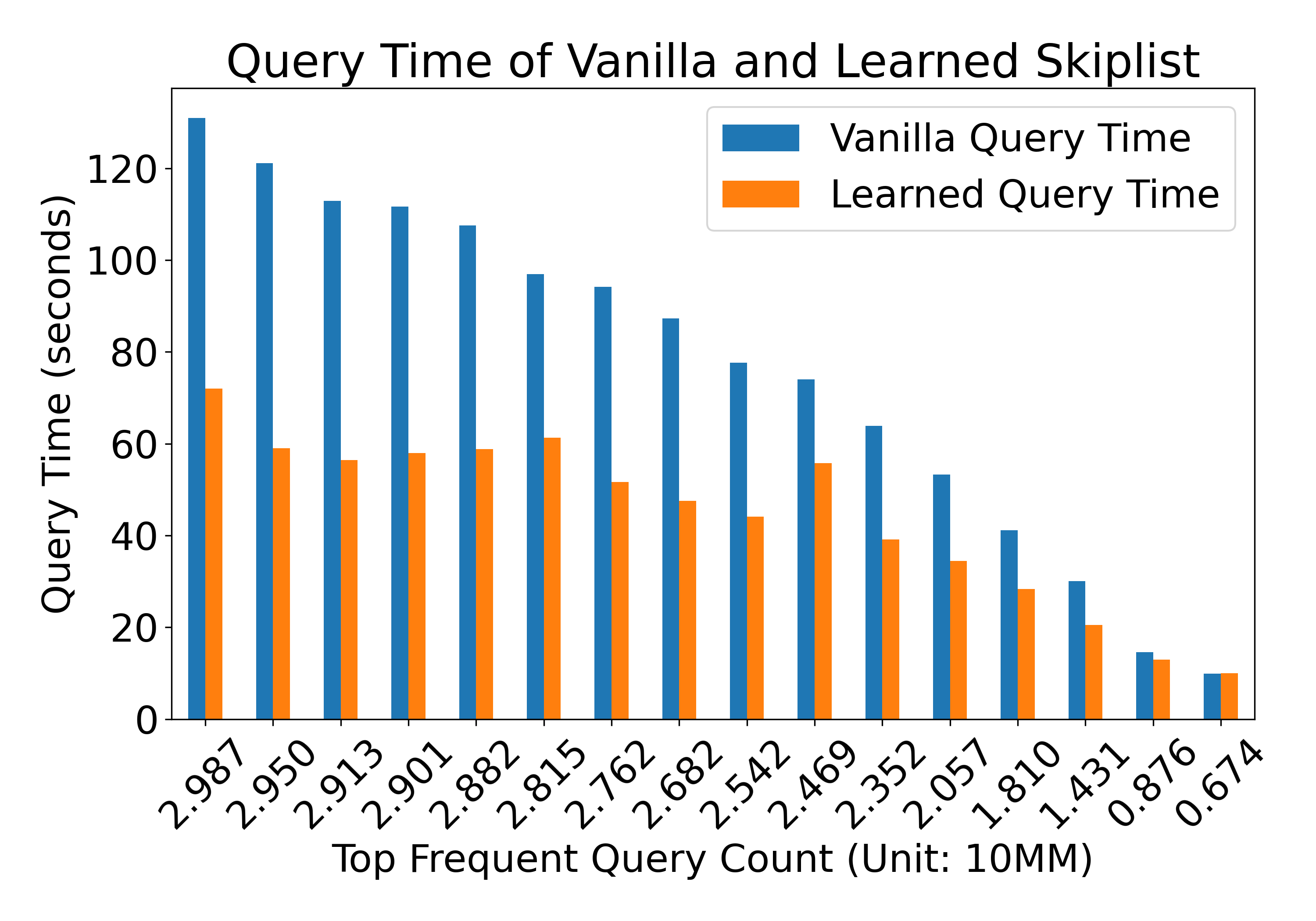}
        \caption{Query time on CAIDA}
        \label{fig:sub4b}
    \end{subfigure}

    \caption{Comparison of insertion and query time on CAIDA for classic and learning-augmented skip lists. This figure compares the insertion and query times under varying numbers of top frequently accessed unique IPs between classic and augmented implementations. The horizontal axis in the two subfigures depicts the same scheme of IP selection, represented in two different ways, e.g., the top 29.9 million queries contain 665210 unique IPs, the next 29.5 million queries comprise 296384 unique IPs, etc.}
    \label{fig:caida_insert_query}
\end{figure}

Next, we demonstrate that our proposed algorithm still manages to outperform the classic skip list even when temporal change exists in the probability vector by comparing the query time for the same set of query elements with different probability vectors being used to guide the building of the structure. 
For the skip list augmented by a noisy probability vector, the probability vector of elements during a period of T1 is used as the predicted frequencies. 
The skip list being augmented by this probability vector has its own set of elements to be organized into the target skip list. 
Suppose the historic data from T1 contains a set of elements S1, and some future query stream contains a set of elements S2. 
For each element in our target set S2, if the element is present in S1, then the occurrence probability of this element from S1 will be used to build S2; otherwise, if the element has not shown up during T1 (i.e., in S1), then we assume its probability to be 0. After this, the probability vector is normalized to sum to 1, resulting in a predicted probability vector to be used to build a skip list based on the historic element frequency. 
Since there is temporal changes in the frequency of elements being queried, the predicted probability vector will show a discrepancy with the true probability vector. 
The results are presented in Figure \ref{fig:oracle_test_results} and we show that even when the prediction is not perfect, the augmented skip list still performs better than a conventional skip list. 

\begin{figure}[!htb]
    \centering
    \begin{subfigure}{0.4\textwidth}
        \includegraphics[width=\linewidth]{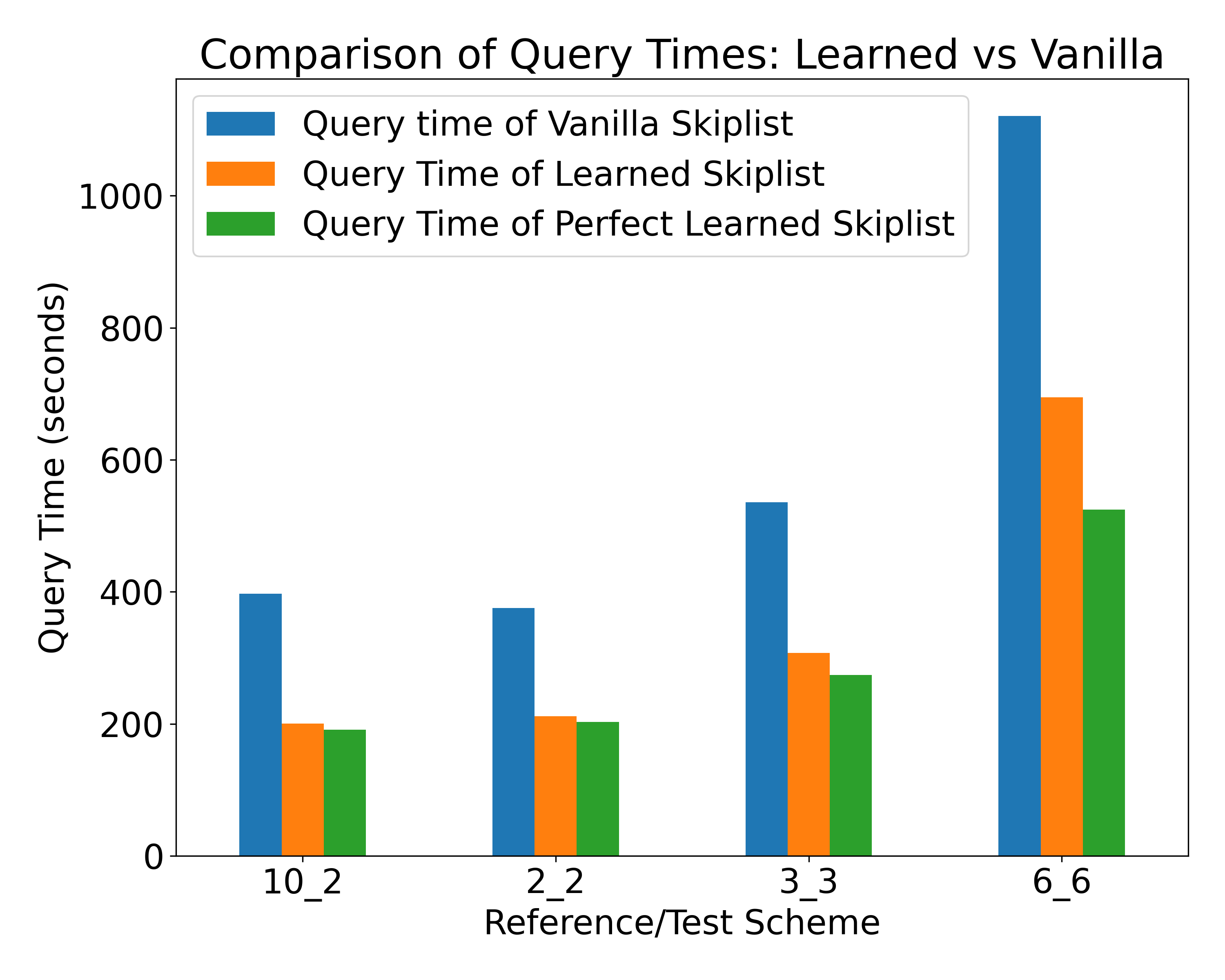}
        \caption{Robustness test on CAIDA datasets}
        \label{fig:oracle_test_results}
    \end{subfigure}
    \hspace{0.25cm} 
    \begin{subfigure}{0.4\textwidth}
        \includegraphics[width=\linewidth]{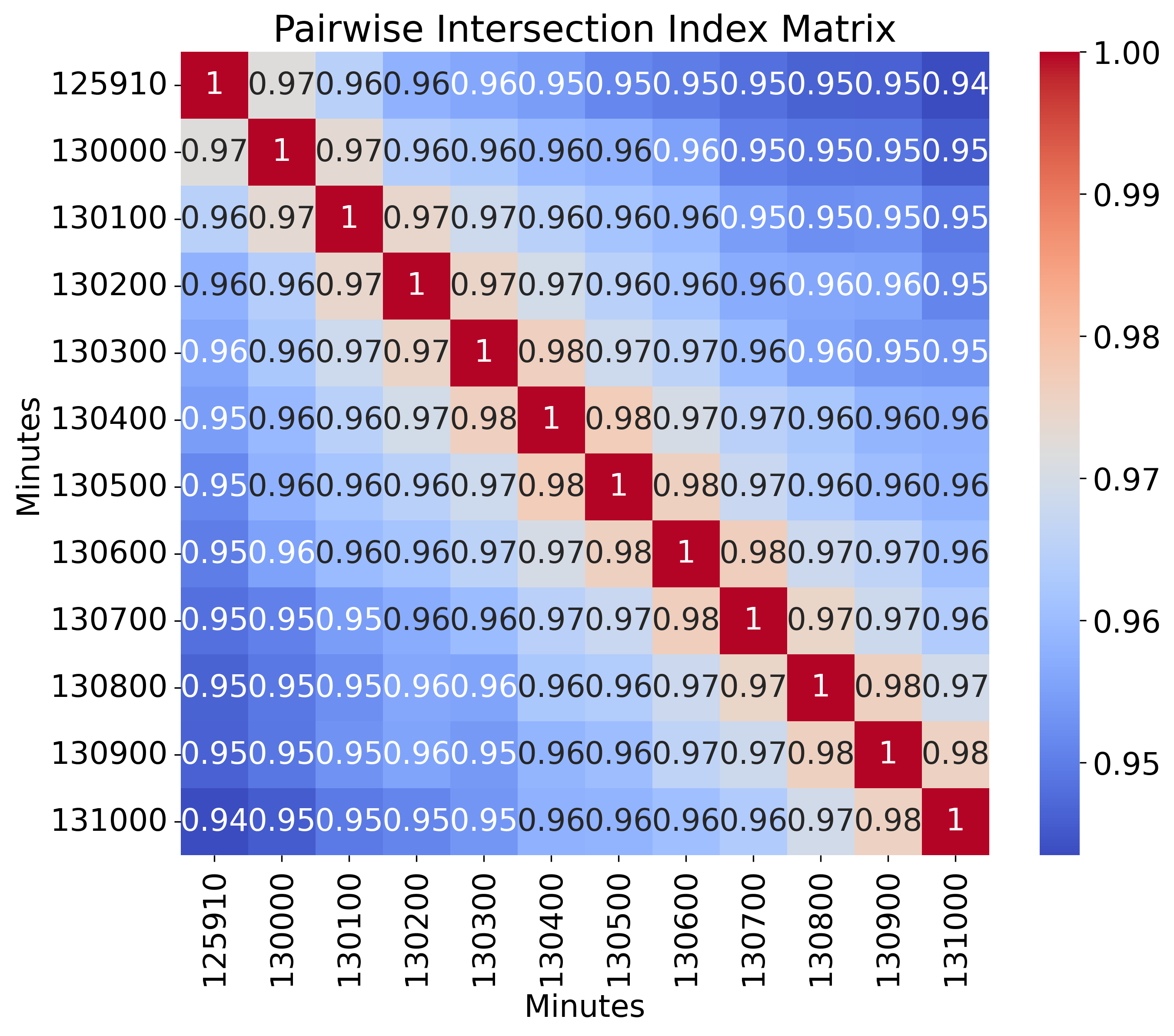}
        \caption{Oracle credibility on CAIDA datasets}
        \label{fig:intersection_index}
    \end{subfigure}

    \caption{Robustness of our learning-augmented skip list to erroneous oracles. In Figure~\ref{fig:intersection_index}, the labels on the axis indicate the time stamp that the internet trace data is collected, e.g., 130100 means the collection starts at 13:01:00 and lasts for 1 minute.} 
    \label{fig:caida_dist_b}
\end{figure}

Figure \ref{fig:oracle_test_results} shows that the skip list with perfect learning shows the best performance, while the skip list augmented with noisy learning performs very close to the scenario with perfect predictions. 
Moreover, the closer the test data is to the reference data chronologically, the closer the noisy-augmented skip list will perform to the perfect learning skip list. 
The CAIDA datasets used in this study contain 12 minutes of internet flow data, which totals around 444 million queries. The indices on the x-axis in Figure \ref{fig:oracle_test_results} means: 
\begin{itemize}
\vspace{-0.1in}
\itemsep0em
    \item 10\_2: the first 10 minutes of data are used to create the reference (i.e., oracle) and the last 2 minutes are used to build and test the total query time using the former as reference. 
    \item 2\_2: the 9th and 10th minutes data is used as reference and the last 2 minutes are used for testing. 
    \item 3\_3: the 1st, 2nd and 3rd minutes of data are used to create reference and the 4th, 5th and 6th minutes of data are used for testing. 
    \item 6\_6: the first 6 minutes are used to create the reference and the last 6 minutes are used for testing. \
\vspace{-0.1in}
\end{itemize}

Further analysis of the temporal change of item frequency shows the reason behind the good performance of the history-based oracle. Figure \ref{fig:intersection_index} shows the change of intersection index between any 2 given minutes among the 12 minutes of CAIDA data. The intersection index is defined as the ratio of the number of shared queries to the total number of queries of any given 2 minutes of queries. Figure \ref{fig:intersection_index} shows that the number of intersects queries has decreased by about 6\% after 12 minutes, which indicates that the probability of the majority of the elements will be predicted with good accuracy, resulting in good oracle performance. 

\begin{figure}[!htb]
    \centering \includegraphics[width=\textwidth]{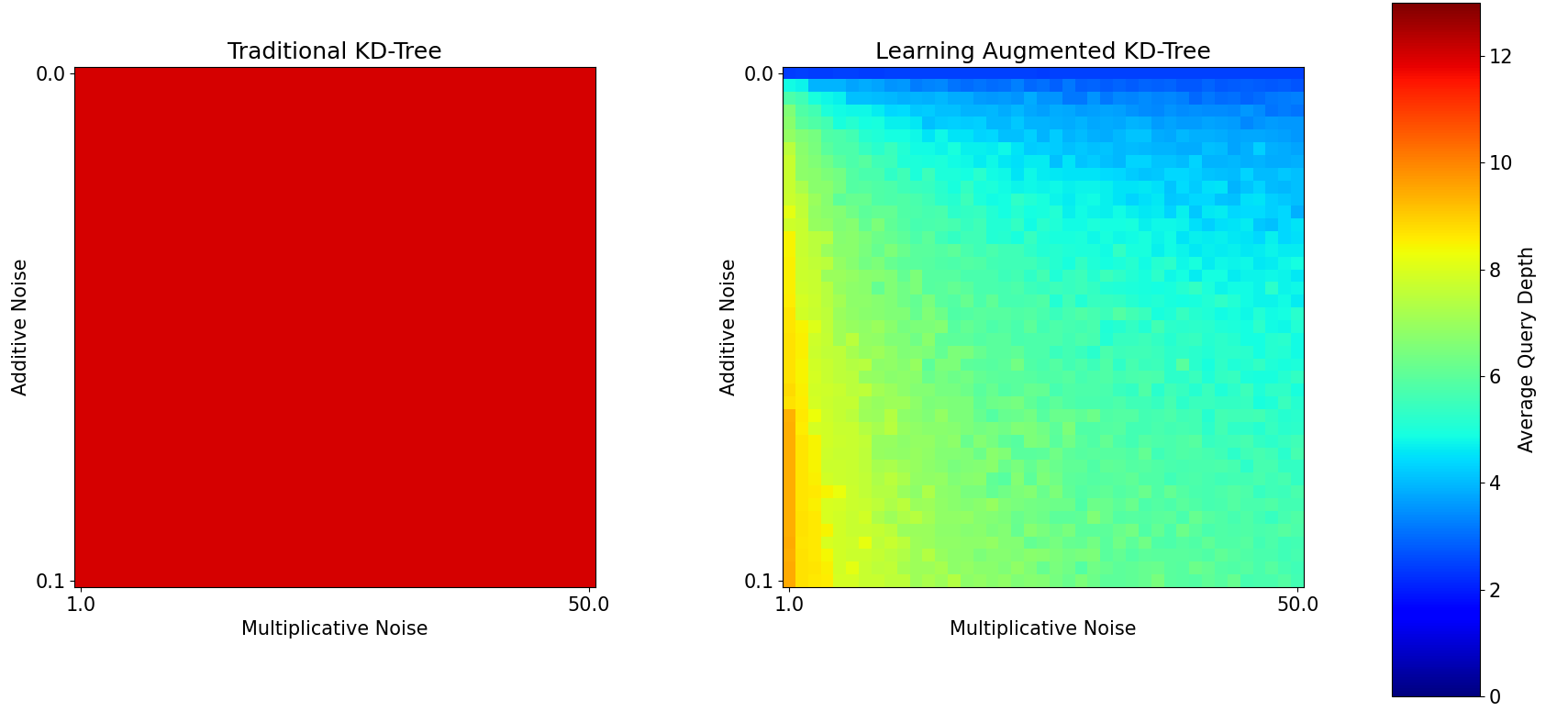}
    \vspace{-0.3in}
    \caption{Query time comparison for standard and learning-augmented KD trees with various noise.}
    \vspace{-0.25in}
    \label{fig:tradVSlearnedNoisy}
\end{figure}

\textbf{KD trees on synthetic datasets.}
KD Trees are commonly used in the field of computer graphics, with applications in collision detection, ray-tracing, and reconstruction. 
We first generate datasets of $2^{12}$ points in 3-dimensional space, with frequencies given by a fixed Zipfian distribution with parameters $a=5,b=2$ -- parameters at which our method greatly outperforms a standard KD tree. 
In order to simulate constructing the tree on noisy data, we multiply the ground truth query probabilities by numbers sampled uniformly from 1 to $M$, and then add numbers uniformly sampled from $0$ to $A$, before renormalizing to form a valid probability distribution. 
We query the tree $2^{14}$ times, with point queries selected by the ground truth Zipfian distribution. 
We repeat this process 32 times, and report the median of the average query depth across all runs in Figure~\ref{fig:tradVSlearnedNoisy}.  
We find that our method continues to outperform traditional KD trees under moderate amounts of noise, and at worst, performs on-par with a traditional KD tree.

Next, we generate datasets of $2^{12}$ points in 3-dimensional space, with frequencies given by a Zipfian distribution with parameters $a,b$. In the left plot, we assign these Zipfian weights randomly. In the right plot, however, we assign Zipfian weights with ranks decreasing with the distance to some random data point. We then query the tree $2^{14}$ times, with point queries selected by the same Zipfian distribution. We repeat this process 32 times, and report the median of the average query depth across all runs. We find that, when points frequencies are distributed smoothly over space, our method's performance increases on less skew distributions, as seen in this Figure~\ref{fig:coherentVSrandom}. 

\begin{figure}[!htb]
    \centering 
    \vspace{-0.15in}
    \includegraphics[width=\textwidth]{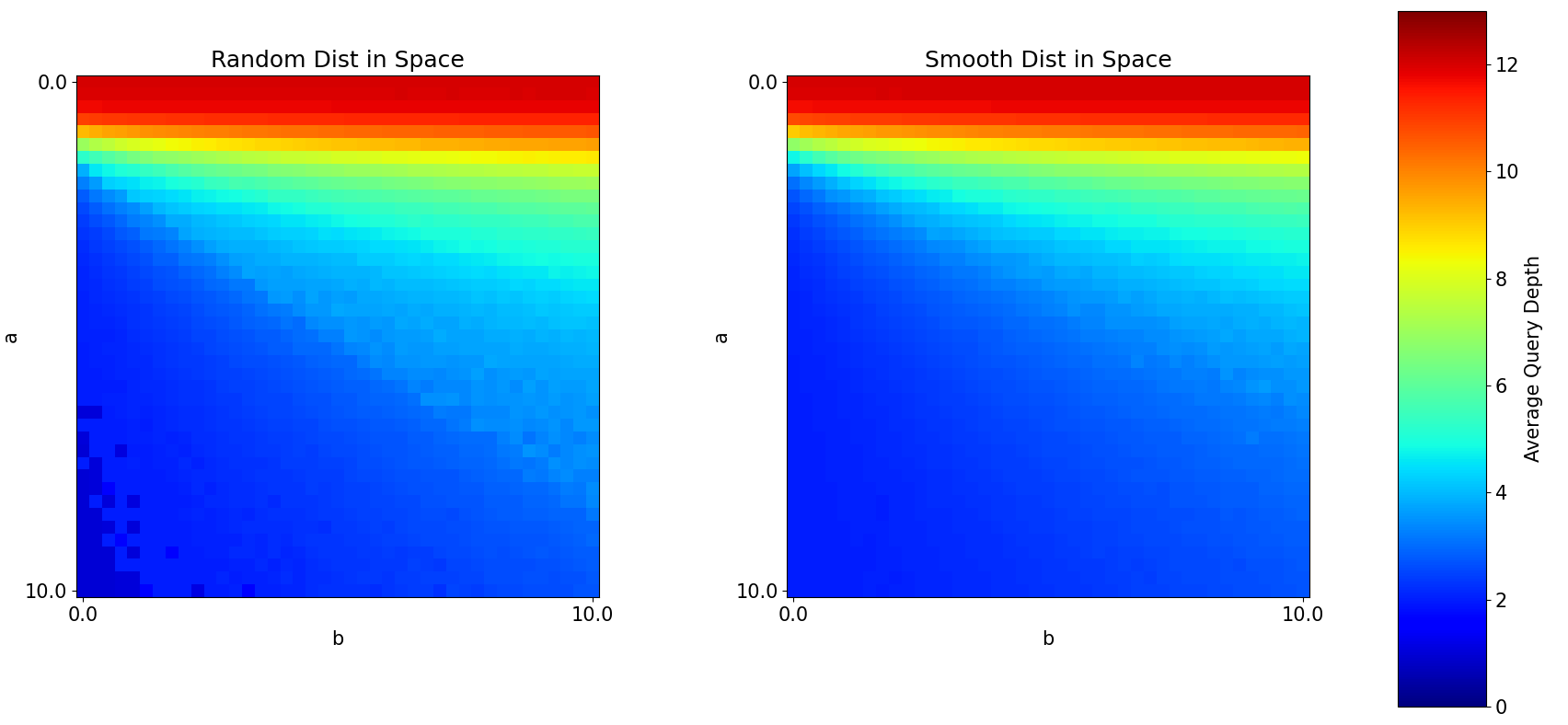}
    \vspace{-0.3in}
    \caption{Comparison of query time on learning-augmented KD trees with and without smooth spatial distribution across various Zipfian parameters}
    \label{fig:coherentVSrandom}
    \vspace{-0.15in}
\end{figure}

\textbf{KD trees on 3D point-cloud datasets.}
Finally, we evaluate our method on point cloud data generated from the Stanford Lucy mesh~\cite{StanfordLucy}, with dimensions $\sim1000\times500\times1500$. We first uniformly sample $2^{22}$ points along the mesh surface, and bin points with resolution 10, and assign lookup frequencies by the number of bin occupants. This results in 32k bins. Note, the resulting frequency distribution for binned cells is not highly skewed.

We then generate a new set of $2^{16}$ surface samples on the mesh, binning them and assigning frequencies in the same way. When looking up with the new samples, our method yields an average query depth of 15.1, while a traditional KD tree yields an average lookup depth of 17.6.

\section*{Acknowledgements}
Samson Zhou is supported in part by NSF CCF-2335411. 
The work was conducted in part while Samson Zhou was visiting the Simons Institute for the Theory of Computing as part of the Sublinear Algorithms program.

\def\shortbib{0}
\bibliographystyle{alpha}
\bibliography{references}

\appendix

\section{Additional Related Works}
In this section, we discuss a number of related works in addition to those mentioned in \secref{sec:intro}. 
This paper builds upon the increasing body of research in learning-augmented algorithms, data-driven algorithms, and algorithms with predictions. 
For example, learning-augmented algorithms have been applied to a number of problems in the online setting, where the input arrives sequentially and the goal is to achieve algorithmic performance competitive with the best solution in hindsight, i.e., an algorithm that has the complete input on hand. 
Among the applications in the online model, learning-augmented algorithms have been developed for ski rental problem and job scheduling~\cite{PurohitSK18}, caching~\cite{LykourisV21}, and matching~\cite{AntoniadisGKK23}. 
Learning-augmented algorithms have also been used to improve the performance of specific data structures such as Bloom filters~\cite{Mitzenmacher18}, index structures~\cite{KraskaBCDP18}, CountMin and CountSketch~\cite{HsuIKV19}. 
Specifically, \cite{KraskaBCDP18} proposes substituting B-Trees (or other index structures) with trained models for querying databases. 
In their approach, rather than traversing the B-Tree to locate a record, they use a neural network to directly identify its position. 
Our work differs in that we retain the desired data structures, i.e., skip lists and kd trees, and focus on optimizing their structures to enable faster queries, which allows us to continue supporting standard operations specific to the data structures such as traversal, order statistics, merging, and joining, among others. 
Our work uses the frequency estimation oracle trained in \cite{HsuIKV19} on the AOL search query dataset and the CAIDA IP traffic monitoring dataset. 

Perhaps the works most closely related to ours in the area of learning-augmented algorithms are those of \cite{LinLW22,cao2023learningaugmented,ZeynaliKH24}. 
\cite{LinLW22} noted that traditional theory on statically optimal binary search trees~\cite{Knuth71,Mehlhorn77} is no longer applicable in dynamic settings, where elements are added incrementally over time. 
Hence, they developed learning-augmented binary search trees (BSTs) and showed that their expected search time is near-optimal. 
\cite{cao2023learningaugmented} then extended these techniques to general search trees, allowing for nodes with more than two children. 
\cite{cao2023learningaugmented} also studied the setting where the predictions may be updated, while ultimately still utilizing a data structure that requires rebalancing as data is dynamically changing. 
\cite{ZeynaliKH24} also consider the performance of learning-augmented skip lists that are robust to erroneous predictions; we elaborate more on the differences from \cite{ZeynaliKH24} in \secref{sec:contributions}. 
We also note that none of these works consider KD trees at all, which is an important data structure with applications in computer vision and computational geometry, thus forming a basis of our work. 
For a more comprehensive source of related works in learning-augmented algorithms, see \url{https://algorithms-with-predictions.github.io/}. 

Beyond the context of learning-augmented algorithms, there is a large body of works that study design of data structures that are optimal for their inputs. 
For example, while standard binary search trees use $\O{\log n}$ query time, optimal static trees can be constructed using dynamic programming or efficient greedy algorithms~\cite{Mehlhorn77,yao1982speed,KarpinskiLR96}, given access frequencies. 
However, the computational cost of these methods often exceeds the cost of directly querying the tree. 
As a result, a key objective is to construct a tree whose cost is within a constant factor of the entropy of the data. 
Several approaches have achieved this either for worst-case data~\cite{Mehlhorn77} or when the input follows particular distributions~\cite{AllenM78}. 

More recent works have considered using results from learning theory to estimate the query frequencies, rather than assuming explicit access to their values. 
For example, \cite{CaytonD07} studied how to obtain such an oracle for learning-augmented data structures. 
In particular, they study generalization bounds in the context of learning theory, analyzing the number of samples from an underlying distribution necessary to produce an oracle with a small error rate. 
On the other hand, \cite{AilonCCLMS11} studied algorithms for sorting and clustering that can improve their expected performance given access to multiple instances sampled from a fixed distribution. 
Although the high-level goal of improving algorithmic performance using auxiliary information is the same as ours, the specifics of the paper seem quite different than ours, as the paper focuses on techniques for sorting and clustering. 
Similarly, \cite{CirianiFLM02} considers self-adjusting data structures, including skip lists, which can dynamically change as the sequence of queries arrive. 
However, their methods are catered specifically to the setting where there is access to the queries, whereas our data structures must be constructed without such access and must therefore be able to handle erroneous predictions. 
Finally, we remark that utilizing techniques for learning theory, there are standard results about the learnability of oracles for the purposes of learning-augmented algorithms~\cite{ErgunFSWZ22,IzzoSZ21}. 

\section{Missing Proofs from \texorpdfstring{\secref{sec:skip:lists}}{Section 2}}
In this section, we give the missing proofs from \secref{sec:skip:lists}. 

\subsection{Expected Search Time}
We first show that each item is promoted to a higher level with probability at least $\frac{1}{2}$. 
\begin{lemma}
\lemlab{lem:prob:promo}
For each item $i\in[n]$ at level $\ell$, the probability that $i$ is in level $\ell+1$ is at least $\frac{1}{2}$. 
\end{lemma}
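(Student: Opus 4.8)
The plan is a short case analysis driven entirely by the promotion rule of \algref{alg:aug:skip:list}. First I would recall how the algorithm decides whether to place an item into level $\ell+1$: it inserts $i$ into level $\ell+1$ deterministically whenever the predicted frequency satisfies $p_i \ge \frac{2^{\ell}}{n}$ (this is the threshold condition $p_i \ge \frac{2^{(\ell+1)-1}}{n}$ instantiated at level $\ell+1$), and otherwise, if $i$ already lies in level $\ell$, it inserts $i$ into level $\ell+1$ by flipping an independent fair coin. Since the hypothesis of the lemma is exactly that $i$ is present at level $\ell$, one of these two branches necessarily governs the fate of $i$ at level $\ell+1$.

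The argument then splits into two cases. In the first case, $p_i \ge \frac{2^{\ell}}{n}$, and the deterministic branch applies, so $i$ is promoted to level $\ell+1$ with probability $1 \ge \frac{1}{2}$. In the second case, $p_i < \frac{2^{\ell}}{n}$, and because $i$ is in level $\ell$ the algorithm promotes $i$ to level $\ell+1$ with probability exactly $\frac{1}{2}$. In both cases the promotion probability is at least $\frac{1}{2}$, which is the claim.

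There is essentially no obstacle: the lemma is a direct unpacking of the definition, and the only point worth stating carefully is that the Bernoulli$(\tfrac12)$ random variable used for $i$ at level $\ell+1$ is fresh and independent of the randomness determining the event ``$i$ is in level $\ell$'', which is immediate from the per-item, per-level coin flips in \algref{alg:aug:skip:list}. I would keep the write-up to these few lines, since the real work lies downstream: this lemma is the base fact used to show that each item reaches level $\max(0, 1+\flr{\log(np_i)})$, that the skip list has $\O{\log n}$ levels with high probability, and hence that item $i$ has expected search time $2C + 2\min\!\left(\log\frac{1}{p_i}, \log n\right)$ in \thmref{thm:skip:list}.
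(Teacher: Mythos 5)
Your proof is correct and follows exactly the same two-case argument as the paper: deterministic promotion when $p_i \ge \frac{2^\ell}{n}$, and a fair coin flip otherwise, conditioned on $i$ being at level $\ell$. Nothing further is needed.
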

\begin{proof}
Note that if $p_i\ge\frac{2^\ell}{n}$, then $i$ will be placed in level $\ell+1$. 
Otherwise, conditioned on the item $i\in[n]$ being at level $\ell$, then \algref{alg:aug:skip:list} places $i$ at level $\ell+1$ with probability $\frac{1}{2}$. 
Thus, the probability that $i$ is in level $\ell+1$ is at least $\frac{1}{2}$. 
\end{proof}
\noindent
We next upper bound the expected search time for any item at any fixed level, where the randomness is over the construction of the skip list.  
\begin{lemma}
\lemlab{lem:search:time}
In expectation, the search time for item $i\in[n]$ at level $\ell$ is at most $2$. 
\end{lemma}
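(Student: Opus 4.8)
The plan is to adapt the classical backward (reverse-path) analysis of skip-list search, replacing the usual fixed promotion probability with the $\tfrac12$ lower bound supplied by \lemref{lem:prob:promo}. Fix the item $i$ and a level $\ell$, and trace the search path for $i$ backwards: start where the search ends (at the node for $i$) and follow it up to the head of the topmost level. At each node along this reverse path exactly one of two things happens — if the current node also appears one level higher, the path moves up a level; otherwise the forward search must have reached this node by a rightward step within the current level, so the path moves left. Hence, among the level-$\ell$ nodes visited by the reverse path, every node except the one from which it eventually ascends to level $\ell+1$ is \emph{absent} from level $\ell+1$. Writing $K_\ell$ for the number of leftward steps taken at level $\ell$, the work charged to level $\ell$ is at most $K_\ell+1$, the extra $1$ accounting for the single vertical transition out of level $\ell$.

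The next step is a geometric tail bound on $K_\ell$. Each leftward step at level $\ell$ is taken from a node that lies at level $\ell$ but not at level $\ell+1$, and by \lemref{lem:prob:promo}, conditioned on a node lying at level $\ell$, its probability of failing to be promoted to level $\ell+1$ is at most $\tfrac12$; moreover these promotion events are governed by independent fair coins (and are deterministic for items with large predicted frequency, which only helps). Exposing these coins one at a time as the reverse walk visits successive level-$\ell$ nodes therefore yields $\Pr[K_\ell \ge j] \le 2^{-j}$ for all $j \ge 0$, so $\Ex{K_\ell} \le \sum_{j\ge 1} 2^{-j} = 1$, and the expected work at level $\ell$ is $\Ex{K_\ell} + 1 \le 2$.

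The step that needs the most care is the independence claim, since the set of level-$\ell$ nodes, their order, and hence the reverse walk's trajectory at level $\ell$ are themselves random: I would fix the entire construction through level $\ell$ and only then reveal the level-$(\ell+1)$ promotions node by node along the walk, which keeps them independent with failure probability at most $\tfrac12$ conditioned on the fixed lower structure — all that the tail bound requires. The one genuine exception is the topmost level, where there is no level $\ell+1$ to ascend to and the argument does not apply directly; there the number of leftward steps is bounded by the number of nodes on the top level, which is $\O{\log n}$ with high probability by the separate bound on the total number of levels and is absorbed into the additive constant in \thmref{thm:skip:list}.
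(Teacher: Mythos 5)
Your proposal is correct and follows essentially the same route as the paper: both arguments bound the number of horizontal steps at level $\ell$ by observing that $t$ steps require $t$ consecutive level-$\ell$ predecessors of $i$ to all fail promotion to level $\ell+1$, which by \lemref{lem:prob:promo} happens with probability at most $2^{-t}$, giving $\sum_{t\ge 1} t\cdot 2^{-t}\le 2$. Your version is somewhat more careful than the paper's on two points it leaves implicit — the order in which randomness is revealed to justify independence, and the topmost level where no promotion event exists — but the underlying argument is the same.
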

\begin{proof}
Suppose item $i\in[n]$ is in level $\ell$. 
Let $S^{\ell}_{<i}\subseteq[n]$ be the subset of items in level $\ell$ that are less than $i$. 
Note that by \lemref{lem:prob:promo}, each item of $S^{\ell}_{\le i}$ is promoted to level $\ell+1$ with probability at least $\frac{1}{2}$. 
Thus, the search time for item $i$ at level $\ell$ is $t$ if and only if the previous $t$ items in $S^{\ell}_{\le i}$ were all not promoted, which can only happen with probability at most $\frac{1}{2^t}$. 
Hence, the expected search time $T$ for item $i\in[n]$ at level $\ell$ is at most
\[\Ex{T}\le1\cdot\frac{1}{2}+2\cdot\frac{1}{2^2}+\ldots+n\cdot\frac{1}{2^n}\le\sum_{t=1}^\infty\frac{t}{2^t}\le 2.\]
\end{proof}
We now show that each item $i$ must be contained at some level depending on the predicted frequency $p_i$ of the item. 
\begin{lemma}
\lemlab{lem:set:level}
Each item $i$ is included in level $\max\left(0,1+\flr{\log(np_i)}\right)$.
\end{lemma}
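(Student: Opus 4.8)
The plan is to show by a short induction on the level index that the deterministic promotion rule in \algref{alg:aug:skip:list} pushes item $i$ all the way up to the claimed level. First I would unpack the promotion condition: item $i$ is deterministically inserted into level $\ell$ exactly when $p_i \ge \frac{2^{\ell-1}}{n}$, and rearranging gives $2^{\ell-1} \le n p_i$, i.e. $\ell - 1 \le \log(n p_i)$, i.e. $\ell \le 1 + \log(n p_i)$. Since $\ell$ ranges over integers and $1$ is an integer, the largest level at which this deterministic insertion is triggered is $\ell^{*} := 1 + \flr{\log(n p_i)}$.

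Next I would carry out the induction to conclude that $i$ actually occupies every level up to $\ell^{*}$. The base case is level $0$, where every item is placed by the first line of the algorithm. For the inductive step, suppose $i$ sits at level $\ell - 1$ for some $\ell$ with $1 \le \ell \le \ell^{*}$. Because $i$ occupies level $\ell-1$, that level is nonempty, so the algorithm does not terminate at the ``there are no items at level $\ell-1$'' check and therefore processes level $\ell$; and since $\ell \le \ell^{*} \le 1 + \log(n p_i)$ we get $p_i \ge \frac{2^{\ell-1}}{n}$, so $i$ is inserted into level $\ell$. Hence $i$ appears at every level $0, 1, \dots, \ell^{*}$.

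Finally I would reconcile this with the $\max(0,\cdot)$ in the statement. If $\ell^{*} = 1 + \flr{\log(n p_i)} \ge 0$, then taking $\ell = \ell^{*}$ in the conclusion above shows $i$ is included in level $\ell^{*} = \max(0, 1 + \flr{\log(n p_i)})$. If instead $\ell^{*} < 0$ (equivalently $p_i < \tfrac{1}{2n}$), the claimed level is $0$, and $i$ lies at level $0$ by the base case. Either way $i$ is included in level $\max(0, 1 + \flr{\log(n p_i)})$.

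I do not expect a genuine obstacle here; the one point worth stating carefully is that the inductive step must certify that the algorithm's for-loop actually reaches level $\ell$ before we can invoke the promotion rule. This is precisely why the induction hypothesis is phrased as ``$i$ is at level $\ell-1$'' rather than merely ``$p_i \ge \frac{2^{\ell-2}}{n}$'', so that nonemptiness of the previous level — and hence continuation of the loop — comes for free.
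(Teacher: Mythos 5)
Your proposal is correct and follows essentially the same route as the paper's proof: note that all items start at level $0$, rewrite the deterministic promotion condition $p_i\ge\frac{2^{\ell-1}}{n}$ as $\ell\le 1+\log(np_i)$, and conclude the item reaches level $\max\left(0,1+\flr{\log(np_i)}\right)$. The explicit induction confirming that the loop actually reaches each level is a careful elaboration of a step the paper leaves implicit, not a different argument.
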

\begin{proof}
First, observe that all items are inserted at level $0$. 
Next, note that \algref{alg:aug:skip:list} inserts item $i$ into level $\ell$ if $p_i\ge\frac{2^{\ell-1}}{n}$ or equivalently $\log(np_i)\ge\ell-1$. 
Thus, each item $i$ is included in level $\max\left(0,1+\flr{\log(np_i)}\right)$. 
\end{proof}
We next analyze the expected search time for each item $i$. 
\begin{lemma}
\lemlab{lem:time:each:item}
Suppose the total number of levels is at most $C+\log n$ for some constant $C>0$. 
Then the expected search time for item $i$ is at most $2C+2\min\left(\log\frac{1}{p_i},\log n\right)$. 
\end{lemma}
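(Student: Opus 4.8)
The plan is to combine the two facts already established: that each item $i$ lives at level $L_i := \max\left(0, 1 + \flr{\log(np_i)}\right)$ (Lemma~\ref{lem:set:level}), and that the expected search time spent at any single level is at most $2$ (Lemma~\ref{lem:search:time}). A search for item $i$ starts at the topmost level of the skip list and descends level by level until reaching $i$, which it can only reach once it is at a level containing $i$ — that is, at level $L_i$ or below. So the search visits at most (topmost level) $- L_i + 1$ levels before locating $i$, and at each such level it incurs expected cost at most $2$ by Lemma~\ref{lem:search:time}.

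First I would fix the topmost level of the skip list; by hypothesis it is at most $C + \log n$. Hence the number of levels traversed during the search for $i$ is at most $(C + \log n) - L_i + 1$. Multiplying by the per-level bound of $2$ and using linearity of expectation over the levels, the expected search time for $i$ is at most $2\left((C + \log n) - L_i + 1\right)$. Then I would plug in $L_i = \max\left(0, 1 + \flr{\log(np_i)}\right)$ and simplify. When $np_i \ge 1$, we have $L_i = 1 + \flr{\log(np_i)} \ge \log(np_i)$, so $(C + \log n) - L_i + 1 \le C + \log n - \log(np_i) + 1 = C + 1 + \log\frac{1}{p_i}$; absorbing the additive constants gives a bound of the form $2C' + 2\log\frac{1}{p_i}$. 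When $np_i < 1$, we have $L_i = 0$, so the bound is $2(C + \log n + 1) = 2C' + 2\log n$; since also $\log\frac{1}{p_i} > \log n$ in this regime, writing $\min\left(\log\frac{1}{p_i}, \log n\right)$ uniformly covers both cases. A small amount of care with the constant (whether it is $C$ or $C+1$ etc.) is the only bookkeeping needed, and the statement allows the constant $C$ to be adjusted.

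The main obstacle — really just a subtlety rather than a genuine difficulty — is making the counting of traversed levels rigorous: one must argue that the search, once at level $L_i$, actually terminates upon reaching $i$ and does not continue descending, and that the per-level expected costs can be summed via linearity even though the levels the search passes through and the costs at each are correlated through the shared randomness of the construction. Linearity of expectation sidesteps the correlation issue entirely, since Lemma~\ref{lem:search:time} bounds the expected cost at each fixed level unconditionally. The termination-at-$L_i$ point follows from the standard skip-list search invariant: the search at level $\ell$ walks forward to the last node whose key is $\le$ the target, then drops down; once $\ell = L_i$ that last node is $i$ itself, and the search reports success. With that in hand, the rest is the arithmetic sketched above.
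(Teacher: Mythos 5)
Your proposal is correct and follows essentially the same route as the paper: combine Lemma~\ref{lem:set:level} (item $i$ sits at level $\max\left(0,1+\flr{\log(np_i)}\right)$) with Lemma~\ref{lem:search:time} (expected cost at most $2$ per level), multiply the per-level bound by the number of levels traversed, and split into the cases $np_i\ge 1$ and $np_i<1$ to obtain the $\min$. Your extra bookkeeping about the $+1$ and absorbing it into the constant is a minor refinement of the same argument.
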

\begin{proof}
By \lemref{lem:set:level}, item $i$ is included in level $\max\left(0,1+\flr{\log(np_i)}\right)$. 
By \lemref{lem:search:time} the expected search time at each level is at most $2$. 
Thus, in expectation, the total search time is at most $2(C+\log n-\max\left(0,1+\flr{\log(np_i)}\right))\le 2C+2\min\left(\log\frac{1}{p_i},\log n\right)$. 
\end{proof}
Finally, we analyze the expected search time across the true probability distribution $f_i$. 
\begin{lemma}
\lemlab{lem:robust:search:time}
Suppose the total number of levels is at most $C+\log n$ for some constant $C>0$. 
For each $i\in[n]$, let $f_i$ be the proportion of queries to item $i$. 
Then the expected search time at most $2C+2\sum_{i=1}^n f_i\cdot\min\left(\log\frac{1}{p_i},\log n\right)$. 
\end{lemma}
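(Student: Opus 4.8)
The plan is to simply combine the per-item bound from \lemref{lem:time:each:item} with the law of total expectation over the random query. Let $X \in [n]$ denote the (random) query, with $\Pr[X = i] = f_i$, and let $T$ be the search time. Conditioning on the identity of the query, I would write
\[
\Ex{T} = \sum_{i=1}^n f_i \cdot \Ex{T \mid X = i},
\]
where the inner expectation is over whatever randomness remains (here, only the randomness of the skip list construction, since the query is fixed once we condition). The key point is that $\Ex{T \mid X = i}$ is exactly the quantity bounded in \lemref{lem:time:each:item}.

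Next I would invoke \lemref{lem:time:each:item}, which—under the standing hypothesis that the total number of levels is at most $C + \log n$—gives $\Ex{T \mid X = i} \le 2C + 2\min\left(\log\frac{1}{p_i}, \log n\right)$ for each $i \in [n]$. Substituting this into the sum and using $\sum_{i=1}^n f_i = 1$ yields
\[
\Ex{T} \le \sum_{i=1}^n f_i\left(2C + 2\min\left(\log\tfrac{1}{p_i}, \log n\right)\right) = 2C + 2\sum_{i=1}^n f_i \cdot \min\left(\log\tfrac{1}{p_i}, \log n\right),
\]
which is the claimed bound.

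There is essentially no obstacle here: the lemma is a bookkeeping step that transfers the worst-case-per-item guarantee to an average-case guarantee under the query distribution $f$, and all the real work (the promotion probability bound \lemref{lem:prob:promo}, the per-level search time bound \lemref{lem:search:time}, the level-placement bound \lemref{lem:set:level}, and their combination in \lemref{lem:time:each:item}) has already been done. The only mild care needed is to be explicit that, after conditioning on $X = i$, the remaining expectation is over the construction randomness and that \lemref{lem:time:each:item}'s bound holds pointwise for every $i$, so no independence assumption between the query and the construction is required beyond what is already implicit. This lemma, together with the separate high-probability bound that the number of levels is $\O{\log n}$, is then what feeds into the proof of \thmref{thm:skip:list}.
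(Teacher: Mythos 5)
Your proposal is correct and follows essentially the same route as the paper: condition on the identity of the queried item, apply the per-item bound of \lemref{lem:time:each:item}, and average with weights $f_i$ using $\sum_i f_i = 1$. The only difference is that you spell out the law of total expectation more explicitly than the paper does, which is fine.
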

\begin{proof}
For each query, the probability that the query is item $i$ is $f_i$. 
Conditioned on the total number of levels being at most $C+\log n$, then by \lemref{lem:time:each:item}, the expected search time for item $i$ is at most $2C+2\max\left(\log\frac{1}{p_i},\log n\right)$. 
Thus, the expected search time at most 
\begin{align*}
2C(f_1&+\ldots+f_n)+2f_1\min\left(\log\frac{1}{p_1},\log n\right)+\ldots+2f_n\min\left(\log\frac{1}{p_n},\log n\right)\\
&=2C+2\sum_{i=1}^n f_i\min\left(\log\frac{1}{p_i},\log n\right).
\end{align*}
\end{proof}
We now show that with high probability, the total number of levels in the skip list is at most $\O{\log n}$. 
\begin{lemma}
\lemlab{lem:skip:list:levels}
With probability at least $0.99$, the total number of levels in the skip list is at most $10+\log n$. 
\end{lemma}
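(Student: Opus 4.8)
The plan is to union-bound over the levels of the skip list. The key structural observation is that the levels are nested: if an item appears at level $\ell'$ then it appears at every level below $\ell'$ (for $\ell'$ above its deterministic threshold this is because promotion at level $\ell'$ requires presence at level $\ell'-1$; below that threshold it is because $p_i\ge 2^{\ell'-1}/n$ implies $p_i\ge 2^{\ell''-1}/n$ for all $\ell''<\ell'$). Hence if some fixed level $\ell^\star$ is empty, then all levels of index $\ge\ell^\star$ are empty, and the skip list has at most $\ell^\star$ levels. So it suffices to exhibit one value $\ell^\star=\O{\log n}$ with $\PPr{\text{level }\ell^\star\ne\emptyset}\le 0.01$.

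First I would recall from \lemref{lem:set:level} that item $i$ is placed deterministically into every level up to $s_i:=\max\left(0,\,1+\flr{\log(np_i)}\right)$, and that for $\ell>s_i$ the first branch of \algref{alg:aug:skip:list} fails, so whether $i$ reaches level $\ell$ is decided by $\ell-s_i$ independent fair coins; these coins are independent of the event that $i$ reaches level $\ell-1$ (which depends only on coins at lower levels). Telescoping gives $\PPr{i\in\text{level }\ell}=2^{-(\ell-s_i)}=2^{s_i-\ell}$ for $\ell\ge s_i$, and this bound $2^{s_i-\ell}$ is trivially valid for $\ell\le s_i$ since the right-hand side is then $\ge 1$. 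A union bound over the $n$ items yields $\PPr{\text{level }\ell\ne\emptyset}\le 2^{-\ell}\sum_{i=1}^n 2^{s_i}$.

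The next step is to show $\sum_{i=1}^n 2^{s_i}\le 3n$. For an item with $np_i\ge 1$ we have $s_i=1+\flr{\log(np_i)}$, so $2^{s_i}=2\cdot 2^{\flr{\log(np_i)}}\le 2np_i$; for an item with $np_i<1$ we have $s_i=0$, so $2^{s_i}=1$. Summing and using $\sum_i p_i=1$ together with $|[n]|=n$ gives $\sum_i 2^{s_i}\le 2n\sum_i p_i+n=3n$. Taking $\ell^\star=\ceil{\log n}+9$, we get $\PPr{\text{level }\ell^\star\ne\emptyset}\le 3n\cdot 2^{-\ceil{\log n}-9}\le 3\cdot 2^{-9}<0.01$. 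On the complementary event, by the nesting observation the skip list has at most $\ell^\star=\ceil{\log n}+9\le 10+\log n$ levels, which is the claim.

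The only non-routine point — and the main obstacle — is that one must resist the temptation to bound $\PPr{i\in\text{level }\ell}$ by $2^{-\ell}$: high-predicted-frequency items are forced up to level $s_i\approx\log(np_i)$ for free, so that bound is simply false for them, and a naive union bound that ignores this would effectively double-count the deterministic height and only certify $\approx 2\log n$ levels. The weighted union bound with weights $2^{s_i}$, controlled via $\sum_i p_i=1$, is exactly what removes this spurious factor of two; everything else is floor/ceiling bookkeeping.
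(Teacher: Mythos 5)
Your proposal is correct and follows essentially the same route as the paper: both arguments bound the probability that an item reaches level $10+\log n$ by a geometric tail measured from its deterministic placement height, and both control the resulting weighted union bound (your $\sum_i 2^{s_i}$ is the paper's $\sum_\ell n_\ell 2^\ell$, grouped by level) via $\sum_i p_i=1$. Your version is somewhat more carefully written — the nesting observation and the explicit $3n$ bound tighten steps the paper treats informally — but the underlying idea is identical.
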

\begin{proof}
For each level $\ell$, let $n_\ell$ be the number of items $i\in[n]$ that are deterministically promoted to exactly level $\ell$, i.e., $p_i\in\left[\frac{2^\ell-1}{n},\frac{2^\ell}{n}\right)$. 
Note that for each fixed $i\in[n]$, the highest level it remains is a geometric random variable with parameter $\frac{1}{2}$, beyond the highest level at which it is deterministically placed. 
This is because the item is promoted to each higher level with probability $\frac{1}{2}$. 
Hence with probability $1-\frac{1}{2^k}$, $i$ is not placed at least $k$ levels above its highest deterministic placement. 
Therefore, the probability that an item at level $\ell$ is placed at level $10+\log n$ is at most $\frac{2^{\ell}}{1024n}$. 
Since no fixed $i$ will have predicted frequency more than $1$, then no item will be deterministically placed at level $2+\log n$.
Hence by a union bound over all $\ell\in[2+\log n]$, the probability that an item is placed at level $10+\log n$ is at most
\[\sum_{\ell=0}^{2+\log n}\frac{n_\ell\cdot2^{\ell}}{1024n}.\] 
On the other hand, we have $\sum_{i=1}^n p_i=1$, so that
\[\sum_{\ell=0}^{2+\log n}n_\ell\cdot2^{\ell}\le 2n.\]
Therefore, with probability at least $0.99$, the total number of levels in the skip list is at most $10+\log n$. 
\end{proof}

Thus, putting together \lemref{lem:robust:search:time} and \lemref{lem:skip:list:levels}, we get:
\thmskiplist*

\subsection{Near-Optimality}
We first recall the construction of a Huffman code, a type of variable-length code that is often used for data compression. 
The encoding for a Huffman is known to be an optimal prefix code and can be represented by a binary tree, which we call the Huffman tree~\cite{huffman1952method}. 

To construct a Huffman code, we first create a min-heap priority queue that initially contains all the leaf nodes sorted by their frequencies, so that the least frequent items have the highest priority. 
The algorithm then iteratively removes the two nodes with the lowest frequencies from the priority queue, which become the left and right children of a new internal node that is created to represent the sum of the frequencies of the two nodes. 
This internal node is then added back to the priority queue. 
This process is continued until there only remains a single node left in the priority queue, which is then the root of the Huffman tree.

A binary code is then assigned to the paths from the root to each leaf node in the Huffman tree, so that each movement along a left edge in the tree corresponds to appending a 0 to the codeword, and each movement along a right edge in the tree corresponds to appending a 1 to the codeword. 
Thus, the resulting binary code for each item is the path from the root to the leaf node corresponding to the item. 

Huffman coding is a type of symbol-by-symbol coding, where each individual item is separately encoded, as opposed to alternatives such as run-length encoding.  
It is known that Huffman coding is optimal among symbol-by-symbol coding with a known input probability distribution~\cite{huffman1952method} and moreover, by Shannon's source coding theorem, that the entropy of the probability distribution is an upper bound on the expected length of a codeword of a symbol-by-symbol coding: 

\begin{theorem}[Shannon's source coding theorem]
\cite{shannon2001mathematical}
\thmlab{thm:huffman:optimal}
Given a random variable $X\in[n]$ so that $X=i$ with probability $f_i$, let $L(X)$ denote the length of the codeword assigned to $X$ by a Huffman code. 
Then $\Ex{L(x)}\ge H(f)$, where $H(f)$ is the entropy of $f$. 
\end{theorem}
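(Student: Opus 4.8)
The plan is to invoke the two classical ingredients behind Shannon's bound: the Kraft inequality for prefix codes, and the non-negativity of the Kullback--Leibler divergence (Gibbs' inequality).

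First I would record that the Huffman code is a prefix code: each codeword is the label of a root-to-leaf path in the Huffman tree, and no leaf is an ancestor of another. Writing $\ell_i$ for the length of the codeword assigned to item $i$ (equivalently, the depth of the corresponding leaf), the leaves at depths $\ell_1,\dots,\ell_n$ of a binary tree satisfy the Kraft inequality $\sum_{i=1}^n 2^{-\ell_i}\le 1$. This can be seen by associating to the leaf at depth $\ell_i$ the dyadic subinterval of $[0,1]$ of length $2^{-\ell_i}$ determined by its path (a left edge meaning ``lower half'', a right edge ``upper half''); the prefix-free property makes these intervals pairwise disjoint, so their lengths sum to at most $1$.

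Next I would set $c=\sum_{i=1}^n 2^{-\ell_i}\le 1$ and $q_i = 2^{-\ell_i}/c$, so that $q=(q_1,\dots,q_n)$ is a probability vector, and compute
\[
\Ex{L(X)} - H(f) = \sum_{i=1}^n f_i\ell_i - \sum_{i=1}^n f_i\log\frac{1}{f_i} = \sum_{i=1}^n f_i\log\frac{f_i}{2^{-\ell_i}} = \sum_{i=1}^n f_i\log\frac{f_i}{q_i} + \log\frac{1}{c} = \KL(f\,\|\,q) + \log\frac{1}{c},
\]
where terms with $f_i=0$ are dropped by the usual convention. Both summands on the right are non-negative: the first by Gibbs' inequality, which follows from applying Jensen's inequality to the concave function $\log$, since $\sum_i f_i\log(q_i/f_i)\le\log\sum_i q_i = 0$; and the second because $c\le 1$. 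Hence $\Ex{L(X)}\ge H(f)$, as claimed.

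There is no genuine obstacle here, as this is the textbook argument; the only points requiring minor care are the justification of the Kraft inequality for the particular trees produced by the Huffman construction, and consistent bookkeeping of the logarithm base (everything is base $2$, matching the definition $H(f)=\sum_i f_i\log\frac{1}{f_i}$ and the binary alphabet of the codewords). If one prefers to avoid naming the KL divergence, the identical conclusion drops out in one line from the log-sum inequality applied to the sequences $\{f_i\}$ and $\{2^{-\ell_i}\}$ together with $\sum_i 2^{-\ell_i}\le 1$.
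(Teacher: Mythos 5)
The paper does not prove this statement at all---it is quoted as a classical result with a citation to Shannon, so there is no in-paper argument to compare against. Your proof is the standard and correct one: the Kraft inequality for the prefix code realized by the Huffman tree, followed by the decomposition $\Ex{L(X)}-H(f)=\KL(f\,\|\,q)+\log\tfrac{1}{c}$ with both terms nonnegative, and the bookkeeping (base-$2$ logs, the convention for $f_i=0$, and $\sum_{i:f_i>0}q_i\le 1$ in the Jensen step) is handled properly. The only remark worth adding is that your argument never uses optimality of the Huffman construction---it establishes the entropy lower bound for \emph{any} binary prefix code---which is in fact exactly the generality the paper needs when it later reuses this theorem to lower-bound search times in arbitrary skip lists and KD trees via their induced $0$--$1$ path encodings.
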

We now prove our lower bound on the expected search time of an item drawn from a probability distribution $f$. 
\thmlbskiplist*
\begin{proof}
Let $\calL$ be a skip list. 
We build a symbol-by-symbol encoding using the search process in $\calL$. 
We begin at the top level. 
At each step, we either terminate, move to the next item at the current level, or move down to a lower level. 
Similar to the Huffman coding, we append a 0 to the codeword when we move down to a lower level, and we append a 1 to the codeword when we move to the next item at the current level. 
Now, the search time for an item $x$ in $\calL$ corresponds to the length of the codeword of $x$ in the symbol-by-symbol encoding. 
By \thmref{thm:huffman:optimal} and the optimality of Huffman codes among symbol-by-symbol encodings, we have that $\Ex{T(x)}\ge H(f)$, where $f$ is the probability distribution vector of $x$. 
\end{proof}

\subsection{Zipfian Distribution}
In this section, we briefly describe the implications of our data structure to Zipfian distributions. 

We first recall the following entropy upper bound for a probability distribution with support at most $n$. 
\begin{theorem}
\cite{cover1999elements}
\thmlab{thm:entropy:prob}
Let $f$ be a probability distribution on a support of size $[n]$. 
Then $H(f)\le\log n$. 
\end{theorem}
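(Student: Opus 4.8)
The plan is to establish that among all probability distributions supported on $[n]$, entropy is maximized by the uniform distribution, which gives the claimed bound immediately. The cleanest route I would take is through the non-negativity of the Kullback--Leibler divergence. Let $u$ be the uniform distribution on $[n]$, with $u_i = \frac{1}{n}$ for all $i\in[n]$. The key computation is the identity
\[
D_{\mathrm{KL}}(f \,\|\, u) \;=\; \sum_{i=1}^n f_i \log\frac{f_i}{1/n} \;=\; \log n \;-\; H(f),
\]
using $\sum_i f_i = 1$ together with the standard convention $0\log 0 = 0$ for coordinates with $f_i = 0$. I would then invoke Gibbs' inequality, $D_{\mathrm{KL}}(f \,\|\, u) \ge 0$ for any two distributions on the same support, which rearranges to $H(f) \le \log n$.

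An alternative I would keep in mind, in case one prefers not to cite Gibbs' inequality as a black box, is a direct application of Jensen's inequality: restricting to the support $S = \{i : f_i > 0\}$ and viewing $\frac{1}{f_i}$ as a random variable distributed according to $f$, concavity of $\log$ gives
\[
H(f) \;=\; \sum_{i \in S} f_i \log\frac{1}{f_i} \;\le\; \log\!\left(\sum_{i \in S} f_i \cdot \frac{1}{f_i}\right) \;=\; \log |S| \;\le\; \log n.
\]
Either way the argument is short and entirely standard (see \cite{cover1999elements}); a third option would be a Lagrange-multiplier optimization of $H$ subject to $\sum_i f_i = 1$, but that is heavier than needed.

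There is essentially no obstacle here; the proof is textbook. The only points meriting a line of care are (i) the convention for zero-probability coordinates, so that both the entropy sum and the divergence identity are well-defined, and (ii) observing that equality holds exactly when $f$ is uniform on all of $[n]$ — although only the inequality is used downstream, e.g., in \lemref{lem:zipf:entropy} and \corref{cor:skiplist:zipf:time}, as well as in the corresponding KD tree bounds.
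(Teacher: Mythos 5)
Your proof is correct. The paper does not actually prove this statement---it is imported by citation from Cover and Thomas---and both of your arguments (Gibbs' inequality via $D_{\mathrm{KL}}(f\,\|\,u)=\log n - H(f)\ge 0$, or Jensen applied to $\log\frac{1}{f_i}$) are exactly the standard textbook derivations that the citation points to, so there is nothing to reconcile.
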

We can then upper bound the entropy of a probability vector that satisfies a Zipfian distribution with parameter $s$. 
\lemzipfentropy*
\begin{proof}
Since $f$ is a probability distribution on the support of size $[n]$, then by \thmref{thm:entropy:prob}, we have that $H(f)\le\log n$. 
Thus, it remains to consider the case where $s>1$. 
Since $z\le 1$, we have 
\begin{align*}
h(f)&=\sum_{i=1}^n\frac{z}{i^s}\log\frac{i^s}{z}\\
&\le s\sum_{i=1}^n\frac{\log i}{i^s}.
\end{align*}
Note that there exists an integer $\gamma>0$ such that for $i>\gamma$, we have $\frac{\log i}{i^s}<\frac{1}{i^{(s+1)/2}}$. 
Since $s>1$, then $\frac{s+1}{2}>1$ and thus
\[\sum_{i=\gamma}^n\frac{1}{i^{(s+1)/2}}\le\sum_{i=1}^\infty\frac{1}{i^{(s+1)/2}}=\O{1}.\]
Hence,
\begin{align*}
h(f)&\le s\sum_{i=1}^{\gamma-1}\frac{\log i}{i^s}+s\sum_{\gamma}^{\infty}\frac{\log i}{i^s}=\O{1}.
\end{align*}
\end{proof}

By \thmref{thm:skip:list} and \lemref{lem:zipf:entropy}, we have the following statement about the performance of our learning-augmented skip list on a set of search queries that follows a Zipfian distribution.
\corskiplistzipftime*

\subsection{Noisy Robustness}
In this section, we show that our learning-augmented skip list construction is robust to somewhat inaccurate oracles. 
Let $f$ be the true-scaled frequency vector so that for each $i\in[n]$, $f_i$ is the probability that a random query corresponds to $i$. 
Let $p$ be the predicted frequency vector, so that for each $i\in[n]$, $p_i$ is the predicted probability that a random query corresponds to $i$. 
For $\alpha,\beta\in(0,1)$, we call an oracle $(\alpha,\beta)$-noisy if for all $i\in[n]$, we have $p_i\ge\alpha\cdot f_i-\beta$. 
\lemnoisyskiplist*
\begin{proof}
Suppose the total number of levels is at most $C+\log n$ for some constant $C>0$. 
Note that this occurs with a high probability for a learning-augmented skip list with a set of $(\alpha,\beta)$-noisy predictions. 
For each $i\in[n]$, let $f_i$ be the proportion of queries to item $i$ and let $p_i$ be the predicted proportion of queries to item $i$. 
By \lemref{lem:robust:search:time}, the expected search time at most 
\[2C+2\sum_{i=1}^n f_i\cdot\min\left(\log\frac{1}{p_i},\log n\right).\] 
Since the oracle is $(\alpha,\beta)$-noisy then we have $p_i\ge\alpha\cdot f_i-\beta$ for all $i\in[n]$. 

We first note that in the expected search time for $i$ is proportional to $\min\left(\log\frac{1}{f_i},\log n\right)$. 
Thus, for expected search time for item $i$, it suffices to assume $f_i>\frac{1}{2n}$ for all $i$.  

Observe that for $f_i>\frac{1}{2n}$ and $\beta<\frac{\alpha}{4n}$, then $p_i\ge\alpha\cdot f_i-\beta$ implies
\[p_i\ge\alpha\cdot f_i-\beta\ge\alpha\cdot f_i-\frac{\alpha}{4n}\ge\frac{\alpha}{2}\cdot f_i.\]
Hence, we have $\frac{1}{p_i}\le\frac{2}{\alpha}\cdot\frac{1}{f_i}$ so that the expected search time for item $i$ is at most
\[2C+2\cdot\min\left(\log\frac{1}{f_i}+\log\frac{2}{\alpha},\log n\right).\]
Therefore, the expected search time is at most
\begin{align*}
2C+2\sum_{i=1}^n f_i\cdot\min\left(\log\frac{1}{p_i},\log n\right)&\le2C+2\sum_{i=1}^n \left(f_i\cdot\min\left(\log\frac{1}{f_i},\log n\right)+f_i\cdot\log\frac{2}{\alpha}\right)\\
&\le2C+2\log\frac{2}{\alpha}+2\sum_{i=1}^n f_i\cdot\min\left(\log\frac{1}{f_i},\log n\right).
\end{align*}
Since the perfect oracle would achieve runtime $2C+2\sum_{i=1}^n f_i\cdot\min\left(\log\frac{1}{f_i},\log n\right)$, then it follows that a learning-augmented skip list with a set of $(\alpha,\beta)$-noisy predictions has performance that matches that of a learning-augmented learned with a perfect oracle, up to an additive constant. 
\end{proof}

\section{Missing Proofs from \texorpdfstring{\secref{sec:kd:trees}}{Section 3}}
First, we will show that the expected depth of a given query depends on the probability of that query, and that high frequency queries must be found close to the root of our tree.

\begin{lemma}
\label{lemma:1onp}
Suppose $[\Delta]^d$ is the space of possible input points and queries. 
Let $N=\Delta^d$ and $p_i$ be the probability that a random query is made to $i\in[N]$, given the natural mapping between $[N]$ and $[\Delta]^d$. 
Then the level at which $i$ resides in the tree is at most $\O{\log\frac{1}{p_i}}$. 
\end{lemma}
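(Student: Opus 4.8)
The plan is to prove the slightly stronger statement that in the tree $T'$ built by \textproc{BuildNode} on the ``heavy'' set $\texttt{dataset}_f\cup\texttt{queries}_f$, every point $i$ in that set sits at depth $\O{\log(1/p_i)}$; the claim for the remaining $i$ with $p_i\le \tfrac1{n^2}$ is then immediate, since those are attached by a balanced construction at depth $\O{\log n}=\O{\log(1/p_i)}$ below $T'$, while $T'$ itself has depth $\O{\log n}$ because every point it contains has weight exceeding $\tfrac1{n^2}$. For the main statement I would fix the root-to-leaf path $v_0,\ldots,v_k$ ending at $i$ and write $W_v$ for the total query weight of the points in the subtree at $v$, normalized so that $W_{v_0}=1$; then $W_{v_k}\ge p_i$ and $W_v$ is non-increasing along the path, so it suffices to show $W_v$ drops by a constant factor every constant number of levels.

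The easy regime is a node $v$ in which no point carries more than $\tfrac12 W_v$: then along the best axis the prefix sums of the weights have gaps at most $\tfrac12 W_v$, so the split nearest $\tfrac12 W_v$ leaves each child with weight in $[\tfrac14 W_v,\tfrac34 W_v]$, hence $W_{v_{j+1}}\le\tfrac34 W_{v_j}$. The crux is a structural claim for a node $v$ containing a heavy point $j^{*}$ with $w_{j^{*}}>\tfrac12 W_v$: sorting along any axis, the prefix sums jump by $w_{j^{*}}>\tfrac12 W_v$ at $j^{*}$, so the cut closest to $\tfrac12 W_v$ on that axis is the one immediately before or immediately after $j^{*}$; optimizing over the $d$ axes, the chosen split is adjacent to $j^{*}$ on some axis, which routes $j^{*}$ into a child where it is extremal (minimal or maximal) on that axis. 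Applying the comparison once more, as soon as $j^{*}$ is extremal on some axis and still heavy, isolating $\{j^{*}\}$ into its own child is at least as good as any competing adjacent cut (any such cut on another axis either isolates $j^{*}$ as well or is strictly farther from half), so the best split makes $\{j^{*}\}$ a leaf.

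With these two facts I would let $j_0$ be the first level at which $i$ itself is the heavy point of $v_{j_0}$ (set $j_0=k$ if this never happens). For each $j<j_0$: if $v_j$ has no heavy point then $W_{v_{j+1}}\le\tfrac34 W_{v_j}$ as above; if $v_j$ has a heavy point $j^{*}\neq i$, then the child avoiding $j^{*}$ has weight at most $W_{v_j}-w_{j^{*}}<\tfrac12 W_{v_j}$, and either $i$ descends into it (immediate progress) or $i$ follows $j^{*}$ into $v_{j+1}$, where $j^{*}$ is extremal and heavy, so at the next level $\{j^{*}\}$ is split off and $i$ lands in its sibling, of weight at most $W_{v_j}-w_{j^{*}}<\tfrac12 W_{v_j}$. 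In all cases $W_{v_{j+2}}\le\tfrac34 W_{v_j}$, so $p_i\le W_{v_{j_0}}\le(\tfrac34)^{\flr{j_0/2}}$, giving $j_0=\O{\log(1/p_i)}$; and once $i$ is the heavy point of $v_{j_0}$ the structural claim sends $i$ into a child where it is extremal, and the next split isolates $i$, so $k\le j_0+2$.

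The main obstacle is making the structural claim rigorous, together with its degenerate cases: checking that ties among equally good splits are resolved consistently with the argument, and handling coordinate collisions, where several points share $j^{*}$'s coordinate on the relevant axis so that $\{j^{*}\}$ cannot be isolated in a single step — here one either invokes a general-position assumption or argues that $j^{*}$ becomes extremal along successive axes and is eventually isolated, complicating the bookkeeping but not the final bound in the parameter ranges of interest. Everything else is the routine weight-halving accounting above.
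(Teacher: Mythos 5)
Your reduction of the low-frequency cases to the heavy tree (points with $p_i>\frac{1}{n^2}$ built by \textproc{BuildNode}, everything else attached or rejected at depth $\O{\log n}=\O{\log\frac{1}{p_i}}$) matches the paper exactly. For the heavy tree itself, however, you take a genuinely different route: the paper's proof simply asserts that each split halves the residual query mass, so that a point surviving to depth $t$ satisfies $p_i>2^{-t}$, and reads off the bound; you instead attempt to prove a geometric decay of the subtree mass $W_v$ along the search path, splitting into the regime where no point carries more than $\frac12 W_v$ (where a $[\frac14 W_v,\frac34 W_v]$ split exists) and the regime where a single point $j^{*}$ dominates (where no balanced split exists and one must argue that the best-split rule isolates $j^{*}$ within $\O{1}$ further levels). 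This is the honest way to analyze the algorithm as written, since \textproc{BuildNode} only promises the split \emph{closest} to $0.5$, not a split that is actually near $0.5$.

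The obstacle you flag at the end is, however, the load-bearing step, and the gap is genuine. The two-level isolation claim needs $j^{*}$ to be the unique point at its extremal coordinate on the chosen axis; when other points share that coordinate, the cut adjacent to $j^{*}$'s coordinate group keeps the whole group together, and the query $i$ can continue to follow $j^{*}$ with essentially no loss of mass. Each such level at best forces one more coordinate of the followers to agree with $j^{*}$, so the chase can last up to $d$ levels before $i$ and $j^{*}$ separate; the accounting then yields $\O{d\log\frac{1}{p_i}}$ rather than $\O{\log\frac{1}{p_i}}$, and $d$ is precisely the parameter this section is trying not to pay. So ``complicating the bookkeeping but not the final bound'' is not justified without an explicit distinct-coordinates (general-position) assumption; note that the same assumption is also needed in your easy regime, where the prefix-sum jumps are governed by coordinate groups rather than by individual points, and it also disposes of your tie-breaking worry, since the algorithm retains the first strictly better split it encounters. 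With that assumption stated, your argument goes through and is strictly more rigorous than the paper's one-line balancing assertion; without it, the lemma as stated is not actually established by either your proof or the paper's.
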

\begin{proof}

First, consider only the high-frequency query points and data points for which we base our construction off.

In constructing the learning-augmented KD tree, we balance the contents of the children nodes such that a query to that node has a probability of $\frac{1}{2}$ of belonging to each of the children. 
Therefore, at a depth of $d$, the probability of belonging to either child is $\frac{1}{2^d}$. 
In particular, a query $i$ with probability $p_i$ at a depth $d$ must satisfy $p_i > \frac{1}{2^d}$. 
Thus, we have that the depth of $i$ is $\O{\log\frac{1}{p_i}}$, as desired.

Since the lowest probability of a high-frequency data point is $\frac{1}{n^2}$, this tree must have a depth of at most $2\log n$.

Now, consider a low-frequency data point, which we add to the bottom of the tree.
By construction, the learned point of our tree has depth at most $2\log n$.
Then, when inserting the additional data points as a balanced KD tree, we can accumulate at most an additional depth of $\log n$.
Note, $p<\frac{1}{n^2}$ implies $\log n < \log \frac{1}{p}$.
Thus, this low-frequency data point will have a depth of at most $3 \log n = \O{\log\frac{1}{p}}$, as desired.

Similarly, if $i$ is not a data point and is low frequency, we achieve the same bound of $\O{\log\frac{1}{p}}$.
In this case, we simply terminate at a leaf node and determine that the desired query is not in the dataset.

In summary, any query which has high frequency can be found in $\O{\log\frac{1}{p}}$ time. Low-frequency data points can similarly be found in $\O{\log\frac{1}{p}}$ time, and low frequency queries can be determined to not exist in $\O{\log\frac{1}{p}}$ time.
\end{proof}

\begin{lemma}
Suppose $[\Delta]^d$ is the space of possible input points and queries. 
Let $N=\Delta^d$ and $p_i$ be the probability that a random query is made to $i\in[N]$, given the natural mapping between $[N]$ and $[\Delta]^d$. 
Then the level at which $i$ resides in the tree is at most $\O{\log n}$. 
\end{lemma}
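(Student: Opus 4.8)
The plan is to prove the ``$\O{\log n}$'' side of the query-depth bound --- the counterpart of the ``$\O{\log\frac{1}{p_i}}$'' bound from Lemma~\ref{lemma:1onp}, and the analogue for KD trees of \lemref{skip:list:levels} for skip lists --- by bounding separately the depth contributed by each of the two phases of Algorithm~\ref{alg:learned:kdtree}: the \textsc{BuildNode} recursion on the high-frequency set $\texttt{dataset}_f\cup\texttt{queries}_f$, and the subsequent standard balanced KD tree insertion of the low-frequency data points.

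First I would control the ``learned'' tree built in the first phase. The union $\texttt{dataset}_f\cup\texttt{queries}_f$ is a subset of $[N]$, so its total query mass is at most $\sum_{i\in[N]}p_i=1$. Each split chosen by \textsc{BuildNode} sends a random query to either child with probability (essentially) $\frac{1}{2}$, so by induction a node at depth $d$ of this tree carries query mass at most $2^{-d}$. But every element kept in this phase satisfies $p_i>\frac{1}{n^2}$, hence it cannot reside at a node whose total mass is at most $\frac{1}{n^2}$; this forces $2^{-d}>\frac{1}{n^2}$, i.e.\ $d<2\log n$. So the learned tree has depth at most $2\log n$.

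Then I would add the second phase. The low-frequency data points form a subset of the $n$ input points, so there are at most $n$ of them; they are distributed among the leaves of the learned tree and organized there by a standard balanced KD tree construction, contributing at most $\ceil{\log n}$ extra depth below any leaf. Combining the phases, every data point sits at depth at most $3\log n+\O{1}$, and a search for a point $i\in[N]$ that is not in the dataset likewise terminates at a leaf of depth at most $3\log n+\O{1}$; either way the level of $i$ is $\O{\log n}$. This is precisely the regime where $\log\frac{1}{p_i}\ge\log n$ in Lemma~\ref{lemma:1onp}, so the two lemmas together yield an $\O{\min(\log\frac{1}{p_i},\log n)}$ bound on the query depth, which is what feeds into \thmref{kd:query:time}.

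The main thing to be careful about --- and it is the same subtlety already implicit in Lemma~\ref{lemma:1onp} --- is the ``halving'' claim: over a discrete point set a split that exactly balances the query mass need not exist, so one must argue that a sufficiently balanced split (on the best coordinate, at an actual point value) always exists and that the resulting constant-factor loss per level only changes the hidden constant, leaving the $\O{\log n}$ conclusion intact. Everything else is bookkeeping.
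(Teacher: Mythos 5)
Your proposal is correct and follows essentially the same route as the paper: the paper's own proof simply defers to the analysis in Lemma~\ref{lemma:1onp}, which is exactly the argument you spell out --- the learned tree has depth at most $2\log n$ because every retained point has mass exceeding $\frac{1}{n^2}$ and the mass halves per level, and the balanced insertion of low-frequency points adds at most another $\log n$. Your closing caveat about the non-existence of an exactly balancing split is a fair point that the paper also elides, but it does not change the argument.
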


\begin{proof}
    This follows directly from the analysis in Llemma \ref{lemma:1onp}.
\end{proof}

Now, we have demonstrated the the depth of a given query point $i$ is bounded by both $\O{\log n}$ and $\O{\log\frac{1}{p_i}}$. Using this fact, we will now show that the expected query time of our algorithm is bounded by both the entropy of the dataset $H(p)$ in addition to $\log n$.

We now analyze the performance of our learning-augmented KD tree. 
\thmkdquerytime*
\begin{proof}
Following \ref{lemma:1onp}, the points $i$ in $[\Delta]^d$ with non-negligible probability $p_i \geq \frac{1}{n^2}$ are guaranteed to exist in the learning-augmented KD tree $\mathcal{T}$ with depth at most $\log \frac{1}{p_i}$. For points in the dataset $[n]$ with negligible probability, they exist in the tree and have depth in $\O{\log n}$. For all other points not contained in the KD tree, the query will terminate at a depth of $\O{\log n}$.

For any point $i$ in $[\Delta]^d$, the depth that a query to $i$ will terminate in the tree is
\begin{equation}
    \depth(i) = \O{\min\left(\log \frac{1}{p_i}, \log n\right)}.
\end{equation}

Then, the expected search time $T$ is the expected depth of a given point,
\begin{equation}
    \Ex{T} = \sum_{i \in [\Delta]^d} p_i\cdot\depth(i) = \sum_{i \in [\Delta]^d} p_i\cdot\O{\min\left(\log \frac{1}{p_i}, \log n\right)} = \O{\min(H(p), \log n)}.
\end{equation}

\end{proof}

Thus, we have shown that the expected query time of our algorithm is bounded by the entropy of the dataset. In particular, when the dataset has a highly skew distribution, $H(p)$ can be far less than $\log n$.

\subsection{Near-Optimality}
Near-optimality of our learning-augmented KD trees uses a similar argument to the proof of the near-optimality of our learning-augmneted skip lists. 
In particular, we again utilize Shannon's source coding theorem from \thmref{thm:huffman:optimal}. 
We then have the following:
\thmlbkdtrees*
\begin{proof}
In a learning-augmented KD tree, the search path to an element $i$ can be encoded as a $0-1$ codeword, with entries indicating whether the lower or upper branch is taken at each node traversal. Moreover, the length of this codeword in the symbol-by-symbol encoding corresponds to the depth of element $i$. Then, by \thmref{thm:huffman:optimal} and the optimality of Huffman codes in symbol-by-symbol encodings, we have that $\Ex{D(X)} \geq H(f)$, as desired.
\end{proof}

\subsection{Noisy Robustness}
In this section, we analyze the performance of the learning-augmented KD tree under noisy data conditions.

Previously, we analyzed the performance of the learning-augmented KD-tree assuming access to a perfect prediction oracle.

Now we analyze the performance with a noisy oracle.
That is, for each $i \in [\Delta]^d$ there is a true-scaled frequency $f_i$ that the point will be queried and that $p_i$ is a prediction made by the noisy oracle.

First, we analyze the multiplicative robustness of the algorithm.
In this case, the oracle predicts $f_i$ up to some multiplicative constant $\alpha \in \mathbb{R}^{+}$ such that $f_i = \alpha\,p_i$.

\begin{lemma}
Suppose $[\Delta]^d$ is the space of possible input points and queries. 
Let $N=\Delta^d$ and $p_i$ be the probability that a random query is made to $i\in[N]$ during tree construction. Suppose during runtime that the true probability of querying $i$ is $f_i = \alpha\,p_i$ for some $\alpha\in\mathbb{R}^+$. 
Then the level at which $i$ resides in the tree is at most $\O{\log \frac{1}{f_i} + \log\frac{1}{\alpha}}$. 
\end{lemma}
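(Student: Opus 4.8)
The plan is to reduce immediately to Lemma~\ref{lemma:1onp}, exploiting the fact that the level occupied by a node is frozen at construction time and that the construction only ever sees the predicted frequencies $p_i$, never the true frequencies $f_i$. So the first step is simply to invoke Lemma~\ref{lemma:1onp}, together with the companion $\O{\log n}$ bound proved right after it: the node $i$ resides at level $\O{\min\left(\log\frac{1}{p_i},\log n\right)}$, where $p_i$ is exactly the quantity fed to the algorithm. No part of the balancing argument needs to be revisited, since nothing about the placement of $i$ changes when we later learn what $f_i$ is.

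The second step is the purely algebraic translation from $p_i$ to $f_i$. Using the hypothesis $f_i=\alpha p_i$ we have $\frac{1}{p_i}=\frac{\alpha}{f_i}$, hence $\log\frac{1}{p_i}=\log\frac{1}{f_i}+\log\alpha$. When the oracle does not underestimate (i.e. $\alpha\le 1$) this is already at most $\log\frac{1}{f_i}$, and in general, for constant $\alpha$, $\log\frac{1}{p_i}\le\log\frac{1}{f_i}+\log\frac{1}{\alpha}$ up to an additive $\O{1}$ arising from the logarithm of the worst-case multiplicative slack between prediction and truth. Substituting this into the bound from the previous paragraph yields $\depth(i)=\O{\log\frac{1}{f_i}+\log\frac{1}{\alpha}}$, and the $\O{\log n}$ alternative is preserved automatically since it never referenced $p_i$ at all.

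The proof is therefore essentially a one-line substitution, and the only thing requiring a moment's care is the \emph{direction} of the multiplicative error: it is underestimation by the oracle — which pushes a node strictly deeper than its entropy-optimal level — that must be controlled, whereas overestimation only moves nodes closer to the root and hence can only help; the extra depth incurred in the bad case is precisely the logarithm of the multiplicative error factor, which matches the claimed $\log\frac{1}{\alpha}$ slack. I expect no genuine obstacle here, as all of the real work is done by Lemma~\ref{lemma:1onp}, which this statement merely re-packages; in turn, this lemma feeds directly into the proof of \lemref{noisy:kd:tree} in exactly the way the analogous skip-list estimate feeds \lemref{noisy:skip:list}.
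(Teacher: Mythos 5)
Your proposal is correct and takes essentially the same route as the paper: the placement of $i$ is fixed at construction time by $p_i$ alone, so Lemma~\ref{lemma:1onp} gives depth $\O{\log\frac{1}{p_i}}$, and the substitution $p_i=f_i/\alpha$ (with the easy observation that $\alpha\le 1$ only pushes $i$ closer to the root) finishes the argument exactly as in the paper's two-case proof. One small remark: just as in the paper's own proof, the additive slack actually comes out as $\log\alpha$ (positive precisely when the oracle underestimates, $\alpha>1$) rather than the $\log\frac{1}{\alpha}$ appearing in the lemma statement, so your phrasing inherits the statement's sign-convention quirk but the substance is right.
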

\begin{proof}
If $\alpha \leq 1$, this is immediate. If this is the case, in construction we expected $i$ to be queried more often than it actually is, so our construction placed the point $i$ higher in the tree than is necessary. Thus, the depth is at most the previously shown $\frac{1}{p_i}\leq\frac{1}{f_i}$.

    Now, suppose $\alpha > 1$. In this case, we must have placed $i$ deeper in the tree than we should have, as our construction frequency is less than the true query frequency.
    Then, as in \ref{lemma:1onp}, the depth of $i$ is $\O{\log \frac{1}{p_i}}.$ Now, we have that
    \begin{equation}
        \O{\log \frac{1}{p_i}} = \O{\log \frac{\alpha}{f_i}} = \O{\log \frac{1}{f_i} + \log \alpha}.
    \end{equation}

\end{proof}

Having shown multiplicative robustness of the learning-augmented KD tree we next analyze the additive-multiplicative robustness of the method. An oracle is $(\alpha, \beta)$-noisy if the prediction satisfies $p_i \geq \alpha\,f_i - \beta$ for constants $\alpha, \beta \in (0, 1)$.

\lemnoisykdtree*
\begin{proof}
    From Lemma \ref{lemma:1onp} we have that the depth of a point $i$ in the learning-augmented KD tree is $\O{\frac{1}{p_i}}$ where $p_i$ is the predicted frequency of $i$.

    Then, with an $(\alpha, \beta)$-noisy oracle we have that the prediction is bounded from below by $p_i \geq \alpha\,f_i - \beta$. As $i$ is a point in the tree, we can assume that the true-scaled frequency $f_i$ has a lower-bound of $1 / n^2$.

    By choosing $\beta \leq \alpha / n^2$ we ensure that the predicted $p_i$ is always nonnegative. Then, let $\beta = \alpha / 2n^2$

    \begin{equation}
        p_i \geq \alpha\,f_i - \beta \geq \alpha\,f_i - \frac{\alpha}{2n^2} \geq \frac{\alpha}{2} f_i
    \end{equation}

    Then, applying Lemma \ref{lemma:1onp} again
    \begin{equation}
        \O{\log \frac{1}{p_i}} = \O{\log \frac{2}{\alpha f_i}} = \O{\log \frac{1}{f_i} + \log \frac{1}{\alpha}}.
    \end{equation}

\end{proof}

\section{Additional Empirical Evaluations}
\applab{sec:more:exp}

\subsection{Skip Lists}
In this section, we perform empirical evaluations comparing the performance of our learning-augmented skip list to that of traditional skip lists, on both synthetic and real-world datasets. 
Firstly, we compare the performance of traditional skip lists with our learning-augmented skip lists on synthetically generated data following Zipfian distributions. 
The proposed learning-augmented skip lists are evaluated empirically with both synthetic datasets and real-world internet flow datasets from the Center for Applied Internet Data Analysis (CAIDA) and AOL. In the synthetic datasets, a diverse range of element distributions, which are characterized by the skewness of the datasets, are evaluated to assess the effectiveness of the learning augmentation. In the CAIDA datasets, the $\alpha$ factor is calculated to reflect the skewness of the data distribution. 

The metrics of performance evaluations include insertion time and query time, representing the total time it takes to insert all elements in the query stream and the time it takes to find all elements in the query stream using the data structure, respectively. 

The computer used for benchmarking is a Lenovo Thinkpad P15 with an intel core i7-11800H@2.3GHz, 64GB RAM, and 1TB of Solid State Drive. The tests were conducted in a Ubuntu 22.04.3 LTS OS. GNOME version 42.9. 

%

\subsubsection{Synthetic Datasets}

In the synthetic datasets, both the classic and augmented skip lists are tested against different element counts and $\alpha$ values. In terms of the distribution of the synthetic datasets, the uniform distribution and a Zipfian distribution of $\alpha$ between 1.01 and 2 with query counts up to 4 million are evaluated. It is worth noting that the number of unique element queries could vary for the same query count at different $\alpha$ values in the Zipfian distribution, which may affect the insertion time. 

\begin{table*}[htb]
    \caption{Speed up factor of augmented skip list over classic skip list under different synthetic distributions}
  \centering
  \setlength{\tabcolsep}{5pt}
  \setlength{\abovecaptionskip}{3pt}%
  \setlength{\belowcaptionskip}{1pt}%
  \resizebox{0.8\textwidth}{!}{
    \begin{tabular}{ccccccccccccc}
    \toprule
    \multicolumn{1}{c}{\multirow{2}[4]{*}{Distribution}} & \multicolumn{11}{c}{Query size of synthetic data (unit: thousand)}                        &  \\
\cmidrule{2-13}          & 0.5   & 10    & 100   & 500   & 1000  & 1500  & 2000  & 2500  & 3000  & 3500  & 4000  & \multicolumn{1}{c}{Average} \\
    \midrule
    \multicolumn{1}{c}{uniform} & 3.02  & 0.84  & 1.01  & 1.05  & 1.11  & 1.14  & 1.17  & 1.21  & 1.22  & 1.42  & 1.4   & 1.33 \\
    $\alpha$=1.01  & 3.63  & 2.6   & 1.04  & 1.24  & 1.03  & 1.21  & 1.2   & 1.14  & 1.3   & 1.18  & 1.3   & 1.53 \\
    $\alpha$=1.25  & 3.28  & 3.74  & 5.87  & 2.89  & 2.47  & 3.21  & 2.95  & 3.34  & 3.55  & 3.16  & 3.12  & 3.42 \\
    $\alpha$=1.5   & 2.42  & 8.97  & 6.93  & 6.54  & 7.99  & 5.83  & 4.65  & 3.8   & 4.92  & 5.34  & 5.93  & 5.76 \\
    $\alpha$=1.75  & 12.43 & 10.4  & 5.76  & 9.78  & 6.76  & 7.13  & 7.31  & 7.09  & 6.63  & 5.07  & 6.98  & 7.76 \\
    $\alpha$=2     & 8.19  & 2.5   & 5.56  & 10.1  & 4.47  & 3.91  & 7.26  & 5.33  & 9.29  & 7.65  & 5.55  & 6.35 \\    
    \bottomrule
    \end{tabular}%
    }
  \label{tab:speed_up_synthetic}%
\end{table*} 

\begin{table}[htb]
    \caption{Node count for each distribution configuration in the 4 million dataset}
  \centering
  \setlength{\tabcolsep}{5pt}
  \setlength{\abovecaptionskip}{3pt}%
  \setlength{\belowcaptionskip}{1pt}%
  \resizebox{0.23\textwidth}{!}{
    \begin{tabular}{cc}
    \toprule
    $\alpha$  & Unique node count \\
    \midrule
    1.01  & 2886467 \\
    1.25  & 259892 \\
    1.75  & 8386 \\
    2     & 2796 \\
    \bottomrule
    \end{tabular}%
    }
  \label{tab:node_count}%
\end{table} 

Table \ref{tab:speed_up_synthetic} shows the speed-up factor, defined as the time taken by the augmented skip list over the classic skip list for the same query stream. We can observe a progressive improvement in the performance of our augmented skip lists as the dataset skewness increases. It also suggests that our augmented skip list will perform at least as good as the traditional skip list and will outperform a traditional skip list by a factor of up to 7 times depending on the skewness of the datasets. 



Figure \ref{fig:synthetic_insert} shows that the insertion time decreases with more skewed datasets for the same size of the query stream. This is attributed to the reduced number of nodes in the datasets, as shown in Table \ref{tab:node_count}.

The query time of augmented skip lists is also reduced greatly compared to the classic skip lists as shown in Figure \ref{fig:synthetic_query}.

\begin{figure}[htb]
    \centering    
    \begin{subfigure}{0.4\textwidth}
        \includegraphics[width=\linewidth]{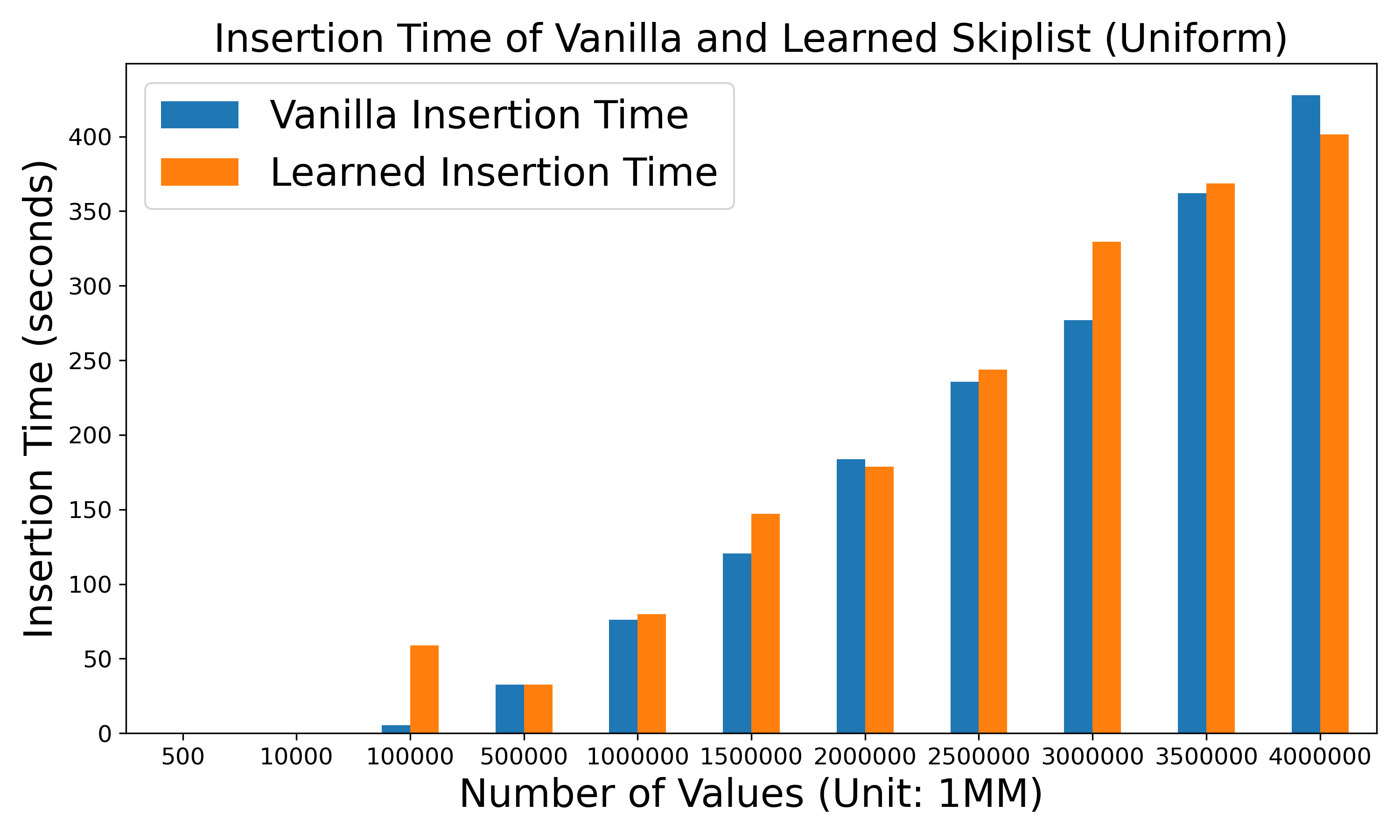}
        \caption{Uniform Distribution}
        \label{fig:sub1a}
    \end{subfigure}        
    \begin{subfigure}{0.4\textwidth}
        \includegraphics[width=\linewidth]{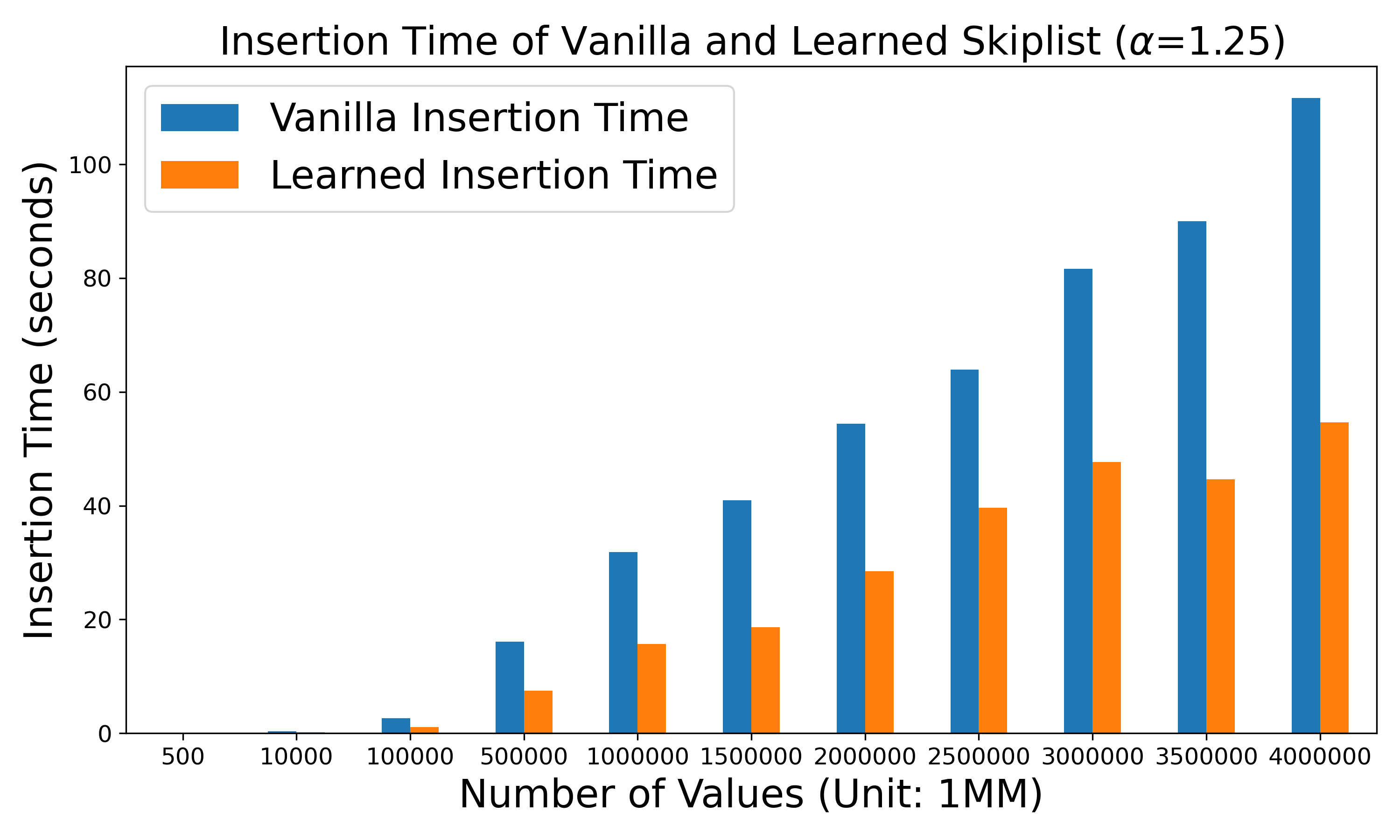}
        \caption{$\alpha=1.25$}
        \label{fig:sub1b}
    \end{subfigure}
    
    \begin{subfigure}{0.4\textwidth}
        \includegraphics[width=\linewidth]{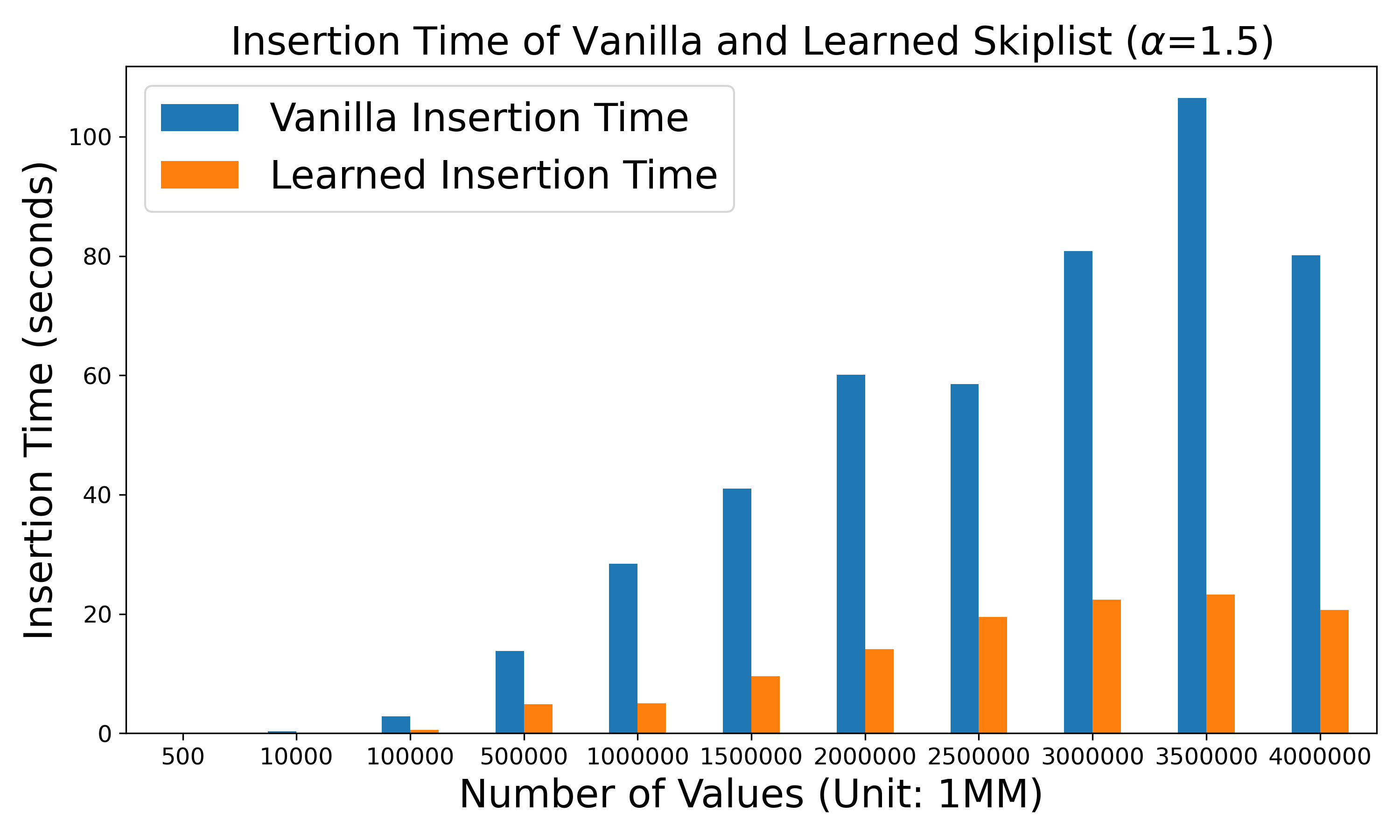}
        \caption{$\alpha=1.5$}
        \label{fig:sub1c}
    \end{subfigure}
    \begin{subfigure}{0.4\textwidth}
        \includegraphics[width=\linewidth]{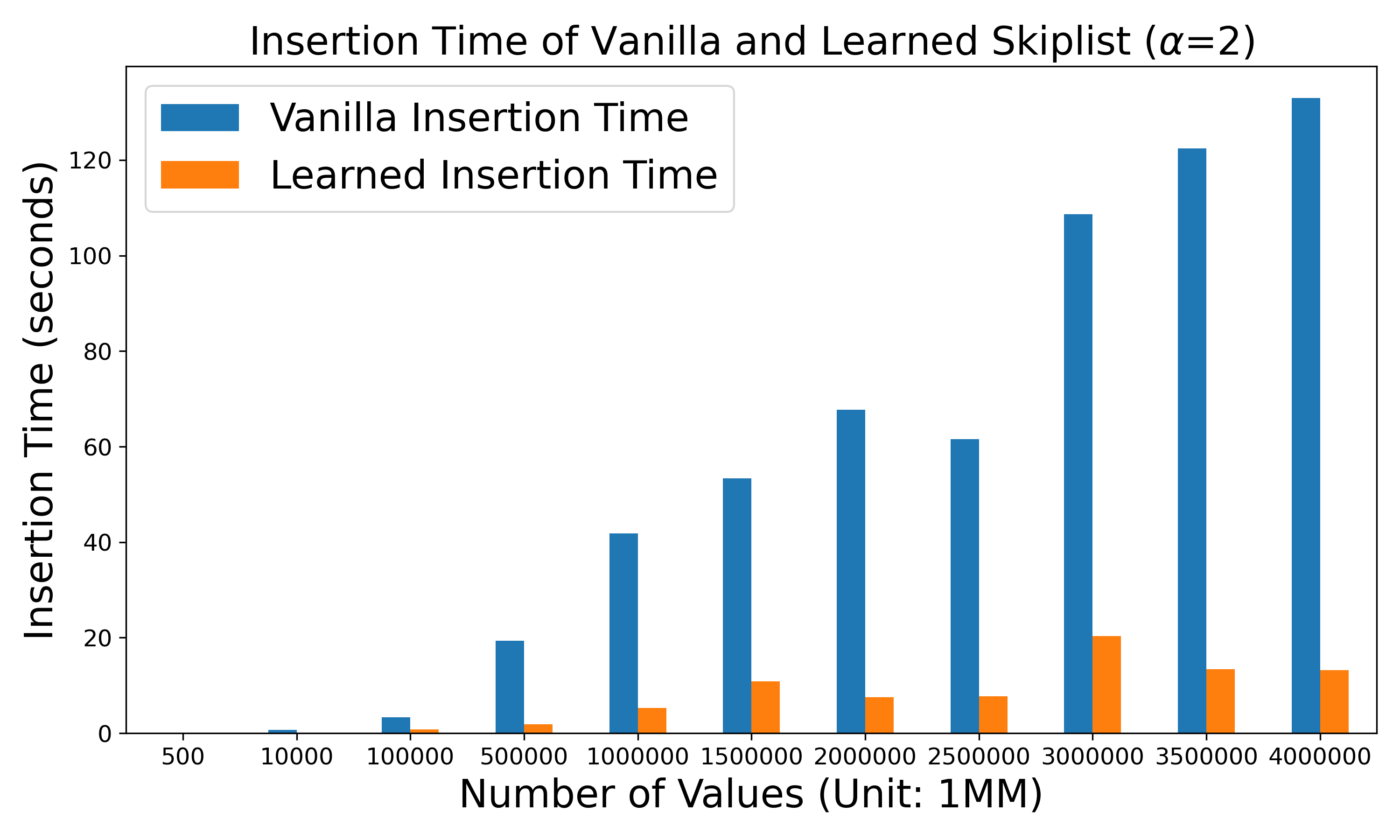}
        \caption{$\alpha=2$}
        \label{fig:sub1e}
    \end{subfigure}

    \caption{
    Insertion time for synthetic datasets with a uniform distribution and under different $\alpha$ values of the Zipfian distribution for both classic and augmented skip lists. This figure illustrates the insertion time on the synthetic data for both the uniform distribution and the Zipfian distribution at different $\alpha$ values. Generally, higher skewness of the datasets results in less insertion time when using the augmented structure. The decrease in insertion time is proportional to the increase in the $\alpha$ value, as a higher $\alpha$ value leads to a reduction in the number of unique nodes, as illustrated in Table \ref{tab:node_count}. }
    \label{fig:synthetic_insert}
\end{figure}

\begin{figure}[htb]
    \centering
    \begin{subfigure}{0.4\textwidth}
        \includegraphics[width=\linewidth]{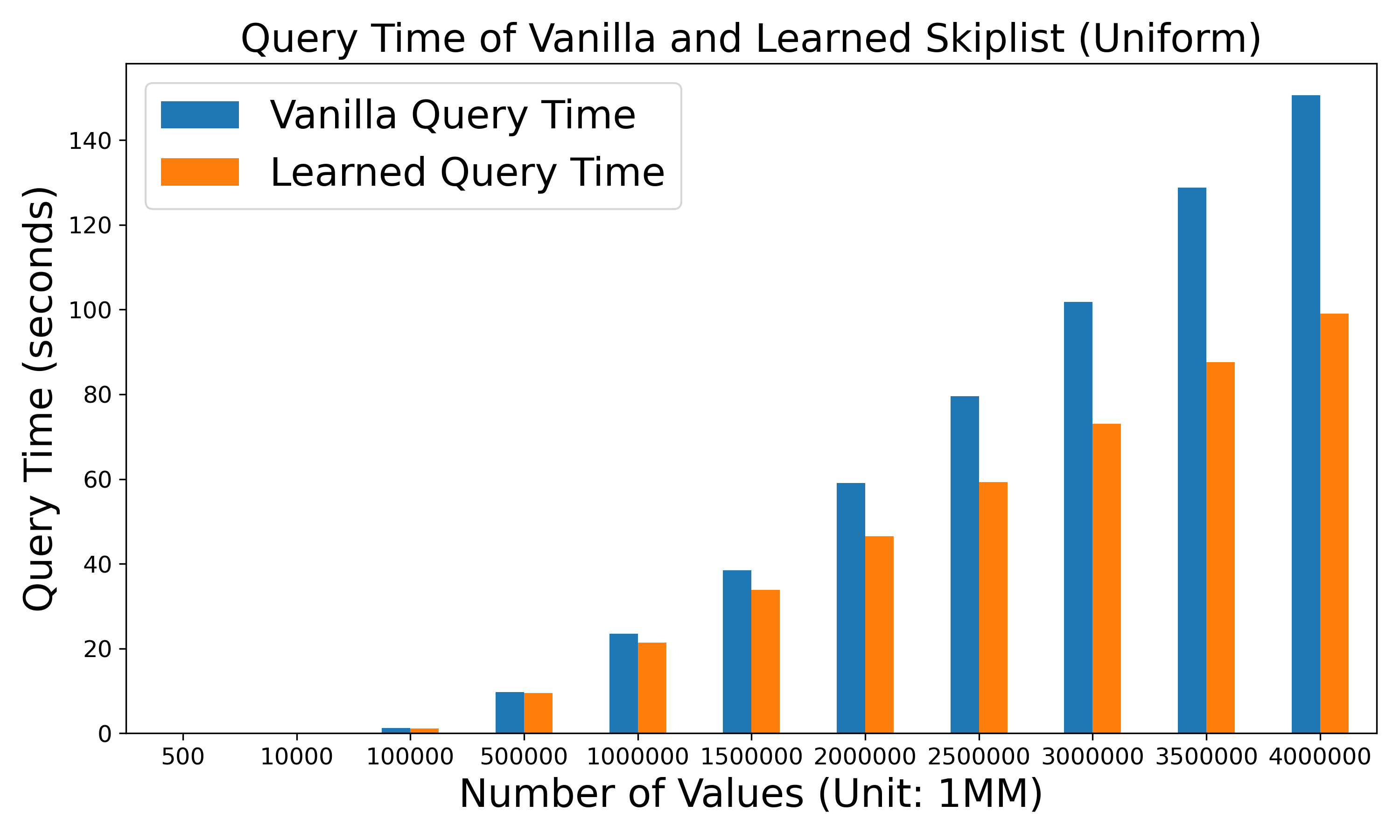}
        \caption{Uniform Distribution}
        \label{fig:sub2a}
    \end{subfigure}
    \begin{subfigure}{0.4\textwidth}
        \includegraphics[width=\linewidth]{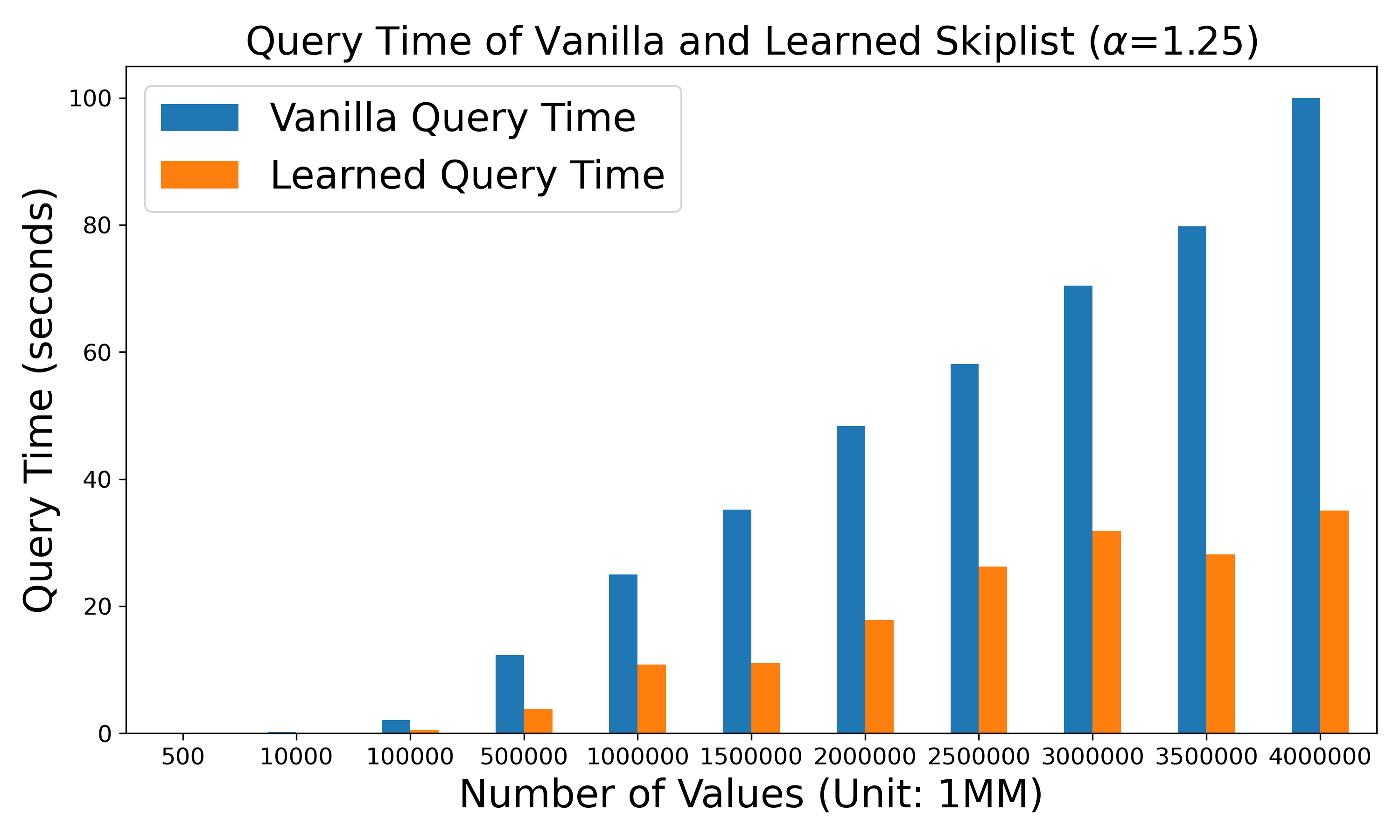}
        \caption{$\alpha=1.25$}
        \label{fig:sub2c}
    \end{subfigure}
    
    \begin{subfigure}{0.4\textwidth}
        \includegraphics[width=\linewidth]{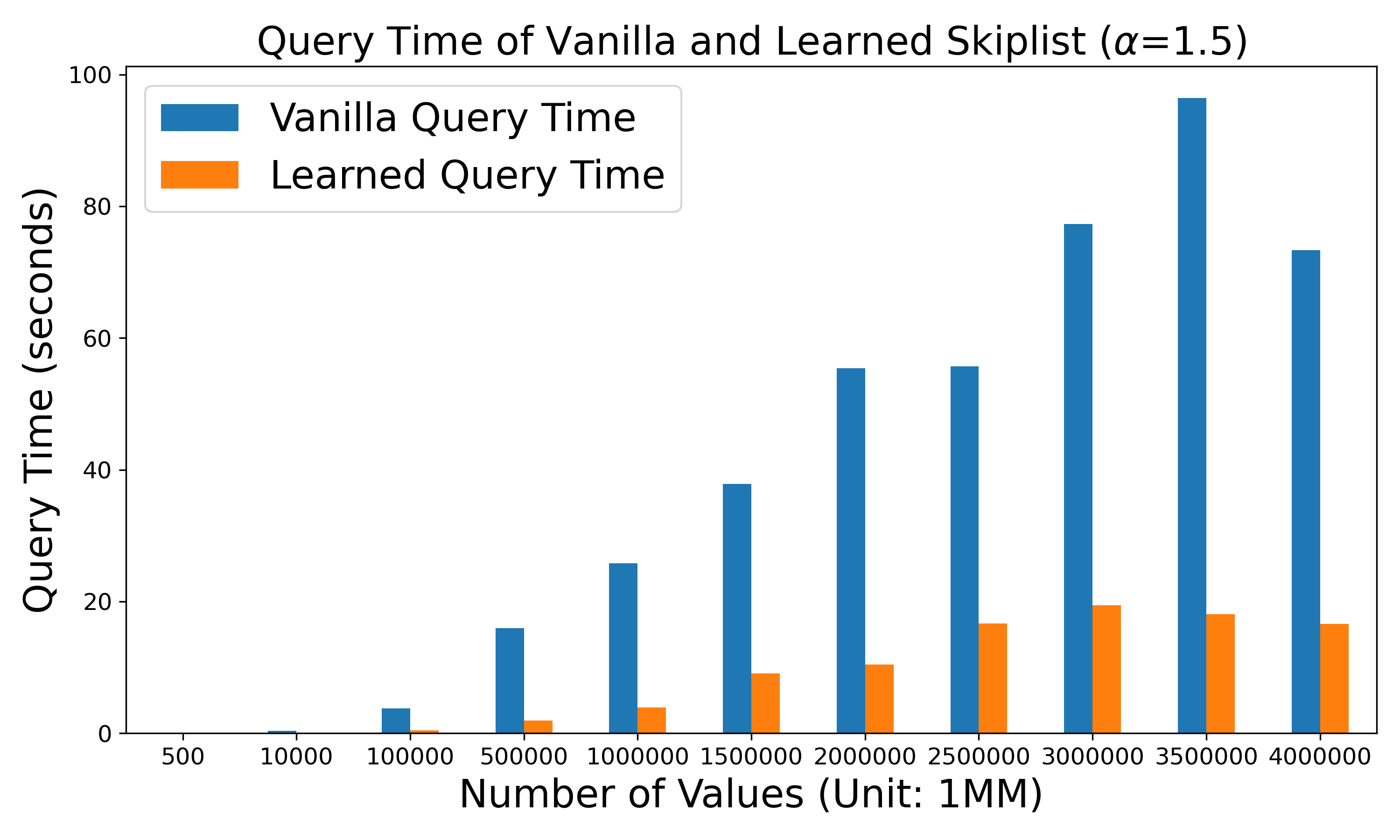}
        \caption{$\alpha=1.5$}
        \label{fig:sub2d}
    \end{subfigure}
    \begin{subfigure}{0.4\textwidth}
        \includegraphics[width=\linewidth]{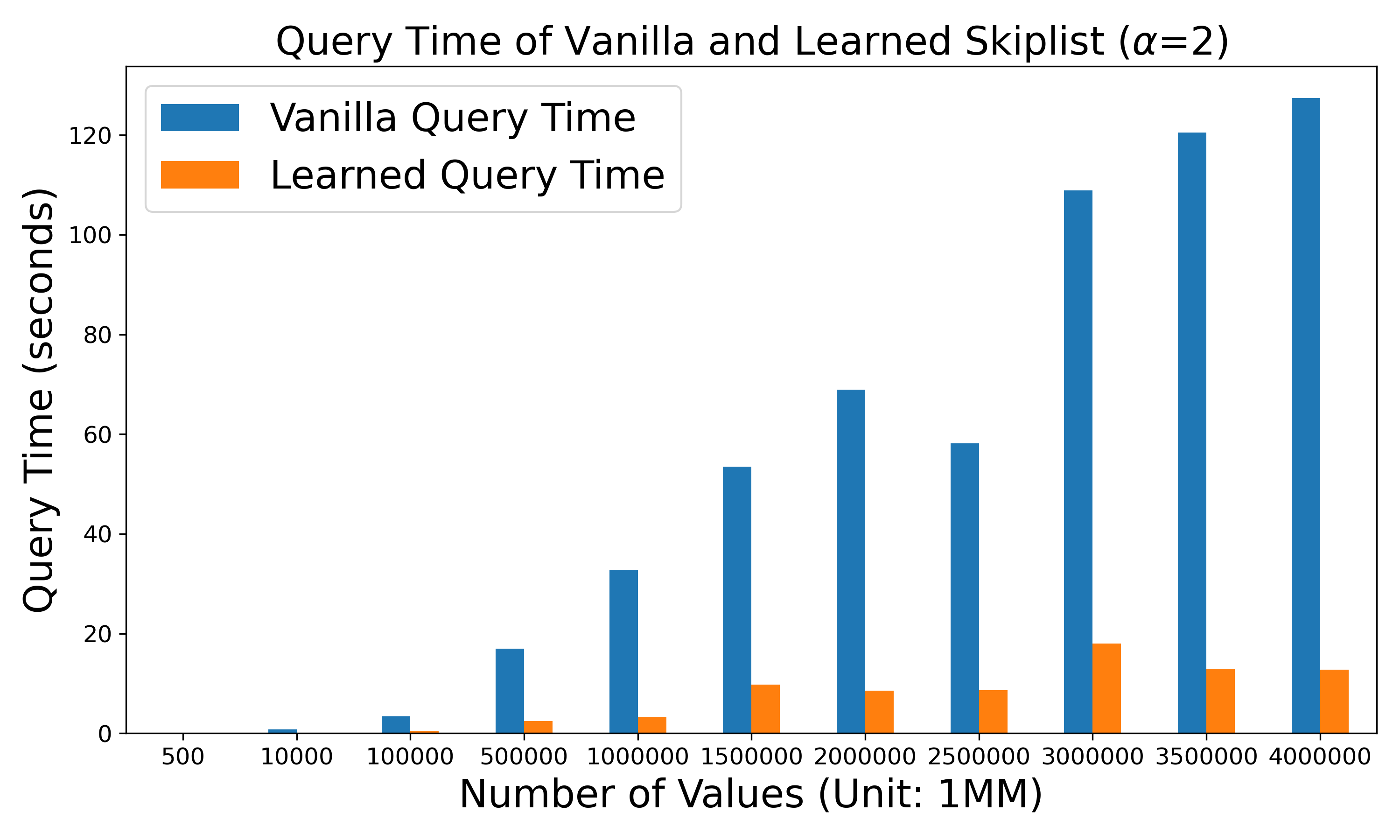}
        \caption{$\alpha=2$}
        \label{fig:sub2f}
    \end{subfigure}

    \caption{Query time for synthetic datasets with a uniform distribution and under different $\alpha$ values of the Zipfian distribution for both classic and augmented skip lists. This figure compares the query time of the classic and augmented skip lists for both uniform distribution and Zipfian distribution at various $\alpha$ values. Similar to insertion, the query time is significantly reduced under different query sizes with the implemented augmentation. The performance enhancement is especially pronounced for the high skewness of the dataset. }
    \label{fig:synthetic_query}
\end{figure}

In addition, we conduct experiments to compare the performance of standard binary search trees and standard skip lists. 
In particular, we generate datasets of size $n\in\{5000, 10000, 15000, 20000, 25000, 30000, 35000, 40000, 45000, 50000,\}$. 
For each fixed value of $n$, each element of the dataset is generated uniformly at random in $[2n]$, i.e., uniformly at random from $\{1,2,\ldots,2n\}$. 
Because the dataset is generated uniformly at random, then learning-augmented data structures will perform similar to oblivious data structures. 
We measure the construction time of the data structures, based on the input dataset. 
Our results demonstrate that as expected, skip lists perform significantly better than balanced binary search trees across all values of $n$, due to the latter's necessity of constantly rebalancing the data structure. 
In fact, skip lists performed almost $4\times$ better than BSTs in some cases, e.g., $n=20000$. 
We illustrate our results in Figure~\ref{fig:construction:times}. 

\begin{figure}[htb]
    \centering
        \includegraphics[scale=0.8]{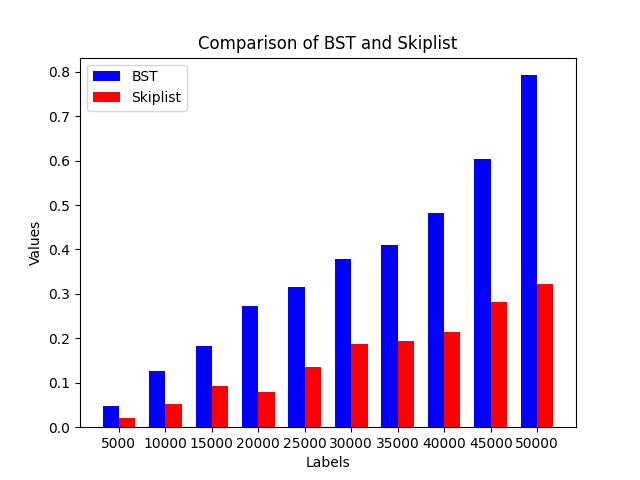}
        \caption{BST vs skip list construction times}
        \label{fig:construction:times}
\end{figure}

\subsubsection{AOL Dataset}
The AOL dataset \cite{AOLdataset} features around 20M web queries collected from 650k users over three months. The distribution of the queries is shown in Figure \ref{fig:aol_dist}. The AOL dataset is a less skewed dataset than CAIDA with an alpha value of 0.75.

Figure \ref{fig:aol_dist} shows the distribution of the AOL queries with an estimated alpha value of 0.75. The AOL dataset resembles more to a slightly skewed uniform distribution with very few highly frequent items, which accounts for a lower improvement as in the case of AOL shown in Figure \ref{fig:aol_insert_query_b}. The total number of queries for items with higher than 1000 frequency accounts for only 5\% of the total number of queries for the AOL datasets. The learning-augmented skip list still outperforms the traditional skip list on this slightly skewed dataset. This result is also in line with the results from the synthetic data shown in Table \ref{tab:speed_up_synthetic} where lower alpha values have resulted in a lower speedup factor. 

\begin{figure}[htb]
    \centering
    \begin{subfigure}{0.4\textwidth}
        \includegraphics[width=\linewidth]{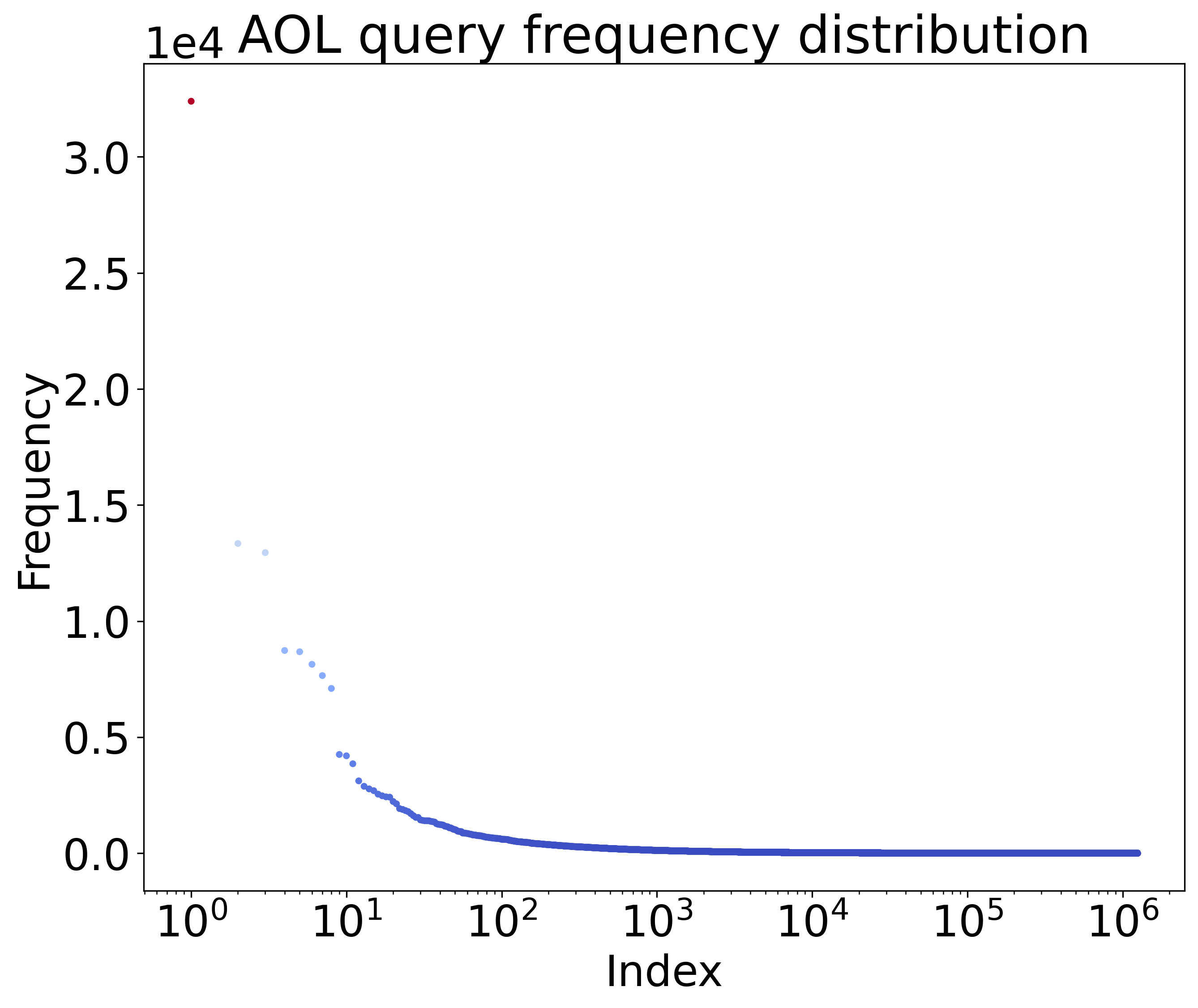}
        \caption{AOL data distribution}
        \label{fig:sub5a}
    \end{subfigure}
    \hspace{0.25cm} 
    \begin{subfigure}{0.4\textwidth}
        \includegraphics[width=\linewidth]{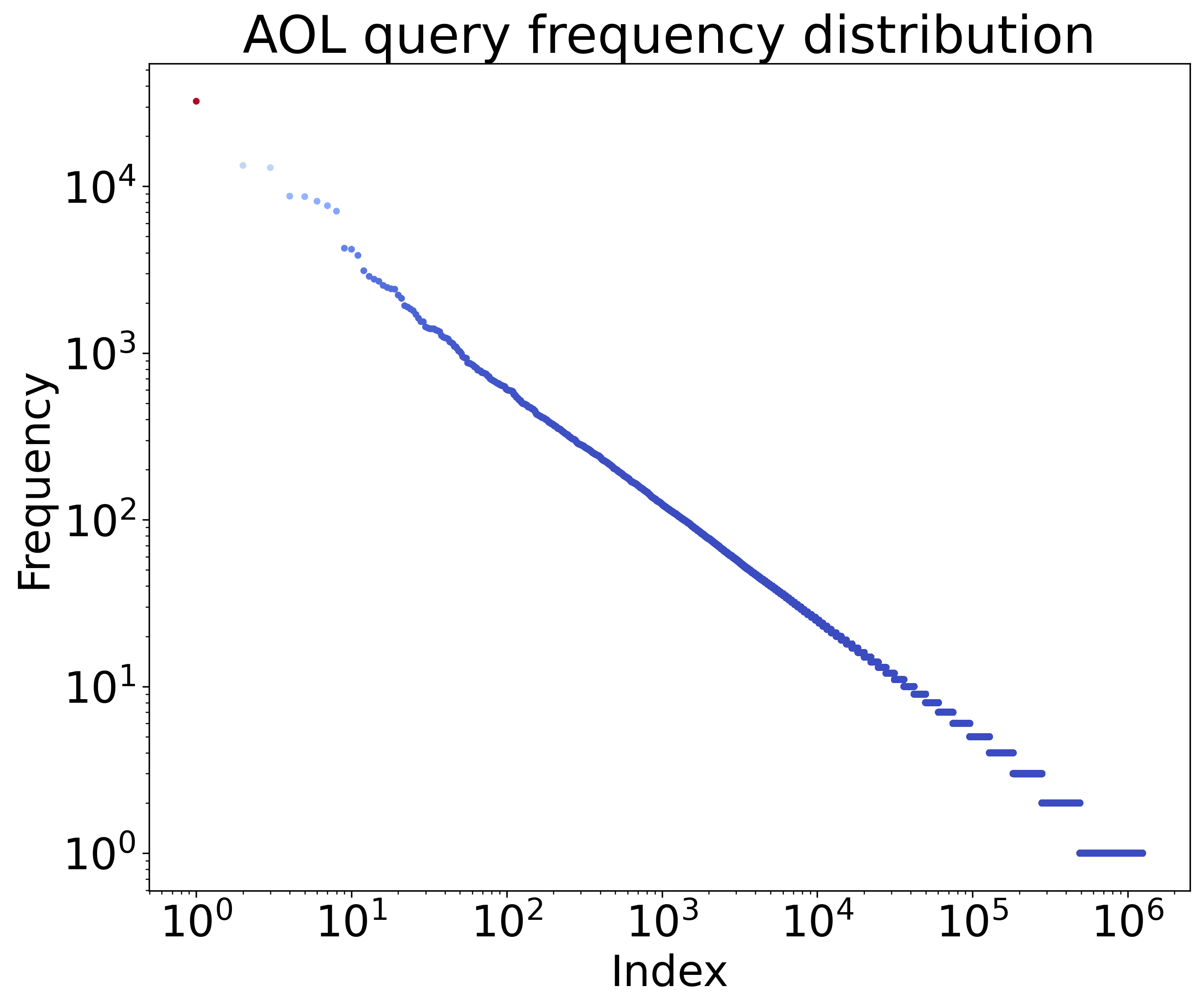}
        \caption{Zipfian fit ($\alpha=0.75$)}
        \label{fig:sub5b}
    \end{subfigure}

    \caption{AOL datasets distribution characterization. This figure illustrates how the $\alpha$ value of 0.75 is obtained for the AOL dataset. The AOL dataset shows a much smaller $\alpha$ value compared to the CAIDA dataset so AOL almost resembles a uniform distribution despite very few high-frequency nodes. This also explains why the performance of the augmented skip list is close to the classic implementation. 
}
    \label{fig:aol_dist}
\end{figure}

\begin{figure}[!htb]
    \centering
    \begin{subfigure}{0.4\textwidth}
        \includegraphics[width=\linewidth]{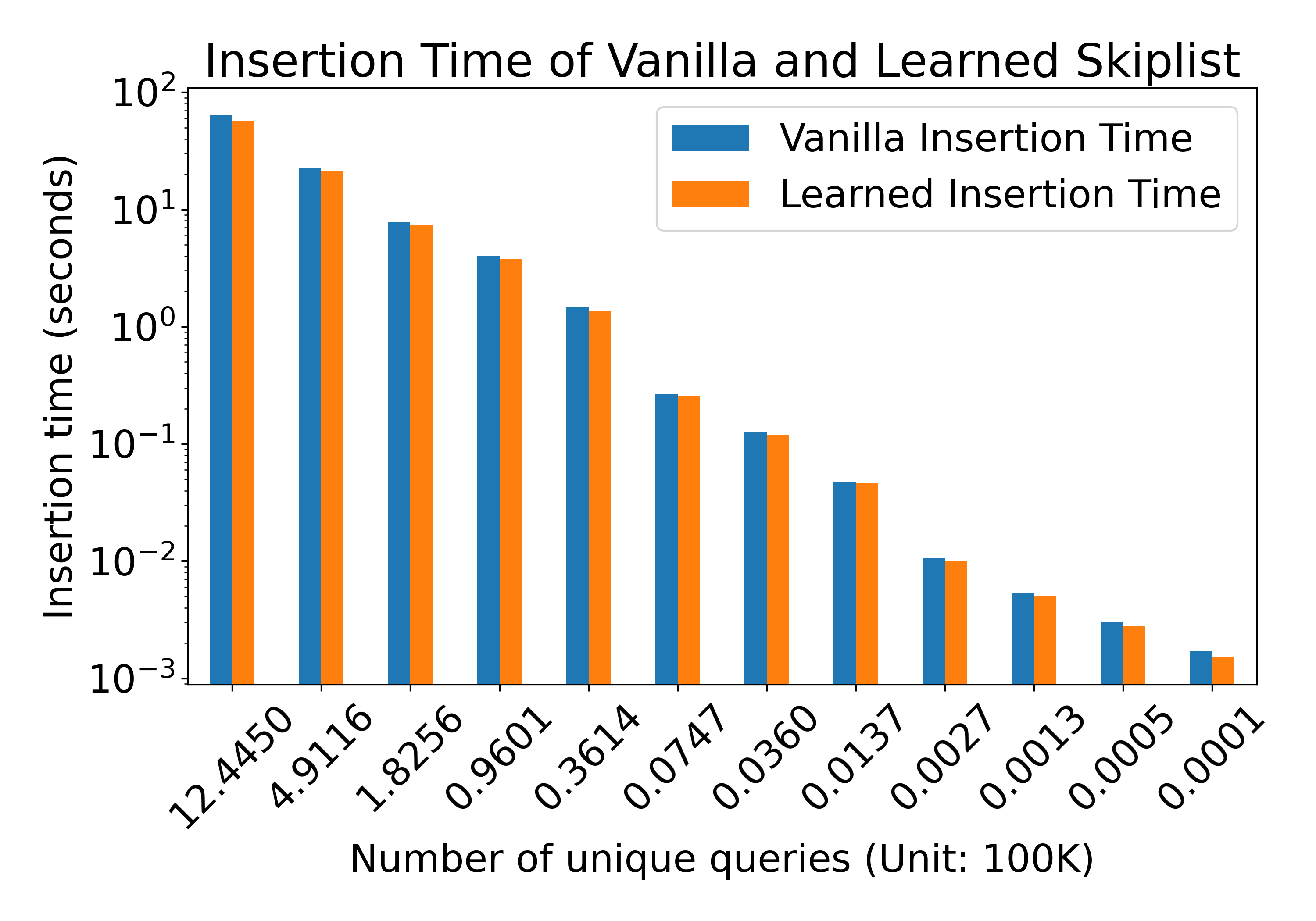}
        \caption{Insert time on AOL}
        \label{fig:sub6a}
    \end{subfigure}
    \hspace{0.5cm} 
    \begin{subfigure}{0.4\textwidth}
        \includegraphics[width=\linewidth]{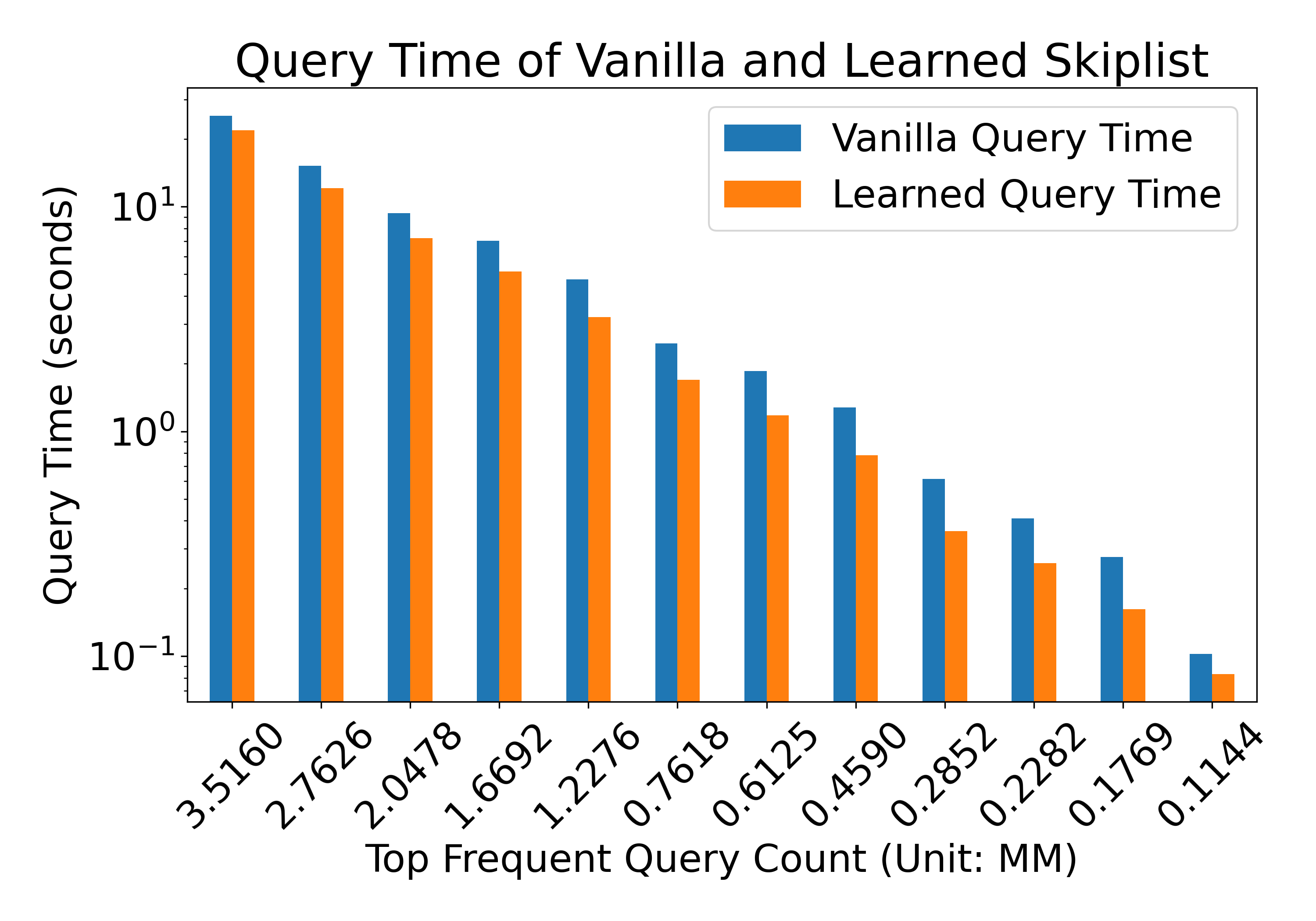}
        \caption{Query time on AOL}
        \label{fig:sub6b}
    \end{subfigure}

    \caption{Insertion and query time on AOL of classic and augmented skip lists}
    \label{fig:aol_insert_query_b}
\end{figure}

\subsection{KD Trees}
For KD Trees, first describe our methodology for evaluating our data structure on synthetic data. 
We then describe our empirical evaluations on real-world datasets.

The computer used for KD tree benchmarking is a desktop machine with an Intel Core i9-14900KF@ 3200MHz, with 64GB RAM, and 2TB of Solid State Drive. The tests were conducted in Windows 10 Enterprise, version 10.0.19045 Build 19045. 

\subsubsection{Synthetic Datasets with Perfect Knowledge}
First, consider a dataset with a Zipfian distribution. In order to construct this dataset, we first select $n$ unique data points in $[\Delta]^d$ uniformly. We then generate Zipfian frequencies, $f_i\approx\frac{1}{(i+b)^a}$, and randomly pair the frequencies to the data points to serve as both the construction and query frequencies. We construct either a traditional KD tree, or our learned KD tree on this dataset. Then, we evaluate the performance of querying by sampling the known datapoints with probabilities given by their Zipfian probabilities.

In Table \ref{Tab:PerfectZipfData}, we use a Zipfian distribution with parameters $a=1$ and $b=2.7$. 
This data demonstrates that our method outperforms a traditional KD tree, given that the Zipfian distribution has the given parameters. In Figure \ref{fig:tradVSlearned}, we vary the Zipfian parameters of our dataset. As expected, our learned KD tree performance increases with how skew the distribution is. Moreover, we find that our method outperforms the traditional KD tree on all tested Zipfian distributions.

\begin{table}[ht]
\centering
\begin{tabular}{llrrrrrr}
\hline
type        &   dim &   const\_time &   avg query(s) &   avg query depth \\
\hline
traditional &          1 &    1.267E-01 &        2.645E-05 &         1.330E+01 \\
learned     &          1 &    2.387E-01 &        2.181E-05 &         1.086E+01 \\
traditional &          2 &    1.266E-01 &        2.709E-05 &         1.341E+01 \\
learned     &          2 &    3.232E-01 &        2.221E-05 &         1.086E+01 \\
traditional &          3 &    1.253E-01 &        2.700E-05 &         1.334E+01 \\
learned     &          3 &    3.981E-01 &        2.264E-05 &         1.097E+01 \\
traditional &          4 &    1.185E-01 &        2.724E-05 &         1.336E+01 \\
learned     &          4 &    4.549E-01 &        2.256E-05 &         1.089E+01 \\
traditional &          5 &    1.300E-01 &        2.743E-05 &         1.336E+01 \\
learned     &          5 &    5.286E-01 &        2.296E-05 &         1.096E+01 \\
traditional &         10 &    1.277E-01 &        2.896E-05 &         1.340E+01 \\
learned     &         10 &    8.642E-01 &        2.403E-05 &         1.091E+01 \\
traditional &         20 &    1.295E-01 &        3.121E-05 &         1.344E+01 \\
learned     &         20 &    1.543E+00 &        2.564E-05 &         1.085E+01 \\
traditional &         40 &    1.361E-01 &        3.558E-05 &         1.340E+01 \\
learned     &         40 &    2.911E+00 &        2.902E-05 &         1.078E+01 \\
\hline
\end{tabular}
\caption{We construct and query KD trees with our method and with a traditional KD tree on synthetic datasets of various dimensionality. We construct our tree on 10k points, and query 1M times. We find that, independent of the data dimensionality, our method produces lower average query depths.}
\label{Tab:PerfectZipfData}
\end{table}

\begin{figure}[!htb]
    \centering \includegraphics[width=\textwidth]{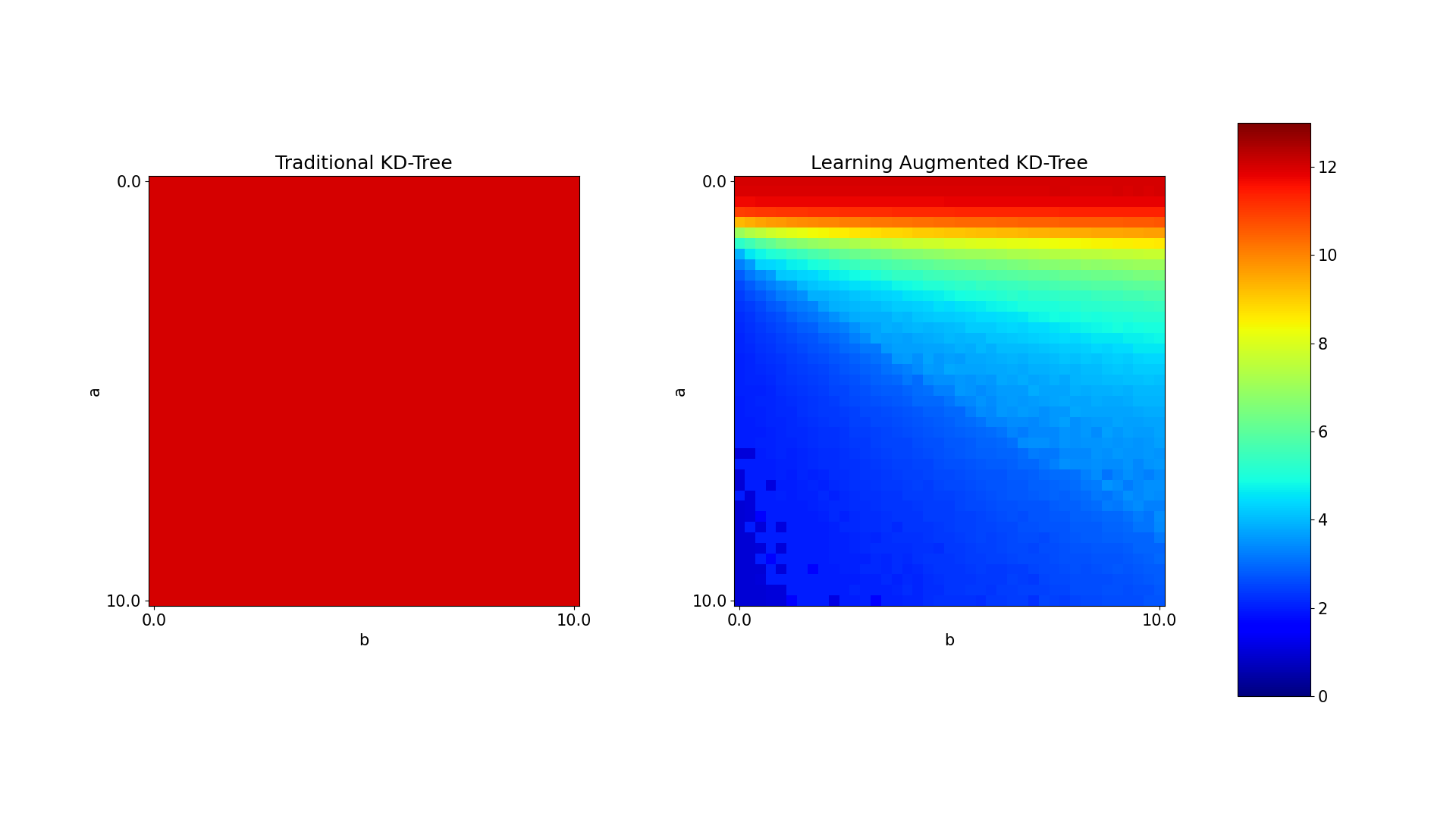}
    \caption{We generate datasets of $2^{12}$ points in 3-dimentional space, with frequencies given by a Zipfian distribution with parameters $a,b$. We then query the tree $2^{14}$ times, with point queries selected by the same Zipfian distribution. We repeat this process 32 times, and report the median of the average query depth across all runs. We find that our method is able to outperform the traditional KD tree method across all tested Zipfian distributions. Moreover, our performances increases as $a$ increases and $b$ decreases, making the Zipfian distribution most skew. Notably, when $a=0$, all points have uniform weights, at which point our method performs equivalently to the traditional KD tree.}
    \label{fig:tradVSlearned}
\end{figure}

\subsubsection{Synthetic Datasets with Noisy Knowledge}

In reality, it is rarely the case that we have perfect knowledge in constructing a model. In order to evaluate the performance of our method on noisy data, we create a synthetic dataset with noisy training information. 

As before, we generate world points and assign them Zipfian weights. When constructing the tree, each weight us updated to be $Mp_i + A$, where $M$ and $A$ are drawn from uniform distributions. This new noisy distribution is normalized in order to form a valid probability distribution. Then, points are queried many times with their ground truth Zipfian probabilities. For the fixed Zipfian distribution with $a=5,b=2$, we plot the effects of different ranges of $M$ and $A$ to demonstrate the effect noise has in Fig. \ref{fig:tradVSlearnedNoisy}. This figure demonstrates that, even with moderate amounts of noise, our method still outperforms a traditional KD tree. 
Moreover, our method still remains on par with the traditional KD tree when significant noise is present, due to our robustness guarantees.

\subsubsection{Real-World Datasets}
In addition to evaluating our method on synthetic data, we also evaluate results on real-world data. 

First, we consider $n$-grams in various languages. 
We test on a pre-processed subset of the Google N-Gram dataset \cite{ProcessedGoogleNgram, google-ngram-viewer-2012}.

In order to evaluate our method, we convert an $n$-gram to a vector in $\mathbb{Z}^n$ with each entry indexing the words in the $n$-gram. We construct the learning-augmented and traditional KD trees, and show the performance of lookup with queries weighted by their ground truth frequency in Table \ref{Tab:NgramPerfectData}. In all cases, we find that the learning-augmented KD tree outperforms the traditional KD tree at average lookup depth.

Additionally, we test our method on a dataset of neuron activity, as provided by \cite{aitchison2014zipf}, which has shown to be Zipfian. This dataset consists of vectors in $\{0,1\}^{30}$, indicating which of 30 cells fire at agiven time. As in their work, we bin in 20ms increments when constructing these vectors. We similarly ignore the time of observations, and build our learning-augmented KD tree with frequencies given by the rate of appearance of vectors. We find that, when querying with probabilities equivalent to the training distribution, a traditional KD tree has an average query depth of 23.7. Using our learning-augmented KD tree, however, we are able to achieve an average query depth of 14.9.

\begin{table}[ht]
\centering
\begin{tabular}{lrr}
\hline
 Dataset                   &   Traditional Avg Query Depth &   Learned Avg Query Depth \\
\hline
 2grams\_chinese\_simplified &                 14.8388 &             12.1144 \\
 2grams\_english-fiction    &                 14.216  &             11.6232 \\
 2grams\_english            &                 14.7253 &             11.6647 \\
 2grams\_french             &                 14.5123 &             11.5999 \\
 2grams\_german             &                 13.9473 &             11.7066 \\
 2grams\_hebrew             &                 13.5322 &             12.0007 \\
 2grams\_italian            &                 14.5231 &             11.8027 \\
 2grams\_russian            &                 14.5769 &             11.8082 \\
 2grams\_spanish            &                 15.1763 &             11.6582 \\
 3grams\_chinese\_simplified &                 12.343  &             11.4211 \\
 3grams\_english-fiction    &                 13.5264 &             11.3183 \\
 3grams\_english            &                 15.002  &             11.3632 \\
 3grams\_french             &                 14.9235 &             11.3529 \\
 3grams\_german             &                 14.129  &             11.4285 \\
 3grams\_hebrew             &                 13.0887 &              9.6663 \\
 3grams\_italian            &                 13.3839 &             11.3616 \\
 3grams\_russian            &                 15.2475 &             11.1415 \\
 3grams\_spanish            &                 15.3855 &             11.3635 \\
 4grams\_chinese\_simplified &                 11.5823 &              9.8661 \\
 4grams\_english-fiction    &                 12.5141 &              9.7589 \\
 4grams\_english            &                 13.4464 &              9.7793 \\
 4grams\_french             &                 12.6383 &              9.835  \\
 4grams\_german             &                 12.3911 &              9.8765 \\
 4grams\_hebrew             &                  9.3198 &              7.4493 \\
 4grams\_italian            &                 11.3456 &              9.6834 \\
 4grams\_russian            &                 13.2406 &              9.6274 \\
 4grams\_spanish            &                 12.9712 &              9.8191 \\
 5grams\_chinese\_simplified &                 11.5373 &              9.8982 \\
 5grams\_english-fiction    &                 12.8027 &              9.8069 \\
 5grams\_english            &                 12.5607 &              9.5967 \\
 5grams\_french             &                 12.402  &              9.835  \\
 5grams\_german             &                 11.6912 &              9.8432 \\
 5grams\_hebrew             &                  7.097  &              6.2143 \\
 5grams\_italian            &                 11.3061 &              9.7771 \\
 5grams\_russian            &                 12.885  &              9.6805 \\
 5grams\_spanish            &                 12.111  &              9.8079 \\
\hline
\end{tabular}
\caption{We construct traditional and learning-augmented KD trees for $n$-grams for various languages, and of various lengths $n$. Our method outperforms a traditional KD tree in all cases.}
\label{Tab:NgramPerfectData}
\end{table}

\end{document}